\providecommand{\U}[1]{\protect\rule{.1in}{.1in}}
\newtheorem{theorem}{Theorem}
\newtheorem{corollary}[theorem]{Corollary}
\newtheorem{definition}[theorem]{Definition}
\newtheorem{lemma}[theorem]{Lemma}
\newtheorem{proposition}[theorem]{Proposition}
\newtheorem{remark}[theorem]{Remark}
\newenvironment{proof}[1][Proof]{\noindent\textbf{#1.} }{\ \rule{0.5em}{0.5em}}
\numberwithin{equation}{section}
\begin{document}

\title{\textbf{Fidelity of recovery, geometric squashed entanglement, and measurement
recoverability}}
\author{Kaushik P. Seshadreesan\thanks{Hearne Institute for Theoretical Physics,
Department of Physics and Astronomy, Louisiana State University, Baton Rouge,
Louisiana 70803, USA}
\and Mark M. Wilde\footnotemark[1] \thanks{Center for Computation and Technology,
Louisiana State University, Baton Rouge, Louisiana 70803, USA}}
\maketitle

\begin{abstract}
This paper defines the fidelity of recovery of a tripartite quantum state on
systems $A$, $B$, and~$C$ as a measure of how well one can recover the full
state on all three systems if system $A$ is lost and a recovery operation is
performed on system $C$ alone. The surprisal of the fidelity of recovery (its
negative logarithm) is an information quantity which obeys nearly all of the
properties of the conditional quantum mutual information $I(A;B|C)$, including
non-negativity, monotonicity with respect to local operations, duality,
invariance with respect to local isometries, a dimension bound, and
continuity. We then define a (pseudo) entanglement measure based on this
quantity, which we call the geometric squashed entanglement. We prove that the
geometric squashed entanglement is a 1-LOCC monotone (i.e., monotone
non-increasing with respect to local operations and classical communication
from Bob to Alice), that it vanishes if and only if the state on which it is
evaluated is unentangled, and that it reduces to the geometric measure of
entanglement if the state is pure. We also show that it is invariant with
respect to local isometries, subadditive, continuous, and normalized on
maximally entangled states. We next define the surprisal of measurement
recoverability, which is an information quantity in the spirit of quantum
discord, characterizing how well one can recover a share of a bipartite state
if it is measured. We prove that this discord-like quantity satisfies several
properties, including non-negativity, faithfulness on classical-quantum
states, invariance with respect to local isometries, a dimension bound, and
normalization on maximally entangled states. This quantity combined with a
recent breakthrough of Fawzi and Renner allows to characterize states with
discord nearly equal to zero as being approximate fixed points of entanglement
breaking channels (equivalently, they are recoverable from the state of a
measuring apparatus). Finally, we discuss a multipartite fidelity of recovery
and several of its properties.

\end{abstract}

\section{Introduction}

The conditional quantum mutual information (CQMI) is a central information
quantity that finds numerous applications in quantum information theory
\cite{DY08,YD09}, the theory of quantum correlations \cite{zurek,CW04}, and
quantum many-body physics \cite{K13thesis,B12}. For a quantum state
$\rho_{ABC}$ shared between three parties, say, Alice, Bob, and Charlie, the
CQMI is defined as
\begin{equation}
I(A;B|C)_{\rho}\equiv H(AC)_{\rho}+H(BC)_{\rho}-H(C)_{\rho}-H(ABC)_{\rho},
\label{eq:CMI-definition}%
\end{equation}
where $H(F)_{\sigma}\equiv-$Tr$\{\sigma_{F}\log\sigma_{F}\}$ is the von
Neumann entropy of a state $\sigma_{F}$ on system $F$ and we unambiguously let
$\rho_{C}\equiv\ $Tr$_{AB}\{\rho_{ABC}\}$ denote the reduced density operator
on system $C$, for example. The CQMI captures the correlations present between
Alice and Bob from the perspective of Charlie in the independent and
identically distributed (i.i.d.)~resource limit, where an asymptotically large
number of copies of the state $\rho_{ABC}$ are shared between the three
parties. It is non-negative \cite{PhysRevLett.30.434,LR73}, non-increasing
with respect to the action of local quantum operations on systems $A$ or $B$,
and obeys a duality relation for a four-party pure state $\psi_{ABCD}$, given
by $I(A;B|C)_{\psi}=I(B;A|D)_{\psi}$. It finds operational meaning as twice
the optimal quantum communication cost in the state redistribution
protocol~\cite{DY08,YD09}. It underlies the squashed entanglement~\cite{CW04},
which is a measure of entanglement that satisfies all of the axioms desired
for such a measure \cite{AF04,KW04,BCY11}, and furthermore underlies the
quantum discord \cite{zurek}, which is a measure of quantum correlations
different from those due to entanglement.

In an attempt to develop a version of the CQMI, which could potentially be
relevant for the \textquotedblleft one-shot\textquotedblright\ or finite
resource regimes, we along with Berta \cite{BSW14} recently proposed R\'{e}nyi
generalizations of the CQMI. We proved that these R\'{e}nyi generalizations of
the CQMI retain many of the properties of the original CQMI in
(\ref{eq:CMI-definition}). While the application of these particular R\'{e}nyi
CQMIs in one-shot state redistribution remains to be studied, (however, see
the recent progress on one-shot state redistribution in \cite{BCT14,DHO14}) we
have used them to define a R\'{e}nyi squashed entanglement and a R\'{e}nyi
quantum discord~\cite{SBW14}, which retain several properties of the
respective, original, von Neumann entropy based quantities.

One contribution of \cite{BSW14} was the conjecture that the proposed
R\'{e}nyi CQMIs are monotone increasing in the R\'{e}nyi parameter, as is
known to be the case for other R\'{e}nyi entropic quantities. That is, for a
tripartite state $\rho_{ABC}$, and for a R\'{e}nyi conditional mutual
information $\widetilde{I}_{\alpha}( A;B|C) _{\rho}$ defined as
\cite[Section~6]{BSW14}%
\begin{equation}
\widetilde{I}_{\alpha}( A;B|C) _{\rho}\equiv
\frac{1}{\alpha-1}\log\left\Vert \rho_{ABC}^{1/2}\rho_{AC}^{(1-\alpha
)/2\alpha}\rho_{C}^{(\alpha-1)/2\alpha}\rho_{BC}^{(1-\alpha)/2\alpha
}\right\Vert _{2\alpha}^{2\alpha},
\end{equation}
\cite[Section 8]{BSW14} conjectured that the following inequality holds for
$0\leq\alpha\leq\beta$:%
\begin{equation}
\widetilde{I}_{\alpha}( A;B|C) _{\rho}\leq\widetilde{I}_{\beta}( A;B|C)
_{\rho}. \label{eq:mono-alpha-2}%
\end{equation}
Proofs were given for this conjectured inequality when the R\'{e}nyi parameter
$\alpha$ is in a neighborhood of one and when $1/\alpha+1/\beta=2$
\cite[Section 8]{BSW14}.

We also pointed out implications of the conjectured inequality for
understanding states with small conditional quantum mutual information
\cite[Section 8]{BSW14}\ (later stressed in \cite{B14sem}). In particular, we
pointed out that the following lower bound on the conditional quantum mutual
information holds as a consequence of the conjectured inequality in
(\ref{eq:mono-alpha-2}) by choosing $\alpha=1/2$ and $\beta=1$:%
\begin{align}
I( A;B|C) _{\rho}  &  \geq-\log F\left(  \rho_{ABC},\mathcal{R}_{C\rightarrow
AC}^{P}\left(  \rho_{BC}\right)  \right) \label{pmap1}\\
&  \geq\frac{1}{4}\left\Vert \rho_{ABC}-\mathcal{R}_{C\rightarrow AC}%
^{P}\left(  \rho_{BC}\right)  \right\Vert _{1}^{2},
\end{align}
where $\mathcal{R}_{C\rightarrow AC}^{P}$ is a quantum channel known as the
Petz recovery map~\cite{Petz1986,Petz1988,Petz03,HJPW04}, defined as%
\begin{equation}
\mathcal{R}_{C\rightarrow AC}^{P}(\cdot)\equiv\rho_{AC}^{1/2}\rho_{C}%
^{-1/2}(\cdot)\rho_{C}^{-1/2}\rho_{AC}^{1/2}.
\end{equation}
The fidelity is a measure of how close two quantum states are and is defined
for positive semidefinite operators $P$ and $Q$ as%
\begin{equation}
F\left(  P,Q\right)  \equiv\left\Vert \sqrt{P}\sqrt{Q}\right\Vert _{1}^{2}.
\end{equation}
Throughout we denote the root fidelity by $\sqrt{F}\left(  P,Q\right)
\equiv\Vert\sqrt{P}\sqrt{Q}\Vert_{1}$. The trace distance bound in
(\ref{pmap1}) was conjectured previously in \cite{K13conj} and a related
conjecture (with a different lower bound) was considered in \cite{Winterconj}.

The conjectured inequality in (\ref{pmap1})\ revealed that (if it is true) it
would be possible to understand tripartite states with small conditional
mutual information in the following sense: \textit{If one loses system }%
$A$\textit{ of a tripartite state }$\rho_{ABC}$\textit{ and is allowed to
perform the Petz recovery map on system }$C$\textit{ alone, then the fidelity
of recovery in doing so will be high.} The converse statement was already
established in \cite[Proposition~35]{BSW14} and independently in
\cite[Eq.~(8)]{FR14}. Indeed, suppose now that a tripartite state $\rho_{ABC}$
has large conditional mutual information. Then if one loses system $A$ and
attempts to recover it by acting on system $C$ alone, then the fidelity of
recovery will not be high no matter what scheme is employed (see
\cite[Proposition~35]{BSW14} for specific parameters). These statements are
already known to be true for a classical system $C$, but the main question is
whether the inequality in (\ref{pmap1}) holds for a quantum system $C$.

\section{Summary of results}

\label{sec:summary} When studying the conjectured inequality in (\ref{pmap1}),
we can observe that a simple lower bound on the RHS\ is in terms of a quantity
that we call the \textit{surprisal of the fidelity of recovery}:%
\begin{align}
-\log F\left(  \rho_{ABC},\mathcal{R}_{C\rightarrow AC}^{P}\left(  \rho
_{BC}\right)  \right)   &  \geq I_{F}( A;B|C) _{\rho}\label{eq:I_F_defintion}%
\\
&  \equiv-\log F( A;B|C) _{\rho},
\end{align}
where the \textit{fidelity of recovery} is defined as%
\begin{equation}
F( A;B|C) _{\rho}\equiv\sup_{\mathcal{R}}F\left(  \rho_{ABC},\mathcal{R}%
_{C\rightarrow AC}\left(  \rho_{BC}\right)  \right)  . \label{eq:FoR}%
\end{equation}
That is, rather than considering the particular Petz recovery map, one could
consider optimizing the fidelity with respect to all such recovery maps. One
of the main objectives of the present paper is to study the fidelity of
recovery in more detail.

\textbf{Note:} After the completion of this work, we learned of the recent
breakthrough result of \cite{FR14}, in which the inequality $I(A;B|C)_{\rho
}\geq-\log F(A;B|C)_{\rho}$ was established for any tripartite state
$\rho_{ABC}\in\mathcal{S}(\mathcal{H}_{A}\otimes\mathcal{H}_{B}\otimes
\mathcal{H}_{C})$. Thus, for states with small conditional mutual information
(near to zero), the fidelity of recovery is high (near to one). Note that our
arXiv posting of the present work (arXiv:1410.1441) appeared one day after the
arXiv posting of \cite{FR14}. Furthermore note that the main result of
\cite{FR14} is now an easy corollary of the more general result in \cite{W15}.

\subsection{Properties of the surprisal of the fidelity of recovery}

Our conclusions for $I_{F}( A;B|C) _{\rho}$ are that it obeys many of the same
properties as the conditional mutual information $I\left(  A;B|C\right)
_{\rho}$:

\begin{enumerate}
\item \textbf{(Non-negativity)} $I_{F}( A;B|C) _{\rho}\geq0$ for any
tripartite quantum state, and for finite-dimensional $\rho_{ABC}$, $I_{F}(
A;B|C) _{\rho}=0$ if and only if $\rho_{ABC}$ is a short quantum Markov chain,
as defined in \cite{HJPW04}. A short quantum Markov chain is a tripartite
state $\rho_{ABC}$ for which $I( A;B|C) _{\rho}=0$, and such a state
necessarily has a particular structure, as elucidated in \cite{HJPW04}.

\item \textbf{(Monotonicity)} $I_{F}( A;B|C) _{\rho}$ is monotone with respect
to quantum operations on systems $A$ or $B$, in the sense that%
\begin{equation}
I_{F}( A;B|C) _{\rho}\geq I_{F}\left(  A^{\prime};B^{\prime}|C\right)
_{\omega},
\end{equation}
where $\omega_{ABC}\equiv\left(  \mathcal{N}_{A\rightarrow A^{\prime}}%
\otimes\mathcal{M}_{B\rightarrow B^{\prime}}\right)  \left(  \rho
_{ABC}\right)  $ and $\mathcal{N}_{A\rightarrow A^{\prime}}$ and
$\mathcal{M}_{B\rightarrow B^{\prime}}$ are quantum channels acting on systems
$A$ and $B$, respectively.

\item \textbf{(Local isometric invariance)} $I_{F}( A;B|C) _{\rho}$ is
invariant with respect to local isometries, in the sense that%
\begin{equation}
I_{F}( A;B|C) _{\rho}=I_{F}\left(  A^{\prime};B^{\prime}|C^{\prime}\right)
_{\sigma},
\end{equation}
where%
\begin{equation}
\sigma_{A^{\prime}B^{\prime}C^{\prime}}\equiv\left(  \mathcal{U}_{A\rightarrow
A^{\prime}}\otimes\mathcal{V}_{B\rightarrow B^{\prime}}\otimes\mathcal{W}%
_{C\rightarrow C^{\prime}}\right)  ( \rho_{ABC})
\end{equation}
and $\mathcal{U}_{A\rightarrow A^{\prime}}$, $\mathcal{V}_{B\rightarrow
B^{\prime}}$, and $\mathcal{W}_{C\rightarrow C^{\prime}}$ are isometric
quantum channels. An isometric channel $\mathcal{U}_{A\rightarrow A^{\prime}}$
has the following action on an operator $X_{A}$:
\begin{equation}
\mathcal{U}_{A\rightarrow A^{\prime}}(X_{A}) = U_{A \to A^{\prime}} X_{A} U_{A
\to A^{\prime}}^{\dag},
\end{equation}
where $U_{A \to A^{\prime}}$ is an isometry, satisfying $U_{A \to A^{\prime}%
}^{\dag}U_{A \to A^{\prime}} = I_{A}$.

\item \textbf{(Duality)}\ For a four-party pure state $\psi_{ABCD}$, the
following duality relation holds%
\begin{equation}
I_{F}( A;B|C) _{\psi}=I_{F}( A;B|D) _{\psi}.
\end{equation}

\item \textbf{(Dimension bound)} The following dimension bound holds%
\begin{equation}
I_{F}( A;B|C) _{\rho}\leq2\log\left\vert A\right\vert ,
\end{equation}
where $\left\vert A\right\vert $ is the dimension of the system $A$. If the
system $A$ is classical, so that we relabel it as $X$, then%
\begin{equation}
I_{F}( X;B|C) _{\rho}\leq\log\left\vert X\right\vert .
\end{equation}
By a classical system $X$, we mean that $\rho_{XBC}$ has the following form:
\begin{equation}
\rho_{XBC} = \sum_{x} p_{X}(x) \vert x \rangle\langle x \vert\otimes\rho
^{x}_{BC},
\end{equation}
for some probability distribution $p_{X}(x)$, orthonormal basis $\{ \vert x
\rangle\}$, and set $\{\rho^{x}_{BC}\}$ of density operators.

\item \textbf{(Continuity)} If two quantum states $\rho_{ABC}$ and
$\sigma_{ABC}$ are close to each other in the sense that $F\left(  \rho
_{ABC},\sigma_{ABC}\right)  \approx1$, then $I_{F}( A;B|C) _{\rho}\approx
I_{F}( A;B|C) _{\sigma}$.

\item \textbf{(Weak chain rule)} The chain rule for conditional mutual
information of a four-party state $\rho_{ABCD}$ is as follows:%
\begin{equation}
I\left(  AC;B|D\right)  _{\rho}=I\left(  A;B|CD\right)  _{\rho}+I\left(
C;B|D\right)  _{\rho}.
\end{equation}
We find something weaker than this for $I_{F}$, which we call the weak chain
rule for $I_{F}$:%
\begin{equation}
I_{F}\left(  AC;B|D\right)  _{\rho}\geq I_{F}\left(  A;B|CD\right)  _{\rho}.
\end{equation}

\end{enumerate}

Let us note here that, by inspecting the definitions, the fidelity of recovery
$F( A;B|C) _{\rho}$ and $I_{F}( A;B|C) _{\rho}$ are clearly not symmetric
under the exchange of the $A$ and $B$ systems, unlike the conditional mutual
information $I( A;B|C) _{\rho}$. Thus we might also refer to $I_{F}( A;B|C)
_{\rho}$ as the conditional information that $B$ has about $A$ from the
perspective of $C$.

\subsection{Geometric squashed entanglement}

Our next contribution is to define a (pseudo) entanglement measure of a
bipartite state that we call the \textit{geometric squashed entanglement}. To
motivate this quantity, recall that the squashed entanglement of a bipartite
state $\rho_{AB}$ is defined as%
\begin{equation}
E^{\operatorname{sq}}( A;B) _{\rho}\equiv
\frac{1}{2}\inf_{\omega_{ABE}}\left\{  I( A;B|E) _{\omega}:\rho_{AB}%
=\operatorname{Tr}_{E}\left\{  \omega_{ABE}\right\}  \right\}  ,
\end{equation}
where the infimum is over all extensions $\omega_{ABE}$ of the state
$\rho_{AB}$ \cite{CW04}. The interpretation of $E^{\operatorname{sq}}\left(
A;B\right)  _{\rho}$\ is that it quantifies the correlations present between
Alice and Bob after a third party (often associated to an environment or
eavesdropper) attempts to \textquotedblleft squash down\textquotedblright%
\ their correlations. In light of the above discussion, we define the
geometric squashed entanglement simply by replacing the conditional mutual
information with $I_{F}$:%
\begin{equation}
E_{F}^{\operatorname{sq}}( A;B) _{\rho}\equiv
\frac{1}{2}\inf_{\omega_{ABE}}\left\{  I_{F}( A;B|E) _{\omega}:\rho
_{AB}=\operatorname{Tr}_{E}\left\{  \omega_{ABE}\right\}  \right\}  .
\end{equation}
We also employ the related quantity throughout the paper:%
\begin{equation}
F^{\operatorname{sq}}( A;B) _{\rho}\equiv\sup_{\omega_{ABE}}\left\{  F( A;B|E)
_{\rho}:\rho_{AB}=\operatorname{Tr}_{E}\left\{  \omega_{ABE}\right\}
\right\}  ,
\end{equation}
with the two of them being related by%
\begin{equation}
E_{F}^{\operatorname{sq}}( A;B) _{\rho}=-\frac{1}{2}\log F^{\operatorname{sq}%
}( A;B) _{\rho}.
\end{equation}

We prove the following results for the geometric squashed entanglement:

\begin{enumerate}
\item \textbf{(1-LOCC Monotone)} The geometric squashed entanglement of
$\rho_{AB}$\ does not increase with respect to local operations and classical
communication from Bob to Alice. That is, the following inequality holds%
\begin{equation}
E_{F}^{\operatorname{sq}}( A;B) _{\rho}\geq E_{F}^{\operatorname{sq}}\left(
A^{\prime};B^{\prime}\right)  _{\omega},
\end{equation}
where $\omega_{AB}\equiv\Lambda_{AB\rightarrow A^{\prime}B^{\prime}}\left(
\rho_{AB}\right)  $ and $\Lambda_{AB\rightarrow A^{\prime}B^{\prime}}$ is a
quantum channel realized by local operations and classical communication from
Bob to Alice. (Due to the asymmetric nature of the fidelity of recovery, we do
not seem to be able to prove that the geometric squashed entanglement is an
LOCC\ monotone.) The geometric squashed entanglement is also convex, i.e.,%
\begin{equation}
\sum_{x}p_{X}( x) E_{F}^{\operatorname{sq}}( A;B) _{\rho^{x}}\geq
E_{F}^{\operatorname{sq}}( A;B) _{\overline{\rho}},
\end{equation}
where%
\begin{equation}
\overline{\rho}_{AB}\equiv\sum_{x}p_{X}( x) \rho_{AB}^{x},
\end{equation}
$p_{X}$ is a probability distribution and $\left\{  \rho_{AB}^{x}\right\}  $
is a set of states.

\item (\textbf{Local isometric invariance}) $E_{F}^{\operatorname{sq}}\left(
A;B\right)  _{\rho}$ is invariant with respect to local isometries, in the
sense that%
\begin{equation}
E_{F}^{\operatorname{sq}}( A;B) _{\rho}=E_{F}^{\operatorname{sq}}(A^{\prime
};B^{\prime})_{\sigma},
\end{equation}
where%
\begin{equation}
\sigma_{A^{\prime}B^{\prime}}\equiv\left(  \mathcal{U}_{A\rightarrow
A^{\prime}}\otimes\mathcal{V}_{B\rightarrow B^{\prime}}\right)  \left(
\rho_{AB}\right)
\end{equation}
and $\mathcal{U}_{A\rightarrow A^{\prime}}$ and $\mathcal{V}_{B\rightarrow
B^{\prime}}$ are isometric quantum channels.

\item \textbf{(Faithfulness)}\ The geometric squashed entanglement of
$\rho_{AB}$\ is equal to zero if and only if $\rho_{AB}$ is a separable
(unentangled) state. In particular, we prove the following bound by appealing
directly to the argument in \cite{Winterconj}:%
\begin{equation}
E_{F}^{\operatorname{sq}}( A;B) _{\rho}\geq\frac{1}{512\left\vert A\right\vert
^{4}}\left\Vert \rho_{AB}-\text{SEP}(A:B)\right\Vert _{1}^{4},
\end{equation}
where the trace distance to separable states is defined by%
\begin{equation}
\left\Vert \rho_{AB}-\text{SEP}(A:B)\right\Vert _{1}\equiv
\inf_{\sigma_{AB}\in\text{SEP}\left(  A:B\right)  }\left\Vert \rho_{AB}%
-\sigma_{AB}\right\Vert _{1}.
\end{equation}

\item \textbf{(Reduction to geometric measure)} The geometric squashed
entanglement of a pure state $\left\vert \phi\right\rangle _{AB}$\ reduces to
the well known geometric measure of entanglement \cite{WG03} (see also
\cite{CAH13}\ and references therein):%
\begin{align}
E_{F}^{\operatorname{sq}}( A;B) _{\psi}  &  =-\frac{1}{2}\log\sup_{\left\vert
\varphi\right\rangle _{A}}\left\langle \phi\right\vert _{AB}\left(
\varphi_{A}\otimes\phi_{B}\right)  \left\vert \phi\right\rangle _{AB}\\
&  =-\log\left\Vert \phi_{A}\right\Vert _{\infty}.
\end{align}
Recall that the geometric measure of $\left\vert \phi\right\rangle _{AB}$\ is
known to be equal to%
\begin{equation}
-\log\sup_{\left\vert \varphi\right\rangle _{A},\left\vert \psi\right\rangle
_{B}}\left\langle \phi\right\vert _{AB}\left(  \varphi_{A}\otimes\psi
_{B}\right)  \left\vert \phi\right\rangle _{AB}=
-\log\left\Vert \phi_{A}\right\Vert _{\infty},
\end{equation}
where $\left\Vert A\right\Vert _{\infty}$ is the infinity norm of an operator
$A$, equal to its largest singular value. (Note that the above quantity is
often referred to as the \textit{logarithmic geometric measure of
entanglement}. Here, for brevity, we simply refer to it as the geometric measure.)

\item \textbf{(Normalization)} The geometric squashed entanglement of a
maximally entangled state $\Phi_{AB}$ is equal to $\log d$, where $d$ is the
Schmidt rank of $\Phi_{AB}$. It is larger than $\log d$ when evaluated for a
private state \cite{HHHO05,HHHO09}\ of $\log d$ private bits.

\item \textbf{(Subadditivity)} The geometric squashed entanglement is
subadditive for tensor-product states, i.e.,%
\begin{equation}
E_{F}^{\operatorname{sq}}\left(  A_{1}A_{2};B_{1}B_{2}\right)  _{\omega}\leq
E_{F}^{\operatorname{sq}}\left(  A_{1};B_{1}\right)  _{\rho}+E_{F}%
^{\operatorname{sq}}\left(  A_{2};B_{2}\right)  _{\sigma},
\end{equation}
where $\omega_{A_{1}B_{1}A_{2}B_{2}}\equiv\rho_{A_{1}B_{1}}\otimes
\sigma_{A_{2}B_{2}}$.

\item \textbf{(Continuity)} If two quantum states $\rho_{AB}$ and $\sigma
_{AB}$ are close in trace distance, then their respective geometric squashed
entanglements are close as well.
\end{enumerate}

\subsection{Surprisal of measurement recoverability}

The quantum discord $D( \overline{A};B) _{\rho}$\ is an information quantity
which characterizes quantum correlations of a bipartite state $\rho_{AB}$, by
quantifying how much correlation is lost through the act of a quantum
measurement \cite{Z00,zurek}\ (we give a full definition later on). By a chain
of reasoning detailed in Section~\ref{sec:FoMR} which begins with the original
definition of quantum discord, we define the surprisal of measurement
recoverability of a bipartite state\ as follows:%
\begin{equation}
D_{F}( \overline{A};B) _{\rho}\equiv-\log\sup_{\mathcal{E}_{A}}F( \rho
_{AB},\mathcal{E}_{A}( \rho_{AB}) ) ,
\end{equation}
where the supremum is over the convex set of entanglement breaking channels
\cite{HSR03}. Since every entanglement breaking channel can be written as a
concatenation of a measurement map followed by a preparation map, $D_{F}(
\overline{A};B) _{\rho}$ characterizes how well one can recover a bipartite
state after performing a quantum measurement on one share of it. Equivalently,
the quantity captures how close $\rho_{AB}$ is to being a fixed point of an
entanglement breaking channel.

We establish several properties of $D_{F}( \overline{A};B) _{\rho}$, which are
analogous to properties known to hold for the quantum discord \cite{KBCPV12}:

\begin{enumerate}
\item \textbf{(Non-negativity)} This follows trivially because the fidelity
between two quantum states is always a real number between zero and one.

\item \textbf{(Local isometric invariance)} $D_{F}\left(  \overline
{A};B\right)  _{\rho}$ is invariant with respect to local isometries, in the
sense that%
\begin{equation}
D_{F}( \overline{A};B) _{\rho}=D_{F}(\overline{A^{\prime}};B^{\prime}%
)_{\sigma},
\end{equation}
where%
\begin{equation}
\sigma_{A^{\prime}B^{\prime}}\equiv\left(  \mathcal{U}_{A\rightarrow
A^{\prime}}\otimes\mathcal{V}_{B\rightarrow B^{\prime}}\right)  \left(
\rho_{AB}\right)
\end{equation}
and $\mathcal{U}_{A\rightarrow A^{\prime}}$ and $\mathcal{V}_{B\rightarrow
B^{\prime}}$ are isometric quantum channels.

\item \textbf{(Faithfulness)}\ $D_{F}( \overline{A};B) _{\rho}$ is equal to
zero if and only if $\rho_{AB}$ is a classical-quantum state (classical on
system $A$).

\item \textbf{(Dimension bound)} $D_{F}( \overline{A};B) _{\rho}\leq
\log\left\vert A\right\vert $.

\item \textbf{(Normalization)} $D_{F}( \overline{A};B) _{\Phi}$ for a
maximally entangled state $\Phi_{AB}$ is equal to $\log d$, where $d$ is the
Schmidt rank of $\Phi_{AB}$.

\item \textbf{(Monotonicity)} The surprisal of measurement recoverability is
monotone with respect to quantum operations on the unmeasured system, i.e.,%
\begin{equation}
D_{F}( \overline{A};B) _{\rho}\geq D_{F}\left(  \overline{A};B^{\prime
}\right)  _{\sigma},
\end{equation}
where $\sigma_{AB^{\prime}}\equiv\mathcal{N}_{B\rightarrow B^{\prime}}\left(
\rho_{AB}\right)  $.

\item \textbf{(Continuity)} If two quantum states $\rho_{AB}$ and $\sigma
_{AB}$ are close in trace distance, then the respective $D_{F}\left(
\overline{A};B\right)  $ quantities\ are close as well.
\end{enumerate}

Finally, we use $D_{F}( \overline{A};B) _{\rho}$ and a recent result of Fawzi
and Renner \cite{FR14}\ to establish that the quantum discord of $\rho_{AB}$
is nearly equal to zero if and only if $\rho_{AB}$ is an approximate fixed
point of entanglement breaking channel (i.e., if it is possible to nearly
recover $\rho_{AB}$ after performing a measurement on the system $A$). We then
argue that several discord-like measures appearing throughout the literature
\cite{KBCPV12}\ have a more natural physical grounding if they are based on
how far a given bipartite state is from being a fixed point of an entanglement
breaking channel.

\section{Preliminaries}

\label{sec:notation}\textbf{Norms, states, extensions, channels, and
measurements.} Let $\mathcal{B}\left(  \mathcal{H}\right)  $ denote the
algebra of bounded linear operators acting on a Hilbert space $\mathcal{H}$.
We restrict ourselves to finite-dimensional Hilbert spaces throughout this
paper. For $\alpha\geq1$, we define the $\alpha$-norm of an operator $X$ as
\begin{equation}
\left\Vert X\right\Vert _{\alpha}\equiv\operatorname{Tr}\{(\sqrt{X^{\dag}%
X})^{\alpha}\}^{1/\alpha} . \label{eq:a-norm}%
\end{equation}
Let $\mathcal{B}\left(  \mathcal{H}\right)  _{+}$ denote the subset of
positive semi-definite operators. We also write $X\geq0$ if $X\in
\mathcal{B}\left(  \mathcal{H}\right)  _{+}$. An\ operator $\rho$ is in the
set $\mathcal{S}\left(  \mathcal{H}\right)  $\ of density operators (or
states) if $\rho\in\mathcal{B}\left(  \mathcal{H}\right)  _{+}$ and
Tr$\left\{  \rho\right\}  =1$. The tensor product of two Hilbert spaces
$\mathcal{H}_{A}$ and $\mathcal{H}_{B}$ is denoted by $\mathcal{H}_{A}%
\otimes\mathcal{H}_{B}$ or $\mathcal{H}_{AB}$.\ Given a multipartite density
operator $\rho_{AB}\in\mathcal{S}(\mathcal{H}_{A}\otimes\mathcal{H}_{B})$, we
unambiguously write $\rho_{A}=\ $Tr$_{B}\left\{  \rho_{AB}\right\}  $ for the
reduced density operator on system $A$. We use $\rho_{AB}$, $\sigma_{AB}$,
$\tau_{AB}$, $\omega_{AB}$, etc.~to denote general density operators in
$\mathcal{S}(\mathcal{H}_{A}\otimes\mathcal{H}_{B})$, while $\psi_{AB}$,
$\varphi_{AB}$, $\phi_{AB}$, etc.~denote rank-one density operators (pure
states) in $\mathcal{S}(\mathcal{H}_{A}\otimes\mathcal{H}_{B})$ (with it
implicit, clear from the context, and the above convention implying that
$\psi_{A}$, $\varphi_{A}$, $\phi_{A}$ are mixed if $\psi_{AB}$, $\varphi_{AB}%
$, $\phi_{AB}$ are pure and entangled).

We also say that pure-state vectors $|\psi\rangle$ in $\mathcal{H}$ are
states. Any bipartite pure state $|\psi\rangle_{AB}$ in $\mathcal{H}_{AB}$ is
written in Schmidt form as
\begin{equation}
\left\vert \psi\right\rangle _{AB}\equiv\sum_{i=0}^{d-1}\sqrt{\lambda_{i}%
}\left\vert i\right\rangle _{A}\left\vert i\right\rangle _{B},
\end{equation}
where $\{|i\rangle_{A}\}$ and $\{|i\rangle_{B}\}$ form orthonormal bases in
$\mathcal{H}_{A}$ and $\mathcal{H}_{B}$, respectively, $\lambda_{i}>0$ for all
$i$, $\sum_{i=0}^{d-1}\lambda_{i}=1$, and $d$ is the Schmidt rank of the
state. By a maximally entangled state, we mean a bipartite pure state of the
form
\begin{equation}
\left\vert \Phi\right\rangle _{AB}\equiv\frac{1}{\sqrt{d}}\sum_{i=0}%
^{d-1}\left\vert i\right\rangle _{A}\left\vert i\right\rangle _{B}.
\end{equation}

A state $\gamma_{ABA^{\prime}B^{\prime}}$\ is a private state
\cite{HHHO05,HHHO09} if Alice and Bob can extract a secret key from it by
performing local von Neumann measurements on the $A$ and $B$ systems of
$\gamma_{ABA^{\prime}B^{\prime}}$, such that the resulting secret key is
product with any purifying system of $\gamma_{ABA^{\prime}B^{\prime}}$. The
systems $A^{\prime}$ and $B^{\prime}$ are known as \textquotedblleft shield
systems\textquotedblright\ because they aid in keeping the key secure from any
eavesdropper possessing the purifying system. Interestingly, a private state
of $\log d$ private bits can be written in the following form
\cite{HHHO05,HHHO09}:%
\begin{equation}
\gamma_{ABA^{\prime}B^{\prime}}=U_{ABA^{\prime}B^{\prime}}\left(  \Phi
_{AB}\otimes\rho_{A^{\prime}B^{\prime}}\right)  U_{ABA^{\prime}B^{\prime}%
}^{\dag}, \label{eq:private-1}%
\end{equation}
where%
\begin{equation}
U_{ABA^{\prime}B^{\prime}}=\sum_{i,j}\left\vert i\right\rangle \left\langle
i\right\vert _{A}\otimes\left\vert j\right\rangle \left\langle j\right\vert
_{B}\otimes U_{A^{\prime}B^{\prime}}^{ij}.
\end{equation}
The unitaries can be chosen such that $U_{A^{\prime}B^{\prime}}^{ij}%
=V_{A^{\prime}B^{\prime}}^{j}$ or $U_{A^{\prime}B^{\prime}}^{ij}=V_{A^{\prime
}B^{\prime}}^{i}$. This implies that the unitary $U_{ABA^{\prime}B^{\prime}}$
can be implemented either as%
\begin{equation}
U_{ABA^{\prime}B^{\prime}}=\sum_{i}\left\vert i\right\rangle \left\langle
i\right\vert _{A}\otimes I_{B}\otimes V_{A^{\prime}B^{\prime}}^{i}%
\end{equation}
or%
\begin{equation}
U_{ABA^{\prime}B^{\prime}}=I_{A}\otimes\sum_{i}\left\vert i\right\rangle
\left\langle i\right\vert _{B}\otimes V_{A^{\prime}B^{\prime}}^{i}.
\label{eq:private-last}%
\end{equation}

The trace distance between two quantum states $\rho,\sigma\in\mathcal{S}%
\left(  \mathcal{H}\right)  $\ is equal to $\left\Vert \rho-\sigma\right\Vert
_{1}$. It has a direct operational interpretation in terms of the
distinguishability of these states. That is, if $\rho$ or $\sigma$ is prepared
with equal probability and the task is to distinguish them via some quantum
measurement, then the optimal success probability in doing so is equal to
$\left(  1+\left\Vert \rho-\sigma\right\Vert _{1}/2\right)  /2$.

A linear map $\mathcal{N}_{A\rightarrow B}:\mathcal{B}\left(  \mathcal{H}%
_{A}\right)  \rightarrow\mathcal{B}\left(  \mathcal{H}_{B}\right)  $\ is
positive if $\mathcal{N}_{A\rightarrow B}\left(  \sigma_{A}\right)
\in\mathcal{B}\left(  \mathcal{H}_{B}\right)  _{+}$ whenever $\sigma_{A}%
\in\mathcal{B}\left(  \mathcal{H}_{A}\right)  _{+}$. Let id$_{A}$ denote the
identity map acting on a system $A$. A linear map $\mathcal{N}_{A\rightarrow
B}$ is completely positive if the map id$_{R}\otimes\mathcal{N}_{A\rightarrow
B}$ is positive for a reference system $R$ of arbitrary size. A linear map
$\mathcal{N}_{A\rightarrow B}$ is trace-preserving if Tr$\left\{
\mathcal{N}_{A\rightarrow B}\left(  \tau_{A}\right)  \right\}  =\ $Tr$\left\{
\tau_{A}\right\}  $ for all input operators $\tau_{A}\in\mathcal{B}\left(
\mathcal{H}_{A}\right)  $. If a linear map is completely positive and
trace-preserving (CPTP), we say that it is a quantum channel or quantum
operation. An extension of a state $\rho_{A}\in\mathcal{S}\left(
\mathcal{H}_{A}\right)  $ is some state $\Omega_{RA}\in\mathcal{S}\left(
\mathcal{H}_{R}\otimes\mathcal{H}_{A}\right)  $ such that $\mathrm{Tr}%
_{R}\left\{  \Omega_{RA}\right\}  =\rho_{A}$. An isometric extension
$U_{A\rightarrow BE}^{\mathcal{N}}$ of a channel $\mathcal{N}_{A\rightarrow
B}$ acting on a state $\rho_{A}\in\mathcal{S}(\mathcal{H}_{A})$ is a linear
map that satisfies the following:
\begin{align}
\mathrm{Tr}_{E}\left\{  U_{A\rightarrow BE}^{\mathcal{N}}\rho_{A}%
(U_{A\rightarrow BE}^{\mathcal{N}})^{\dag}\right\}   &  =\mathcal{N}%
_{A\rightarrow B}\left(  \rho_{A}\right)  ,\\
U_{\mathcal{N}}^{\dagger}U_{\mathcal{N}}  &  =I_{A},\\
U_{\mathcal{N}}U_{\mathcal{N}}^{\dagger}  &  =\Pi_{BE},
\end{align}
where $\Pi_{BE}$ is a projection onto a subspace of the Hilbert space
$\mathcal{H}_{B}\otimes\mathcal{H}_{E}$.

\section{Fidelity of recovery}

\label{sec:fidelity-of-recovery}In this section, we formally define the
fidelity of recovery for a tripartite state $\rho_{ABC}$, and we prove that it
possesses various properties, demonstrating that the quantity $I_{F}\left(
A;B|C\right)  _{\rho}$ defined\ in (\ref{eq:I_F_defintion}) is similar to the
conditional mutual information.

\begin{definition}
[Fidelity of recovery]\label{def:FoR}Let $\rho_{ABC}$ be a tripartite state.
The fidelity of recovery for $\rho_{ABC}$ with respect to system $A$ is
defined as follows:%
\begin{equation}
F( A;B|C) _{\rho}\equiv\sup_{\mathcal{R}_{C\rightarrow AC}}F\left(  \rho
_{ABC},\mathcal{R}_{C\rightarrow AC}\left(  \rho_{BC}\right)  \right)  .
\end{equation}
This quantity characterizes how well one can recover the full state on systems
$ABC$ from system $C$ alone if system $A$ is lost.
\end{definition}

\begin{proposition}
[Non-negativity]Let $\rho_{ABC}$ be a tripartite state. Then $I_{F}\left(
A;B|C\right)  _{\rho}\geq0$, and for finite-dimensional $\rho_{ABC}$, $I_{F}(
A;B|C) _{\rho}=0$ if and only if $\rho_{ABC}$ is a short quantum Markov chain,
as defined in \cite{HJPW04}.
\end{proposition}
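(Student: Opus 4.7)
The plan is to split the proposition into the two asserted parts and handle each with a short argument.

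For non-negativity, I would simply observe that for any two density operators $P$ and $Q$, the fidelity satisfies $F(P,Q)\in[0,1]$, with equality to one if and only if $P=Q$. Since $\mathcal{R}_{C\to AC}(\rho_{BC})$ is a density operator for every channel $\mathcal{R}_{C\to AC}$, the supremum in Definition~\ref{def:FoR} is bounded above by $1$, so $F(A;B|C)_\rho\leq 1$, and therefore $I_F(A;B|C)_\rho=-\log F(A;B|C)_\rho\geq 0$.

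For the equality condition, I would first note that in finite dimensions the set of CPTP maps $\mathcal{R}_{C\to AC}$ is compact (e.g., via their Choi representations), and the map $\mathcal{R}\mapsto F(\rho_{ABC},\mathcal{R}(\rho_{BC}))$ is continuous, so the supremum is attained. Hence $I_F(A;B|C)_\rho=0$ iff there exists a channel $\mathcal{R}_{C\to AC}^\star$ with $F(\rho_{ABC},\mathcal{R}_{C\to AC}^\star(\rho_{BC}))=1$, which by the sharp case of the fidelity is equivalent to
\begin{equation}
\mathcal{R}_{C\to AC}^\star(\rho_{BC})=\rho_{ABC}.
\end{equation}

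Finally, I would invoke the characterization from \cite{HJPW04}: a tripartite state $\rho_{ABC}$ admits a recovery channel $\mathcal{R}_{C\to AC}$ satisfying the above identity if and only if $\rho_{ABC}$ is a short quantum Markov chain (equivalently $I(A;B|C)_\rho=0$), in which case in fact the Petz recovery map does the job. Chaining these equivalences yields the desired if-and-only-if statement. The main conceptual step is the appeal to \cite{HJPW04} to translate the existence of a perfect recovery into the Markov chain structure; the remaining ingredients (compactness, continuity, sharpness of the fidelity bound) are standard.
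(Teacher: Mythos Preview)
Your proposal is correct and follows essentially the same approach as the paper: both use $F\leq 1$ for non-negativity, then compactness of channels plus continuity of fidelity to attain the supremum, then the equality case of fidelity, and finally the Petz/Hayden--Jozsa--Petz--Winter characterization of perfect recoverability. The only minor difference is that the paper separates the two directions of the ``if and only if'' and explicitly invokes Petz's result \cite{Petz1988} for the step ``existence of \emph{some} perfect recovery implies the Petz map recovers perfectly,'' whereas you bundle the full equivalence into a single appeal to \cite{HJPW04}; this is a citation nuance rather than a mathematical difference.
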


\begin{proof}
The inequality $I_{F}( A;B|C) _{\rho}\geq0$ is a consequence of the fidelity
always being less than or equal to one. Suppose that $\rho_{ABC}$ is a short
quantum Markov chain as defined in \cite{HJPW04}. As discussed in that paper,
this is equivalent to the equality%
\begin{equation}
\rho_{ABC}=\mathcal{R}_{C\rightarrow AC}^{P}\left(  \rho_{BC}\right)  ,
\end{equation}
where $\mathcal{R}_{C\rightarrow AC}^{P}$ is the Petz recovery channel. So
this implies that%
\begin{equation}
F\left(  \rho_{ABC},\mathcal{R}_{C\rightarrow AC}^{P}\left(  \rho_{BC}\right)
\right)  =1,
\end{equation}
which in turn implies that $F( A;B|C) _{\rho}=1$ and hence $I_{F}( A;B|C)
_{\rho}=0$. Now suppose that $I_{F}\left(  A;B|C\right)  _{\rho}=0$. This
implies that%
\begin{equation}
\sup_{\mathcal{R}_{C\rightarrow AC}}F\left(  \rho_{ABC},\mathcal{R}%
_{C\rightarrow AC}\left(  \rho_{BC}\right)  \right)  =1.
\end{equation}
Due to the finite-dimensional assumption, the space of channels over which we
are optimizing is compact. Furthermore, the fidelity is continuous in its
arguments. This is sufficient for us to conclude that the supremum is achieved
and that there exists a channel $\mathcal{R}_{C\rightarrow AC}$ for which
$F\left(  \rho_{ABC},\mathcal{R}_{C\rightarrow AC}\left(  \rho_{BC}\right)
\right)  =1$, implying that%
\begin{equation}
\rho_{ABC}=\mathcal{R}_{C\rightarrow AC}\left(  \rho_{BC}\right)  .
\end{equation}
From a result of Petz \cite{Petz1988}, this implies that the Petz recovery
channel recovers $\rho_{ABC}$ perfectly, i.e.,%
\begin{equation}
\rho_{ABC}=\mathcal{R}_{C\rightarrow AC}^{P}\left(  \rho_{BC}\right)  ,
\end{equation}
and this is equivalent to $\rho_{ABC}$ being a short quantum Markov chain
\cite{HJPW04}.
\end{proof}

\begin{proposition}
[Monotonicity]\label{prop:FoR-mono-local-ops}The fidelity of recovery is
monotone with respect to local operations on systems $A$ and $B$, in the sense
that%
\begin{equation}
F( A;B|C) _{\rho}\leq F\left(  A^{\prime};B^{\prime}|C\right)  _{\tau},
\end{equation}
where $\tau_{A^{\prime}B^{\prime}C}\equiv\left(  \mathcal{N}_{A\rightarrow
A^{\prime}}\otimes\mathcal{M}_{B\rightarrow B^{\prime}}\right)  \left(
\rho_{ABC}\right)  $. The above inequality is equivalent to%
\begin{equation}
I_{F}( A;B|C) _{\rho}\geq I_{F}\left(  A^{\prime};B^{\prime}|C\right)  _{\tau
}.
\end{equation}

\end{proposition}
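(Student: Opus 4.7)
The plan is to exploit the two standard ingredients: monotonicity of the fidelity under quantum channels, and the fact that channels acting on disjoint tensor factors commute. I would take an arbitrary recovery map $\mathcal{R}_{C\to AC}$, apply the local channel $\mathcal{N}_{A\to A'}\otimes\mathcal{M}_{B\to B'}$ to both arguments of the fidelity, and then observe that the resulting right-hand state is itself the image of $\tau_{B'C}$ under a valid recovery channel, so that the supremum defining $F(A';B'|C)_\tau$ dominates.

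More concretely, first I would recall that for any quantum channel $\Phi$ and positive operators $P,Q$ we have $F(\Phi(P),\Phi(Q))\geq F(P,Q)$. Applying this with $\Phi=\mathcal{N}_{A\to A'}\otimes\mathcal{M}_{B\to B'}\otimes\mathrm{id}_{C}$ gives
\begin{equation}
F\bigl(\rho_{ABC},\mathcal{R}_{C\to AC}(\rho_{BC})\bigr)
\leq F\Bigl((\mathcal{N}\otimes\mathcal{M})(\rho_{ABC}),\,(\mathcal{N}\otimes\mathcal{M})\bigl(\mathcal{R}_{C\to AC}(\rho_{BC})\bigr)\Bigr).
\end{equation}
The first argument on the right is by definition $\tau_{A'B'C}$. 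For the second, since $\mathcal{M}_{B\to B'}$ and $\mathcal{R}_{C\to AC}$ act on disjoint systems $B$ and $C$, they commute, and $\mathcal{N}_{A\to A'}$ acts on the output system $A$ of the recovery. Hence the second argument equals $\mathcal{R}'_{C\to A'C}(\tau_{B'C})$, where $\mathcal{R}'_{C\to A'C}\equiv \mathcal{N}_{A\to A'}\circ\mathcal{R}_{C\to AC}$ is manifestly a CPTP map from $C$ to $A'C$.

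Therefore, for every recovery $\mathcal{R}_{C\to AC}$ we have exhibited a recovery $\mathcal{R}'_{C\to A'C}$ such that
\begin{equation}
F\bigl(\rho_{ABC},\mathcal{R}_{C\to AC}(\rho_{BC})\bigr)
\leq F\bigl(\tau_{A'B'C},\mathcal{R}'_{C\to A'C}(\tau_{B'C})\bigr)
\leq F(A';B'|C)_{\tau}.
\end{equation}
Taking the supremum over $\mathcal{R}_{C\to AC}$ on the left yields $F(A;B|C)_{\rho}\leq F(A';B'|C)_{\tau}$. The equivalent statement for $I_{F}$ follows immediately by taking negative logarithms, since $-\log$ is monotone decreasing.

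I do not anticipate a genuine obstacle here: the only point to check carefully is that $\mathcal{M}_{B\to B'}$ really does pass freely through the recovery channel, which is a routine consequence of their acting on disjoint factors, and that $\mathcal{N}_{A\to A'}\circ\mathcal{R}_{C\to AC}$ lies in the class over which the supremum for $F(A';B'|C)_{\tau}$ is taken — which it does, since composition of CPTP maps is CPTP with the correct input/output systems.
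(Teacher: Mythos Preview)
Your proof is correct and follows essentially the same approach as the paper: apply monotonicity of the fidelity under the local channel $\mathcal{N}_{A\to A'}\otimes\mathcal{M}_{B\to B'}$, commute $\mathcal{M}_{B\to B'}$ past the recovery map, recognize $\mathcal{N}_{A\to A'}\circ\mathcal{R}_{C\to AC}$ as a valid recovery map from $C$ to $A'C$, and take the supremum. The paper's proof is line-for-line the same argument.
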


\begin{proof}
For any recovery map $\mathcal{R}_{C\rightarrow AC}$, we have
that
\begin{align}
& F\left(  \rho_{ABC},\mathcal{R}_{C\rightarrow AC}\left(  \rho_{BC}\right)
\right)  \nonumber \\
& \leq F\left(  \left(  \mathcal{N}_{A\rightarrow A^{\prime}}\otimes
\mathcal{M}_{B\rightarrow B^{\prime}}\right)  (  \rho_{ABC})
,\left(  \mathcal{N}_{A\rightarrow A^{\prime}}\otimes\mathcal{M}_{B\rightarrow
B^{\prime}}\right)  \left(  \mathcal{R}_{C\rightarrow AC}\left(  \rho
_{BC}\right)  \right)  \right)  \\
&  =F\left(  \left(  \mathcal{N}_{A\rightarrow A^{\prime}}\otimes
\mathcal{M}_{B\rightarrow B^{\prime}}\right)  (  \rho_{ABC})
,\left(  \mathcal{N}_{A\rightarrow A^{\prime}}\circ\mathcal{R}_{C\rightarrow
AC}\right)  \left(  \mathcal{M}_{B\rightarrow B^{\prime}}\left(  \rho
_{BC}\right)  \right)  \right)  \\
&  \leq\sup_{\mathcal{R}_{C\rightarrow A^{\prime}C}}F\left(  \left(
\mathcal{N}_{A\rightarrow A^{\prime}}\otimes\mathcal{M}_{B\rightarrow
B^{\prime}}\right)  (  \rho_{ABC})  ,\mathcal{R}_{C\rightarrow
A^{\prime}C}\left(  \mathcal{M}_{B\rightarrow B^{\prime}}\left(  \rho
_{BC}\right)  \right)  \right)  \\
&  =F\left(  A^{\prime};B^{\prime}|C\right)  _{\left(  \mathcal{N\otimes
M}\right)  \left(  \rho\right)  },
\end{align}
where the first inequality is due to monotonicity of the
fidelity with respect to quantum operations. Since the chain of inequalities
holds for all $\mathcal{R}_{C\rightarrow AC}$, it follows that%
\begin{align}
F( A;B|C) _{\rho}  &  =\sup_{\mathcal{R}_{C\rightarrow AC}}F\left(  \rho
_{ABC},\mathcal{R}_{C\rightarrow AC}\left(  \rho_{BC}\right)  \right)
\label{eq:ops-on-A}\\
&  \leq F\left(  A^{\prime};B^{\prime}|C\right)  _{\left(  \mathcal{N\otimes
M}\right)  \left(  \rho\right)  }.
\end{align}

\end{proof}

\begin{remark}
The physical interpretation of the above monotonicity with respect to local
operations is as follows:\ for a tripartite state $\rho_{ABC}$, suppose that
system $A$ is lost. Then it is easier to recover the state on systems $ABC$
from $C$ alone if there is local noise applied to systems $A$ or $B$ or both,
before system $A$ is lost (and thus before attempting the recovery).
\end{remark}

\begin{proposition}
[Local isometric invariance]\label{prop:FoR-local-iso}Let $\rho_{ABC}$ be a
tripartite quantum state and let%
\begin{equation}
\sigma_{A^{\prime}B^{\prime}C^{\prime}}\equiv\left(  \mathcal{U}_{A\rightarrow
A^{\prime}}\otimes\mathcal{V}_{B\rightarrow B^{\prime}}\otimes\mathcal{W}%
_{C\rightarrow C^{\prime}}\right)  ( \rho_{ABC}) ,
\end{equation}
where $\mathcal{U}_{A\rightarrow A^{\prime}}$, $\mathcal{V}_{B\rightarrow
B^{\prime}}$, and $\mathcal{W}_{C\rightarrow C^{\prime}}$ are isometric
quantum channels. Then%
\begin{align}
F( A;B|C) _{\rho}  &  =F\left(  A^{\prime};B^{\prime}|C^{\prime}\right)
_{\sigma},\\
I_{F}( A;B|C) _{\rho}  &  =I_{F}\left(  A^{\prime};B^{\prime}|C^{\prime
}\right)  _{\sigma}.
\end{align}

\end{proposition}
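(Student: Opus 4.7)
The plan is to prove that $F(A;B|C)_\rho = F(A';B'|C')_\sigma$ (from which the $I_F$ identity follows immediately via $I_F = -\log F$) by exhibiting, in each direction, a construction that converts any candidate recovery channel for one state into an equally good one for the other. The two tools I will use are the isometric invariance of the fidelity, $F(\mathcal{V}(X),\mathcal{V}(Y)) = F(X,Y)$ for any isometric channel $\mathcal{V}$, together with its monotonicity under general CPTP maps. I will also use the elementary fact that every isometric channel $\mathcal{U}_{A\to A'}$ with isometry $U$ admits a CPTP left-inverse $\mathcal{U}^{-1}_{A'\to A}$ satisfying $\mathcal{U}^{-1}\circ\mathcal{U} = \operatorname{id}_A$; a concrete choice is $\mathcal{U}^{-1}(X)\equiv U^\dag X U + \operatorname{Tr}\{(I_{A'} - UU^\dag)X\}\omega_A$ for any fixed state $\omega_A$.

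For the inequality $F(A;B|C)_\rho \leq F(A';B'|C')_\sigma$, I would take an arbitrary recovery channel $\mathcal{R}_{C\to AC}$ and define
\begin{equation}
\mathcal{R}'_{C'\to A'C'} \equiv (\mathcal{U}_{A\to A'} \otimes \mathcal{W}_{C\to C'}) \circ \mathcal{R}_{C\to AC} \circ \mathcal{W}^{-1}_{C'\to C}.
\end{equation}
Since $\mathcal{R}_{C\to AC}$ and $\mathcal{V}_{B\to B'}$ act on disjoint systems, a direct calculation shows $\mathcal{R}'_{C'\to A'C'}(\sigma_{B'C'}) = (\mathcal{U}\otimes \mathcal{V}\otimes\mathcal{W})(\mathcal{R}_{C\to AC}(\rho_{BC}))$, and isometric invariance of the fidelity then gives
\begin{equation}
F(\rho_{ABC},\mathcal{R}(\rho_{BC})) = F(\sigma_{A'B'C'},\mathcal{R}'(\sigma_{B'C'})) \leq F(A';B'|C')_\sigma.
\end{equation}
Taking the supremum over $\mathcal{R}$ yields the claimed inequality.

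For the reverse inequality, given any $\mathcal{R}'_{C'\to A'C'}$, I would define
\begin{equation}
\mathcal{R}_{C\to AC} \equiv (\mathcal{U}^{-1}_{A'\to A}\otimes \mathcal{W}^{-1}_{C'\to C}) \circ \mathcal{R}'_{C'\to A'C'} \circ \mathcal{W}_{C\to C'}
\end{equation}
and check that the CPTP map $\mathcal{U}^{-1}\otimes \mathcal{V}^{-1}\otimes \mathcal{W}^{-1}$ sends the pair $(\sigma_{A'B'C'},\mathcal{R}'(\sigma_{B'C'}))$ to $(\rho_{ABC},\mathcal{R}(\rho_{BC}))$. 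Data processing for the fidelity then gives $F(\sigma_{A'B'C'},\mathcal{R}'(\sigma_{B'C'})) \leq F(\rho_{ABC},\mathcal{R}(\rho_{BC})) \leq F(A;B|C)_\rho$, and the supremum over $\mathcal{R}'$ completes the argument. The only place that requires any care is verifying that the CPTP left-inverses compose correctly with the isometric channels on the states that actually appear, but this is automatic from $\mathcal{U}^{-1}\circ \mathcal{U} = \operatorname{id}_A$ and similarly for $\mathcal{V},\mathcal{W}$, which hold as maps regardless of the input.
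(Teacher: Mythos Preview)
Your proof is correct and follows essentially the same approach as the paper: both directions construct CPTP left-inverses of the isometric channels (the paper calls them $\mathcal{T}^{\mathcal{U}}$, $\mathcal{T}^{\mathcal{V}}$, $\mathcal{T}^{\mathcal{W}}$ rather than $\mathcal{U}^{-1}$, etc.), use isometric invariance of the fidelity for one inequality, and monotonicity (data processing) for the other, with the new recovery maps built exactly as you write them.
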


\begin{proof}
We prove the statement for fidelity of recovery. We first need to define some
CPTP\ maps that invert the isometric channels $\mathcal{U}_{A\rightarrow
A^{\prime}}$, $\mathcal{V}_{B\rightarrow B^{\prime}}$, and $\mathcal{W}%
_{C\rightarrow C^{\prime}}$, given that $\mathcal{U}_{A\rightarrow A^{\prime}%
}^{\dag}$, $\mathcal{V}_{B\rightarrow B^{\prime}}^{\dag}$, and $\mathcal{W}%
_{C\rightarrow C^{\prime}}^{\dag}$ are not necessarily quantum channels. So we
define the CPTP\ linear map $\mathcal{T}_{A^{\prime}\rightarrow A}%
^{\mathcal{U}}$ as follows:%
\begin{equation}
\mathcal{T}_{A^{\prime}\rightarrow A}^{\mathcal{U}}\left(  \omega_{A^{\prime}%
}\right)  \equiv\mathcal{U}_{A\rightarrow A^{\prime}}^{\dag}\left(
\omega_{A^{\prime}}\right) \label{eq:T-maps}
+\text{Tr}\left\{  \left(  \text{id}_{A^{\prime}}-\mathcal{U}_{A\rightarrow
A^{\prime}}^{\dag}\right)  \left(  \omega_{A^{\prime}}\right)  \right\}
\tau_{A},
\end{equation}
where $\tau_{A}$ is some state on system $A$. We define the maps
$\mathcal{T}_{B^{\prime}\rightarrow B}^{\mathcal{V}}$ and $\mathcal{T}%
_{C^{\prime}\rightarrow C}^{\mathcal{W}}$ similarly. All three maps have the
property that%
\begin{align}
\mathcal{T}_{A^{\prime}\rightarrow A}^{\mathcal{U}}\circ\mathcal{U}%
_{A\rightarrow A^{\prime}}  &  =\text{id}_{A},\label{eq:invert-isometry-A}\\
\mathcal{T}_{B^{\prime}\rightarrow B}^{\mathcal{V}}\circ\mathcal{V}%
_{B\rightarrow B^{\prime}}  &  =\text{id}_{B},\label{eq:invert-isometry-B}\\
\mathcal{T}_{C^{\prime}\rightarrow C}^{\mathcal{W}}\circ\mathcal{W}%
_{C\rightarrow C^{\prime}}  &  =\text{id}_{C}. \label{eq:invert-isometry-C}%
\end{align}
Let $\mathcal{R}_{C\rightarrow AC}$ be an arbitrary recovery map. Then

\begin{align}
&  F\left(  \rho_{ABC},\mathcal{R}_{C\rightarrow AC}\left(  \rho_{BC}\right)
\right)  \nonumber\\
&  =F\left(  \left(  \mathcal{U}_{A\rightarrow A^{\prime}}\otimes
\mathcal{V}_{B\rightarrow B^{\prime}}\otimes\mathcal{W}_{C\rightarrow
C^{\prime}}\right)  (  \rho_{ABC})  ,\left(  \mathcal{U}%
_{A\rightarrow A^{\prime}}\otimes\mathcal{V}_{B\rightarrow B^{\prime}}%
\otimes\mathcal{W}_{C\rightarrow C^{\prime}}\right)  \left(  \mathcal{R}%
_{C\rightarrow AC}\left(  \rho_{BC}\right)  \right)  \right)  \\
&  =F\left(  \sigma_{A^{\prime}B^{\prime}C^{\prime}},\left(  \mathcal{U}%
_{A\rightarrow A^{\prime}}\otimes\mathcal{W}_{C\rightarrow C^{\prime}}\right)
\left(  \mathcal{R}_{C\rightarrow AC}\left(  \mathcal{V}_{B\rightarrow
B^{\prime}}\left(  \rho_{BC}\right)  \right)  \right)  \right)  \\
&  =F\left(  \sigma_{A^{\prime}B^{\prime}C^{\prime}},\left(  \mathcal{U}%
_{A\rightarrow A^{\prime}}\otimes\mathcal{W}_{C\rightarrow C^{\prime}}\right)
\left(  \mathcal{R}_{C\rightarrow AC}\left(  \mathcal{T}_{C^{\prime
}\rightarrow C}^{\mathcal{W}}\left(  \mathcal{V}_{B\rightarrow B^{\prime}%
}\otimes\mathcal{W}_{C\rightarrow C^{\prime}}\right)  \left(  \rho
_{BC}\right)  \right)  \right)  \right)  \\
&  \leq\sup_{\mathcal{R}_{C^{\prime}\rightarrow A^{\prime}C^{\prime}}}F\left(
\sigma_{A^{\prime}B^{\prime}C^{\prime}},\mathcal{R}_{C^{\prime}\rightarrow
A^{\prime}C^{\prime}}\left(  \left(  \mathcal{V}_{B\rightarrow B^{\prime}%
}\otimes\mathcal{W}_{C\rightarrow C^{\prime}}\right)  \left(  \rho
_{BC}\right)  \right)  \right)  \\
&  =F\left(  A^{\prime};B^{\prime}|C^{\prime}\right)  _{\sigma}.
\end{align}

The first equality follows from invariance of fidelity with respect to
isometries. The second equality follows because $\mathcal{R}_{C\rightarrow
AC}$ and $\mathcal{V}_{B\rightarrow B^{\prime}}$ commute. The third equality
follows from (\ref{eq:invert-isometry-C}). The inequality follows because%
\begin{equation}
\left(  \mathcal{U}_{A\rightarrow A^{\prime}}\otimes\mathcal{W}_{C\rightarrow
C^{\prime}}\right)  \circ\mathcal{R}_{C\rightarrow AC}\circ\mathcal{T}%
_{C^{\prime}\rightarrow C}^{\mathcal{W}}%
\end{equation}
is a particular CPTP\ recovery map from $C^{\prime}$ to $A^{\prime}C^{\prime}%
$. The last equality is from the definition of fidelity of recovery. Given
that the inequality%
\begin{equation}
F\left(  \rho_{ABC},\mathcal{R}_{C\rightarrow AC}\left(  \rho_{BC}\right)
\right)  \leq F\left(  A^{\prime};B^{\prime}|C^{\prime}\right)  _{\sigma}%
\end{equation}
holds for an arbitrary recovery map $\mathcal{R}_{C\rightarrow AC}$, we can
conclude that $F( A;B|C) _{\rho}\leq F\left(  A^{\prime};B^{\prime}|C^{\prime
}\right)  _{\sigma}. $

For the other inequality, let $\mathcal{R}_{C^{\prime}\rightarrow A^{\prime
}C^{\prime}}$ be an arbitrary recovery map. Then
\begin{align}
&  F\left(  \sigma_{A^{\prime}B^{\prime}C^{\prime}},\mathcal{R}_{C^{\prime
}\rightarrow A^{\prime}C^{\prime}}\left(  \sigma_{B^{\prime}C^{\prime}%
}\right)  \right) \nonumber\\
&  \leq F\left(  \left(  \mathcal{T}_{A^{\prime}\rightarrow A}^{\mathcal{U}%
}\otimes\mathcal{T}_{B^{\prime}\rightarrow B}^{\mathcal{V}}\otimes
\mathcal{T}_{C^{\prime}\rightarrow C}^{\mathcal{W}}\right)  \left(
\sigma_{A^{\prime}B^{\prime}C^{\prime}}\right)  ,\left(  \mathcal{T}%
_{A^{\prime}\rightarrow A}^{\mathcal{U}}\otimes\mathcal{T}_{B^{\prime
}\rightarrow B}^{\mathcal{V}}\otimes\mathcal{T}_{C^{\prime}\rightarrow
C}^{\mathcal{W}}\right)  \left(  \mathcal{R}_{C^{\prime}\rightarrow A^{\prime
}C^{\prime}}\left(  \sigma_{B^{\prime}C^{\prime}}\right)  \right)  \right) \\
&  =F\left(  \rho_{ABC},\left(  \mathcal{T}_{A^{\prime}\rightarrow
A}^{\mathcal{U}}\otimes\mathcal{T}_{C^{\prime}\rightarrow C}^{\mathcal{W}%
}\right)  \left(  \mathcal{R}_{C^{\prime}\rightarrow A^{\prime}C^{\prime}%
}\left(  \mathcal{T}_{B^{\prime}\rightarrow B}^{\mathcal{V}}\left(
\sigma_{B^{\prime}C^{\prime}}\right)  \right)  \right)  \right) \\
&  =F\left(  \rho_{ABC},\left(  \mathcal{T}_{A^{\prime}\rightarrow
A}^{\mathcal{U}}\otimes\mathcal{T}_{C^{\prime}\rightarrow C}^{\mathcal{W}%
}\right)  \left(  \mathcal{R}_{C^{\prime}\rightarrow A^{\prime}C^{\prime}%
}\left(  \left(  \mathcal{T}_{B^{\prime}\rightarrow B}^{\mathcal{V}}%
\circ\mathcal{V}_{B\rightarrow B^{\prime}}\otimes\mathcal{W}_{C\rightarrow
C^{\prime}}\right)  \left(  \rho_{BC}\right)  \right)  \right)  \right) \\
&  =F\left(  \rho_{ABC},\left(  \mathcal{T}_{A^{\prime}\rightarrow
A}^{\mathcal{U}}\otimes\mathcal{T}_{C^{\prime}\rightarrow C}^{\mathcal{W}%
}\right)  \left(  \mathcal{R}_{C^{\prime}\rightarrow A^{\prime}C^{\prime}%
}\left(  \mathcal{W}_{C\rightarrow C^{\prime}}\left(  \rho_{BC}\right)
\right)  \right)  \right) \\
&  \leq\sup_{\mathcal{R}_{C\rightarrow AC}}F\left(  \rho_{ABC},\mathcal{R}%
_{C\rightarrow AC}\left(  \rho_{BC}\right)  \right) \\
&  =F(  A;B|C)  _{\rho}.
\end{align}

The first inequality is from monotonicity of the fidelity with respect to
quantum channels. The first equality is a consequence of
(\ref{eq:invert-isometry-A})-(\ref{eq:invert-isometry-C}). The second equality
is from the definition of $\sigma_{B^{\prime}C^{\prime}}$. The third equality
follows from (\ref{eq:invert-isometry-C}). The last inequality follows because
$\left(  \mathcal{T}_{A^{\prime}\rightarrow A}^{\mathcal{U}}\otimes
\mathcal{T}_{C^{\prime}\rightarrow C}^{\mathcal{W}}\right)  \circ
\mathcal{R}_{C^{\prime}\rightarrow A^{\prime}C^{\prime}}\circ\mathcal{W}%
_{C\rightarrow C^{\prime}}$ is a particular recovery map from $C$ to $AC$.
Given that the inequality
\begin{equation}
F\left(  \sigma_{A^{\prime}B^{\prime}C^{\prime}%
},\mathcal{R}_{C^{\prime}\rightarrow A^{\prime}C^{\prime}}\left(
\sigma_{B^{\prime}C^{\prime}}\right)  \right)  \leq F( A;B|C) _{\rho}
\end{equation} holds
for an arbitrary recovery map $\mathcal{R}_{C^{\prime}\rightarrow A^{\prime
}C^{\prime}}$, we can conclude that $F\left(  A^{\prime};B^{\prime}|C^{\prime
}\right)  _{\sigma}\leq F\left(  A;B|C\right)  _{\rho}. $
\end{proof}

\begin{remark}
The only property of the fidelity used to prove
Propositions~\ref{prop:FoR-mono-local-ops} and \ref{prop:FoR-local-iso} is
that it is monotone with respect to quantum operations. This suggests that we
can construct a fidelity-of-recovery-like measure from any \textquotedblleft
generalized divergence\textquotedblright\ (a function that is monotone with
respect to quantum operations).
\end{remark}

\begin{figure*}[ptb]
\center
\includegraphics[scale=1.0]{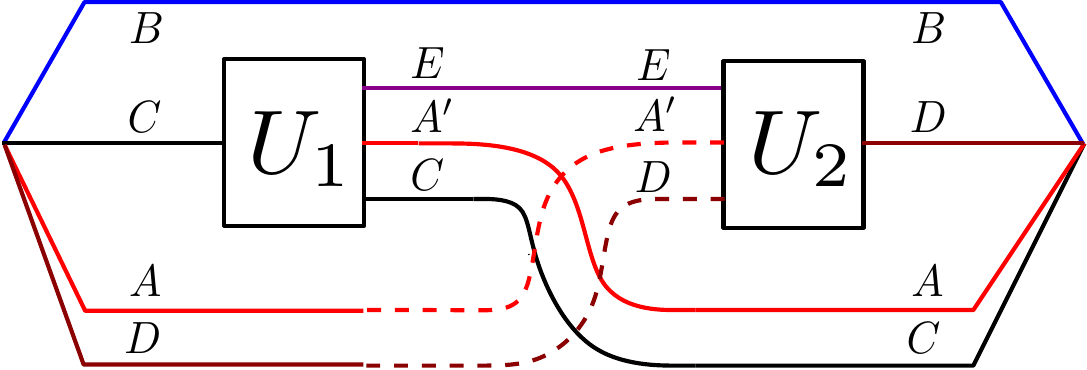}\caption{This figure helps to
illustrate the main idea behind the proof of Proposition~\ref{prop:duality-f}
and furthermore highlights the dual role played by an isometric extension of
the recovery map on $C$ and an Uhlmann isometry acting on system $D$ (and vice
versa). When reading the figure from left to right, the isometry on the left
corresponds to the recovery map and the isometry on the right corresponds to
the one from Uhlmann's theorem, and the overlap between the left and right is
understood as $F(A;B|C)$. When reading the figure from right to left, the
isometries switch their roles and the overlap is understood as $F(A;B|D)$.
This picture clarifies in a diagrammatic way how we get the duality relation
$F(A;B|C) = F(A;B|D)$.}%
\label{fig:duality}%
\end{figure*}

\begin{proposition}
[Duality]\label{prop:duality-f}Let $\phi_{ABCD}$ be a four-party pure state.
Then%
\begin{equation}
F( A;B|C) _{\phi}=F( A;B|D) _{\phi},
\end{equation}
which is equivalent to%
\begin{equation}
I_{F}( A;B|C) _{\phi}=I_{F}( A;B|D) _{\phi}.
\end{equation}

\end{proposition}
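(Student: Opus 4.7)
The plan is to combine the Stinespring dilation of the recovery map with Uhlmann's theorem to produce a double-supremum expression for $F(A;B|C)_{\phi}$ which is manifestly symmetric under the exchange of the purifying systems $C$ and $D$, making the duality a matter of relabeling dummy variables. This is the content of Figure~\ref{fig:duality}: the two isometries appearing in the expression each simultaneously play the role of a recovery-map dilation (for one side of the duality) and of an Uhlmann freedom on the purifying system (for the other side).

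First I would fix a recovery channel $\mathcal{R}_{C\to AC}$ and its Stinespring dilation $V_{C\to \tilde{A}CE}$, where $\tilde{A}\cong A$, $V^{\dag}V=I_{C}$, and $\mathcal{R}(\cdot)=\operatorname{Tr}_{E}[V(\cdot)V^{\dag}]$. Viewing $|\phi\rangle_{ABCD}$ as a purification of $\phi_{BC}$ (with $AD$ as the purifying system), the pure state $V|\phi\rangle_{ABCD}$ lives on $\tilde{A}ABCDE$ and is itself a purification of $\mathcal{R}(\phi_{BC})$ on $\tilde{A}BC$, with purifying system $ADE$. To set up the fidelity comparison, I would introduce a relabeled copy $|\phi\rangle_{\tilde{A}BC\tilde{D}}$ of the original pure state (with $\tilde{D}\cong D$), which purifies $\phi_{\tilde{A}BC}$ via the purifying system $\tilde{D}$.

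Next I would invoke Uhlmann's theorem, matching the two purifying systems by an isometry $U:\tilde{D}\to ADE$ (enlarging purifying systems with trivial ancillas if needed so that this isometry exists). Uhlmann yields
\begin{equation*}
F\!\left(\phi_{ABC},\mathcal{R}(\phi_{BC})\right)=\sup_{U}\bigl|\langle\phi|_{\tilde{A}BC\tilde{D}}\,U^{\dag}V\,|\phi\rangle_{ABCD}\bigr|^{2},
\end{equation*}
and taking the supremum over recovery maps (equivalently, over Stinespring isometries $V$) gives
\begin{equation*}
F(A;B|C)_{\phi}=\sup_{V,U}\bigl|\langle\phi|_{\tilde{A}BC\tilde{D}}\,U^{\dag}V\,|\phi\rangle_{ABCD}\bigr|^{2},
\end{equation*}
where $V:C\to \tilde{A}CE$ and $U:\tilde{D}\to ADE$ range over all isometries of the indicated signatures. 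The critical structural observation is that the signature of $U$ is \emph{exactly} that of a Stinespring dilation of a recovery map on the $D$ system, while the signature of $V$ is \emph{exactly} that of an Uhlmann isometry for the dual fidelity $F(\phi_{ABD},\mathcal{R}'(\phi_{BD}))$. Repeating the preceding derivation with the roles of $C$ and $D$ interchanged produces the same scalar supremum expression, now identified as $F(A;B|D)_{\phi}$. Hence $F(A;B|C)_{\phi}=F(A;B|D)_{\phi}$, equivalently $I_{F}(A;B|C)_{\phi}=I_{F}(A;B|D)_{\phi}$.

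The hard part will be the index bookkeeping. There are three distinct copies of the $A$-system in play, namely the original $A$ in $|\phi\rangle_{ABCD}$, the Stinespring output $\tilde{A}$ of $V$, and the auxiliary $A$ produced by the Uhlmann isometry $U$, as well as analogous multiple copies of the purifying systems $C$ and $D$. Once one writes down the symmetric double-supremum expression above with consistent labels, the duality becomes essentially a relabeling argument; the genuine mathematical content is entirely contained in the combination of Stinespring and Uhlmann, together with the observation that the two isometries produced have interchangeable interpretations.
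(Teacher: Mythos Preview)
Your proposal is correct and is essentially the same argument as the paper's: both take a Stinespring dilation of the recovery map, apply Uhlmann's theorem to the resulting purifications, and arrive at a double supremum over two isometries (one acting on $C$, one on $D$) that is manifestly symmetric under exchanging the roles of $C$ and $D$. The paper's version differs only cosmetically, relabeling just the $A$ system (as $A'$) rather than both $A$ and $D$, and writing the overlap as a fidelity of pure states rather than as a squared inner product; the content is the same, and your remark that the index bookkeeping is the only delicate point matches the paper's reliance on Figure~\ref{fig:duality} to keep track of which systems are matched up.
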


\begin{proof}
By definition,%
\begin{equation}
F( A;B|C) _{\phi}=\sup_{\mathcal{R}_{C\rightarrow AC}^{1}}F\left(  \phi
_{ABC},\mathcal{R}_{C\rightarrow AC}^{1}\left(  \phi_{BC}\right)  \right)  .
\end{equation}
Let $\mathcal{U}_{C\rightarrow ACE}^{\mathcal{R}^{1}}$ be an isometric channel
which extends $\mathcal{R}_{C\rightarrow AC}^{1}$. Since $\phi_{ABCD}$ is a
purification of $\phi_{ABC}$ and $\mathcal{U}_{C\rightarrow ACE}%
^{\mathcal{R}^{1}}\left(  \phi_{BCA^{\prime}D}\right)  $ is a purification of
$\mathcal{R}_{C\rightarrow AC}^{1}\left(  \phi_{BC}\right)  $, we can apply
Uhlmann's theorem for fidelity to conclude that
\begin{equation}
\sup_{\mathcal{R}_{C\rightarrow AC}^{1}}F\left(  \phi_{ABC},\mathcal{R}%
_{C\rightarrow AC}^{1}\left(  \phi_{BC}\right)  \right)  =\label{eq:fabc}
\sup_{\mathcal{U}_{D\rightarrow A^{\prime}DE}}\sup_{\mathcal{U}_{C\rightarrow
ACE}^{\mathcal{R}^{1}}}F\left(  \mathcal{U}_{D\rightarrow A^{\prime}DE}\left(
\phi_{ABCD}\right)  ,\mathcal{U}_{C\rightarrow ACE}^{\mathcal{R}^{1}}\left(
\phi_{BCA^{\prime}D}\right)  \right)  .
\end{equation}
Now consider that
\begin{equation}
F(  A;B|D)  _{\phi}=\sup_{\mathcal{R}_{D\rightarrow AD}^{2}%
}F\left(  \phi_{ABD},\mathcal{R}_{D\rightarrow AD}^{2}\left(  \phi
_{BD}\right)  \right)  .
\end{equation}
Let $\mathcal{U}_{D\rightarrow ADE}^{\mathcal{R}^{2}}$ be an isometric channel
which extends $\mathcal{R}_{D\rightarrow AD}^{2}$. Since $\phi_{ABCD}$ is a
purification of $\phi_{ABD}$ and $\mathcal{U}_{D\rightarrow ADE}%
^{\mathcal{R}^{2}}\left(  \phi_{BDA^{\prime}C}\right)  $ is a purification of
$\mathcal{R}_{D\rightarrow AD}^{2}\left(  \phi_{BD}\right)  $, we can apply
Uhlmann's theorem for fidelity to conclude that
\begin{equation}
\sup_{\mathcal{R}_{D\rightarrow AD}^{2}}F\left(  \phi_{ABD},\mathcal{R}%
_{D\rightarrow AD}^{2}\left(  \phi_{BD}\right)  \right)  =\label{eq:fabd}
\sup_{\mathcal{U}_{C\rightarrow A^{\prime}CE}}\sup_{\mathcal{U}_{D\rightarrow
ADE}^{\mathcal{R}^{2}}}F\left(  \mathcal{U}_{C\rightarrow A^{\prime}CE}\left(
\phi_{ABCD}\right)  ,\mathcal{U}_{D\rightarrow ADE}^{\mathcal{R}^{2}}\left(
\phi_{BDA^{\prime}C}\right)  \right)  .
\end{equation}

By inspecting the RHS\ of (\ref{eq:fabc}) and the RHS of (\ref{eq:fabd}), we
see that the two expressions are equivalent so that the statement of the
proposition holds. Figure~\ref{fig:duality} gives a graphical depiction of
this proof which should help in determining which systems are
\textquotedblleft connected together\textquotedblright\ and furthermore
highlights how the duality between the recovery map and the map from Uhlmann's
theorem is reflected in the duality for the fidelity of recovery.
\end{proof}

\begin{remark}
The physical interpretation of the above duality is as follows: beginning with
a four-party pure state $\phi_{ABCD}$, suppose that system $A$ is lost. Then
one can recover the state on systems $ABC\ $from system $C$ alone just as well
as one can recover the state on systems $ABD$ from system $D$ alone.
\end{remark}

\begin{proposition}
[Continuity]\label{prop:FoR-continuity}The fidelity of recovery is a
continuous function of its input. That is, given two tripartite states
$\rho_{ABC}$ and $\sigma_{ABC}$ such that $F\left(  \rho_{ABC},\sigma
_{ABC}\right)  \geq1-\varepsilon$ where $\varepsilon\in\left[  0,1\right]  $,
it follows that%
\begin{align}
\left\vert F( A;B|C) _{\rho}-F( A;B|C) _{\sigma}\right\vert  &  \leq
8\sqrt{\varepsilon},\label{eq:FoR-continuity}\\
\left\vert I_{F}( A;B|C) _{\rho}-I_{F}( A;B|C) _{\sigma}\right\vert  &
\leq\left\vert A\right\vert ^{x}8\sqrt{\varepsilon},
\label{eq:I-FoR-continuity}%
\end{align}
where $x=1$ if system $A$ is classical and $x=2$ otherwise.
\end{proposition}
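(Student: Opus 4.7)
The plan is to establish \eqref{eq:FoR-continuity} first and then derive \eqref{eq:I-FoR-continuity} from it by a simple Lipschitz-of-$\log$ argument. For \eqref{eq:FoR-continuity}, the starting move is to invoke the same finite-dimensional compactness used in the proof of non-negativity: the set of recovery channels $\mathcal{R}_{C\to AC}$ is compact and the fidelity is continuous in the channel, so there exists an optimizer $\mathcal{R}^{\star}$ with $F(A;B|C)_{\rho}=F(\rho_{ABC},\mathcal{R}^{\star}(\rho_{BC}))$. Using $\mathcal{R}^{\star}$ as a (possibly suboptimal) recovery for $\sigma$ gives
\begin{equation}
F(A;B|C)_{\rho}-F(A;B|C)_{\sigma}\leq F(\rho_{ABC},\mathcal{R}^{\star}(\rho_{BC}))-F(\sigma_{ABC},\mathcal{R}^{\star}(\sigma_{BC})),
\end{equation}
and swapping the roles of $\rho$ and $\sigma$ reduces the task to the one-sided bound $\leq 8\sqrt{\varepsilon}$ on the right-hand side.

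The cleanest route for that one-sided bound runs through the Bures metric $B(\omega,\tau)\equiv\sqrt{2(1-\sqrt{F(\omega,\tau)})}$, which is a genuine metric and is monotone non-increasing under CPTP maps (since the root fidelity obeys data processing). Writing $\rho_{1}=\rho_{ABC}$, $\rho_{2}=\sigma_{ABC}$, $\sigma_{1}=\mathcal{R}^{\star}(\rho_{BC})$, $\sigma_{2}=\mathcal{R}^{\star}(\sigma_{BC})$, two applications of the triangle inequality give
\begin{equation}
\bigl|B(\rho_{1},\sigma_{1})-B(\rho_{2},\sigma_{2})\bigr|\leq B(\rho_{1},\rho_{2})+B(\sigma_{1},\sigma_{2}).
\end{equation}
Data processing applied to the CPTP map $\mathcal{R}^{\star}_{C\to AC}\circ\operatorname{Tr}_{A}$ bounds $B(\sigma_{1},\sigma_{2})\leq B(\rho_{1},\rho_{2})$, while the hypothesis $F(\rho_{ABC},\sigma_{ABC})\geq 1-\varepsilon$ combined with the elementary estimate $1-\sqrt{1-\varepsilon}\leq\varepsilon$ yields $B(\rho_{1},\rho_{2})\leq\sqrt{2\varepsilon}$. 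Multiplying the resulting inequality $|B(\rho_{1},\sigma_{1})-B(\rho_{2},\sigma_{2})|\leq 2\sqrt{2\varepsilon}$ by $B(\rho_{1},\sigma_{1})+B(\rho_{2},\sigma_{2})\leq 2\sqrt{2}$, and then recalling $B^{2}=2(1-\sqrt{F})$, produces $|\sqrt{F(\rho_{1},\sigma_{1})}-\sqrt{F(\rho_{2},\sigma_{2})}|\leq 4\sqrt{\varepsilon}$. One more multiplication, by $\sqrt{F(\rho_{1},\sigma_{1})}+\sqrt{F(\rho_{2},\sigma_{2})}\leq 2$, gives the desired $8\sqrt{\varepsilon}$.

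For \eqref{eq:I-FoR-continuity}, the idea is to convert the additive bound on $F$ into one on $-\log F$ via the dimension bound, which implies $F(A;B|C)_{\rho}\geq |A|^{-2}$ in general and $F(X;B|C)_{\rho}\geq |X|^{-1}$ when system $A$ is classical. Since $t\mapsto -\log t$ is Lipschitz on $[m,1]$ with constant of order $1/m$, substituting $m=|A|^{-2}$ (resp.\ $m=|X|^{-1}$) into \eqref{eq:FoR-continuity} yields the factor $|A|^{2}$ (resp.\ $|X|$) in front of $8\sqrt{\varepsilon}$.

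The main obstacle, which shapes the strategy, is that the most obvious approach — converting fidelity closeness of $\rho$ and $\sigma$ into trace-distance closeness via Fuchs–van de Graaf and then applying Powers–Størmer to the $\sqrt{\rho},\sqrt{\sigma}$ factors — loses a square root and only delivers an $\varepsilon^{1/4}$ rate. Working directly at the level of the Bures metric, where the triangle inequality and data processing meet with no intervening square root, is what allows the bound to be $O(\sqrt\varepsilon)$ with the clean constant $8$.
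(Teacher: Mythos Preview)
Your argument is correct and follows the same overall template as the paper's proof---a fidelity-based metric, the triangle inequality, and data processing for the composite map $\mathcal{R}\circ\operatorname{Tr}_{A}$---but differs in two implementation details. First, the paper works with the purified distance $P(\rho,\sigma)=\sqrt{1-F(\rho,\sigma)}$ rather than the Bures metric; because $P^{2}=1-F$ is already linear in $F$, a single squaring of $\sqrt{1-F(A;B|C)_{\rho}}\leq 2\sqrt{\varepsilon}+\sqrt{1-F(A;B|C)_{\sigma}}$ immediately yields the $8\sqrt{\varepsilon}$ bound, whereas your Bures route needs two successive ``multiply by the sum'' steps to unwind $B^{2}=2(1-\sqrt{F})$ and then $(\sqrt{F})^{2}=F$. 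Second, the paper never invokes compactness of the channel set: it fixes an \emph{arbitrary} recovery map, bounds $P(\rho,\mathcal{R}(\rho_{BC}))$ by the triangle inequality, and only then takes the infimum on each side. This avoids your reliance on the existence of an optimizer $\mathcal{R}^{\star}$ and hence on finite dimensionality. Your treatment of \eqref{eq:I-FoR-continuity} via the Lipschitz constant of $-\log$ on $[|A|^{-x},1]$ is essentially the paper's argument (divide by $F(A;B|C)_{\rho}$, use $\log(1+y)\leq y$, and apply the dimension bound from Proposition~\ref{prop:dim-bound}), just phrased more abstractly.
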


\begin{proof}
One of the main tools for our proof is the purified distance
\cite[Definition~4]{TCR10}, defined for two quantum states as%
\begin{equation}
P\left(  \rho,\sigma\right)  \equiv\sqrt{1-F\left(  \rho,\sigma\right)  },
\end{equation}
and which for our case implies that%
\begin{equation}
P\left(  \rho_{ABC},\sigma_{ABC}\right)  \leq\sqrt{\varepsilon}.
\end{equation}
From the monotonicity of the purified distance with respect to quantum
operations \cite[Lemma 7]{TCR10}, it follows that%
\begin{equation}
P\left(  \mathcal{R}_{C\rightarrow AC}\left(  \rho_{BC}\right)  ,\mathcal{R}%
_{C\rightarrow AC}\left(  \sigma_{BC}\right)  \right)  \leq\sqrt{\varepsilon},
\end{equation}
where $\mathcal{R}_{C\rightarrow AC}$ is an arbitrary CPTP\ linear recovery
map. By the triangle inequality for purified distance \cite[Lemma 5]{TCR10},
it follows that%
\begin{align}
  \inf_{\mathcal{R}_{C\rightarrow AC}}P\left(  \rho_{ABC},\mathcal{R}%
_{C\rightarrow AC}\left(  \rho_{BC}\right)  \right) 
&  \leq P\left(  \rho_{ABC},\mathcal{R}_{C\rightarrow AC}\left(  \rho
_{BC}\right)  \right) \\
&  \leq P\left(  \rho_{ABC},\sigma_{ABC}\right)  +P\left(  \sigma
_{ABC},\mathcal{R}_{C\rightarrow AC}\left(  \sigma_{BC}\right)  \right)
\nonumber\\
&  \ \ \ \ \ +P\left(  \mathcal{R}_{C\rightarrow AC}\left(  \sigma
_{BC}\right)  ,\mathcal{R}_{C\rightarrow AC}\left(  \rho_{BC}\right)  \right)
\\
&  \leq2\sqrt{\varepsilon}+P\left(  \sigma_{ABC},\mathcal{R}_{C\rightarrow
AC}\left(  \sigma_{BC}\right)  \right)  .
\end{align}
Given that $\mathcal{R}_{C\rightarrow AC}$ is arbitrary, we can conclude that%
\begin{equation}
\inf_{\mathcal{R}_{C\rightarrow AC}}P\left(  \rho_{ABC},\mathcal{R}%
_{C\rightarrow AC}\left(  \rho_{BC}\right)  \right) 
\leq2\sqrt{\varepsilon}+\inf_{\mathcal{R}_{C\rightarrow AC}}P\left(
\sigma_{ABC},\mathcal{R}_{C\rightarrow AC}\left(  \sigma_{BC}\right)  \right)
,
\end{equation}
which is equivalent to%
\begin{equation}
\sqrt{1-F( A;B|C) _{\rho}}\leq2\sqrt{\varepsilon}+\sqrt{1-F( A;B|C) _{\sigma}%
}.
\end{equation}
Squaring both sides gives%
\begin{align}
  1-F( A;B|C) _{\rho}
&  \leq4\varepsilon+4\sqrt{\varepsilon}\sqrt{1-F( A;B|C) _{\sigma}}+1-F(
A;B|C) _{\sigma}\nonumber\\
&  \leq8\sqrt{\varepsilon}+1-F( A;B|C) _{\sigma},
\end{align}
where the second inequality follows because $\varepsilon\in\left[  0,1\right]
$ and the same is true for the fidelity. Rewriting this gives%
\begin{equation}
F( A;B|C) _{\sigma}\leq8\sqrt{\varepsilon}+F( A;B|C) _{\rho}.
\label{eq:FoR-continuity-1}%
\end{equation}
The same approach gives the other inequality:%
\begin{equation}
F( A;B|C) _{\rho}\leq8\sqrt{\varepsilon}+F( A;B|C) _{\sigma}.
\label{eq:FoR-continuity-2}%
\end{equation}

By dividing (\ref{eq:FoR-continuity-1}) by $F( A;B|C) _{\rho}$ (which by
Proposition~\ref{prop:dim-bound}\ is never smaller than $1/\left\vert
A\right\vert ^{2}$) and taking a logarithm, we find that%
\begin{align}
\log\left(  \frac{F( A;B|C) _{\sigma}}{F( A;B|C) _{\rho}}\right)   &  \leq
\log\left(  1+\frac{8\sqrt{\varepsilon}}{F\left(  A;B|C\right)  _{\rho}%
}\right) \\
&  \leq\frac{8\sqrt{\varepsilon}}{F( A;B|C) _{\rho}}\\
&  \leq\left\vert A\right\vert ^{x}8\sqrt{\varepsilon}.
\end{align}
where we used that $\log\left(  y+1\right)  \leq y$ and the dimension bound
from Proposition~\ref{prop:dim-bound}. Applying this to the other inequality
in (\ref{eq:FoR-continuity-2}) gives that%
\begin{equation}
\log\left(  \frac{F( A;B|C) _{\rho}}{F( A;B|C) _{\sigma}}\right)
\leq\left\vert A\right\vert ^{x}8\sqrt{\varepsilon},
\end{equation}
from which we can conclude (\ref{eq:I-FoR-continuity}).
\end{proof}

\begin{proposition}
[Weak chain rule]Given a four-party state $\rho_{ABCD}$, the following
inequality holds%
\begin{equation}
I_{F}\left(  AC;B|D\right)  _{\rho}\geq I_{F}\left(  A;B|CD\right)  _{\rho}.
\end{equation}

\end{proposition}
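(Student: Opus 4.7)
The plan is to convert the inequality into the equivalent statement for the fidelity of recovery itself, namely $F(AC;B|D)_\rho \le F(A;B|CD)_\rho$, and then exhibit a natural embedding of the class of recovery maps appearing on the left-hand side into the class appearing on the right-hand side. Once such an embedding is in place, taking a supremum and then a negative logarithm converts everything back to the $I_F$ formulation.

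Concretely, I would begin by unfolding the definitions: $F(AC;B|D)_\rho$ is the supremum of $F(\rho_{ABCD},\mathcal{R}_{D\to ACD}(\rho_{BD}))$ over CPTP recovery maps $\mathcal{R}_{D\to ACD}$, whereas $F(A;B|CD)_\rho$ is a supremum over CPTP maps $\mathcal{S}_{CD\to ACD}$ applied to $\rho_{BCD}$. Given any candidate $\mathcal{R}_{D\to ACD}$ for the first supremum, I would form the composite map
\begin{equation}
\mathcal{S}_{CD\to ACD} \;\equiv\; \mathcal{R}_{D\to ACD}\circ \mathrm{Tr}_{C},
\end{equation}
which is manifestly CPTP and hence an admissible recovery map for the second supremum. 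The key observation is then that
\begin{equation}
\mathcal{S}_{CD\to ACD}(\rho_{BCD}) \;=\; \mathcal{R}_{D\to ACD}(\mathrm{Tr}_{C}\rho_{BCD}) \;=\; \mathcal{R}_{D\to ACD}(\rho_{BD}),
\end{equation}
so that the fidelity achieved by $\mathcal{R}_{D\to ACD}$ on the left is exactly the fidelity achieved by $\mathcal{S}_{CD\to ACD}$ on the right.

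From here the argument is immediate: the value $F(\rho_{ABCD},\mathcal{R}_{D\to ACD}(\rho_{BD}))$ is bounded above by $F(A;B|CD)_\rho$ by definition, and taking the supremum over all $\mathcal{R}_{D\to ACD}$ yields $F(AC;B|D)_\rho \le F(A;B|CD)_\rho$. Applying $-\log(\cdot)$ to both sides (which is monotone non-increasing) then produces the desired inequality
\begin{equation}
I_F(AC;B|D)_\rho \;\ge\; I_F(A;B|CD)_\rho.
\end{equation}

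There is no real obstacle here; the proof is genuinely a one-line embedding argument, and the reason it is only a weak chain rule rather than an equality is precisely the loss of information entailed by composing with $\mathrm{Tr}_{C}$, which is why the opposite direction (the recovery map on the right can exploit $C$ freely) cannot be replicated on the left. It is perhaps worth noting in passing that this argument uses nothing specific about the fidelity beyond the definitions, so the same weak chain rule would hold for any recoverability quantity built from a generalized divergence, in the spirit of the remark following Proposition \ref{prop:FoR-local-iso}.
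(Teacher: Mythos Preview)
Your proof is correct and follows essentially the same approach as the paper: both reduce to the fidelity inequality $F(AC;B|D)_\rho \le F(A;B|CD)_\rho$ and establish it by noting that any recovery map $\mathcal{R}_{D\to ACD}$ can be viewed as the particular $CD\to ACD$ recovery map $\mathcal{R}_{D\to ACD}\circ\mathrm{Tr}_C$, so the supremum defining $F(A;B|CD)_\rho$ dominates. The paper's proof is line-for-line the same embedding argument you gave.
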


\begin{proof}
The inequality is equivalent to%
\begin{equation}
F\left(  AC;B|D\right)  _{\rho}\leq F\left(  A;B|CD\right)  _{\rho},
\label{eq:weak-chain}%
\end{equation}
which follows from the fact that it is easier to recover $A$ from $CD$ than it
is to recover both $A$ and $C$ from $D$ alone. Indeed, let $\mathcal{R}%
_{D\rightarrow ACD}$ be any recovery map. Then%
\begin{align}
  F\left(  \rho_{ABCD},\mathcal{R}_{D\rightarrow ACD}\left(  \rho
_{BD}\right)  \right) 
&  =F\left(  \rho_{ABCD},\left(  \mathcal{R}_{D\rightarrow ACD}\circ
\operatorname{Tr}_{C}\right)  \left(  \rho_{BCD}\right)  \right) \\
&  \leq\sup_{\mathcal{R}_{CD\rightarrow ACD}}F\left(  \rho_{ABCD},\left(
\mathcal{R}_{CD\rightarrow ACD}\right)  \left(  \rho_{BCD}\right)  \right) \\
&  =F\left(  A;B|CD\right)  _{\rho}.
\end{align}
Since the chain of inequalities holds for any recovery map $\mathcal{R}%
_{D\rightarrow ACD}$, we can conclude (\ref{eq:weak-chain}) from the
definition of $F\left(  AC;B|D\right)  _{\rho}$.
\end{proof}

\begin{proposition}
[Conditioning on classical info.]\label{prop:condition-classical}Let
$\omega_{ABCX}$ be a state for which system $X$ is classical:%
\begin{equation}
\omega_{ABCX}=\sum_{x}p_{X}( x) \omega_{ABC}^{x}\otimes\left\vert
x\right\rangle \left\langle x\right\vert _{X},
\end{equation}
where $\left\{  \left\vert x\right\rangle _{X}\right\}  $ is an orthonormal
basis, $p_{X}$ is a probability distribution, and each $\omega_{ABC}^{x}$ is a
state. Then the following inequalities hold%
\begin{align}
\sqrt{F}\left(  A;B|CX\right)  _{\omega}  &  \geq\sum_{x}p_{X}\left(
x\right)  \sqrt{F}( A;B|C) _{\omega^{x}},\label{eq:FoR-conditioning-classical}%
\\
I_{F}\left(  A;B|CX\right)  _{\omega}  &  \leq\sum_{x}p_{X}( x) I_{F}( A;B|C)
_{\omega^{x}}. \label{eq:IF-conditioning-classical}%
\end{align}

\end{proposition}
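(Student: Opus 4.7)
The plan is to exploit the block-diagonal structure of $\omega_{ABCX}$ in the classical register $X$ and to build a recovery map from $CX$ to $ACX$ that conditions on $X$.

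\textbf{Step 1: Construct a classically controlled recovery map.} For each $x$ in the support of $p_X$, by compactness of the set of CPTP maps and continuity of the fidelity (as used in the proof of non-negativity), pick an optimal recovery map $\mathcal{R}^{x}_{C \rightarrow AC}$ achieving $F(A;B|C)_{\omega^x}$. Define the channel
\begin{equation}
\mathcal{R}_{CX\rightarrow ACX}(\sigma_{CX}) \equiv \sum_{x} \mathcal{R}^{x}_{C\rightarrow AC}\!\left(\langle x|_X \sigma_{CX} |x\rangle_X\right) \otimes |x\rangle\langle x|_X,
\end{equation}
which is manifestly CPTP and acts only on $CX$. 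A direct calculation gives
\begin{equation}
\mathcal{R}_{CX\rightarrow ACX}(\omega_{BCX}) = \sum_{x} p_X(x)\,\mathcal{R}^{x}_{C\rightarrow AC}(\omega^{x}_{BC}) \otimes |x\rangle\langle x|_X.
\end{equation}

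\textbf{Step 2: Evaluate the root fidelity on block-diagonal operators.} Both $\omega_{ABCX}$ and $\mathcal{R}_{CX\rightarrow ACX}(\omega_{BCX})$ are block-diagonal with respect to the orthonormal basis $\{|x\rangle_X\}$, so their square roots are block-diagonal with the same blocks. Since $\|\bigoplus_x M_x\|_1 = \sum_x \|M_x\|_1$, the root fidelity decomposes as
\begin{equation}
\sqrt{F}\!\left(\omega_{ABCX},\mathcal{R}_{CX\rightarrow ACX}(\omega_{BCX})\right) = \sum_{x} p_X(x) \sqrt{F}\!\left(\omega^{x}_{ABC},\mathcal{R}^{x}_{C\rightarrow AC}(\omega^{x}_{BC})\right),
\end{equation}
using $\sqrt{F}(p\rho,p\sigma) = p\,\sqrt{F}(\rho,\sigma)$. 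By the optimality of each $\mathcal{R}^{x}_{C\rightarrow AC}$ and the fact that the LHS is at most $\sqrt{F}(A;B|CX)_\omega$ (taking the supremum over recovery maps on $CX$), inequality (\ref{eq:FoR-conditioning-classical}) follows.

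\textbf{Step 3: Derive the surprisal inequality via Jensen.} Rewriting (\ref{eq:FoR-conditioning-classical}) using $\sqrt{F(A;B|C)} = 2^{-I_F(A;B|C)/2}$ yields
\begin{equation}
2^{-I_F(A;B|CX)_\omega/2} \geq \sum_{x} p_X(x)\, 2^{-I_F(A;B|C)_{\omega^x}/2}.
\end{equation}
Because $t \mapsto 2^{-t/2}$ is convex on $[0,\infty)$, Jensen's inequality bounds the right-hand side from below by $2^{-\sum_x p_X(x) I_F(A;B|C)_{\omega^x}/2}$. Taking $-2\log$ of both sides (order-reversing) gives (\ref{eq:IF-conditioning-classical}).

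The only real subtlety is verifying that the supremum in the definition of each $F(A;B|C)_{\omega^x}$ is attained, which is handled by the same compactness/continuity argument invoked in the proof of non-negativity; if one prefers to avoid this, near-optimizers and a limiting argument work equally well. The rest is a routine consequence of block-diagonal structure and Jensen.
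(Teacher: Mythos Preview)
Your proof is correct and follows essentially the same approach as the paper: construct a classically controlled recovery map on $CX$, use the block-diagonal decomposition of the root fidelity, and then apply Jensen's inequality. The only cosmetic differences are that the paper works with arbitrary per-$x$ recovery maps and optimizes at the end (rather than invoking compactness to pick optimizers upfront), and phrases the Jensen step as convexity of $-\log$ rather than of $t\mapsto 2^{-t/2}$; these are equivalent.
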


\begin{proof}
We first prove the inequality in (\ref{eq:FoR-conditioning-classical}). For
any set of recovery maps $\mathcal{R}_{C\rightarrow CA}^{x}$, define
$\mathcal{R}_{CX\rightarrow CXA}$ as follows:%
\begin{equation}
\mathcal{R}_{CX\rightarrow CXA}\left(  \tau_{CX}\right)  \equiv
\sum_{x}\mathcal{R}_{C\rightarrow CA}^{x}\left(  \left\langle x\right\vert
_{X}\left(  \tau_{CX}\right)  \left\vert x\right\rangle _{X}\right)
\ \left\vert x\right\rangle \left\langle x\right\vert _{X},
\end{equation}
so that it first measures the system $X$ in the basis $\left\{  \left\vert
x\right\rangle \left\langle x\right\vert _{X}\right\}  $, places the outcome
in the same classical register, and then acts with the particular recovery map
$\mathcal{R}_{C\rightarrow CA}^{x}$. Then 
\begin{align}
&  \left[  \sum_{x}p_{X}(  x)  \sqrt{F}\left(  \omega_{ABC}%
^{x},\mathcal{R}_{C\rightarrow CA}^{x}\left(  \omega_{BC}^{x}\right)  \right)
\right]  ^{2}\nonumber\\
&  =F\left(  \sum_{x}p_{X}(  x)  \omega_{ABC}^{x}\otimes\left\vert
x\right\rangle \left\langle x\right\vert _{X},\sum_{x}p_{X}(  x)
\mathcal{R}_{C\rightarrow CA}^{x}\left(  \omega_{BC}^{x}\right)
\otimes\left\vert x\right\rangle \left\langle x\right\vert _{X}\right)  \\
&  =F\left(  \sum_{x}p_{X}(  x)  \omega_{ABC}^{x}\otimes\left\vert
x\right\rangle \left\langle x\right\vert _{X},\mathcal{R}_{CX\rightarrow
CXA}\left(  \sum_{x}p_{X}(  x)  \omega_{BC}^{x}\otimes\left\vert
x\right\rangle \left\langle x\right\vert _{X}\right)  \right)  \\
&  \leq F\left(  A;B|CX\right)  _{\omega}.
\end{align}

Since the inequality holds for any set of individual recovery maps $\left\{
\mathcal{R}_{C\rightarrow CA}^{x}\right\}  $, we obtain
(\ref{eq:FoR-conditioning-classical}).

Finally, we recover (\ref{eq:IF-conditioning-classical}) by applying a
negative logarithm to the inequality in (\ref{eq:FoR-conditioning-classical})
and exploiting convexity of $-\log$.
\end{proof}

\begin{proposition}
[Conditioning on a product system]\label{prop:condition-on-product}Let
$\rho_{ABC}=\sigma_{AB}\otimes\omega_{C}$. Then%
\begin{align}
F( A;B|C) _{\rho}  &  =F( A;B) _{\sigma}\equiv\sup_{\tau_{A}}F\left(
\sigma_{AB},\tau_{A}\otimes\sigma_{B}\right)  ,\\
I_{F}( A;B|C) _{\rho}  &  =I_{F}( A;B) _{\sigma}\equiv-\log F( A;B) _{\sigma}.
\label{eq:sandwich-similar}%
\end{align}

\end{proposition}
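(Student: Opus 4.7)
The plan is to establish the equality by showing both directions of inequality separately, exploiting the product structure of $\rho_{ABC}$. I will work directly from the definition $F(A;B|C)_\rho = \sup_{\mathcal{R}_{C\to AC}} F(\rho_{ABC}, \mathcal{R}_{C\to AC}(\rho_{BC}))$, noting that since $\rho_{ABC} = \sigma_{AB}\otimes\omega_C$, we have $\rho_{BC} = \sigma_B\otimes\omega_C$.

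For the upper bound $F(A;B|C)_\rho \leq F(A;B)_\sigma$, let $\mathcal{R}_{C\to AC}$ be an arbitrary recovery channel. I would apply monotonicity of fidelity under the partial trace over $C$:
\begin{equation}
F(\sigma_{AB}\otimes\omega_C, \mathcal{R}_{C\to AC}(\sigma_B\otimes\omega_C)) \leq F(\sigma_{AB}, \operatorname{Tr}_C\{\mathcal{R}_{C\to AC}(\sigma_B\otimes\omega_C)\}).
\end{equation}
Since $\mathcal{R}_{C\to AC}$ acts trivially on $B$, the map $\operatorname{Tr}_C\circ\mathcal{R}_{C\to AC}$ is a channel from $C$ to $A$, so that its action on $\sigma_B\otimes\omega_C$ produces a product state $\tau_A\otimes\sigma_B$ with $\tau_A \equiv \operatorname{Tr}_C\{\mathcal{R}_{C\to AC}(\omega_C)\}$. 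Hence the right-hand side equals $F(\sigma_{AB},\tau_A\otimes\sigma_B) \leq F(A;B)_\sigma$, and taking the supremum over $\mathcal{R}_{C\to AC}$ gives the upper bound.

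For the lower bound $F(A;B|C)_\rho \geq F(A;B)_\sigma$, I would construct an explicit recovery map. Given any state $\tau_A$, define the preparation map $\mathcal{R}_{C\to AC}(X_C) \equiv \tau_A\otimes X_C$. Then
\begin{equation}
F(\sigma_{AB}\otimes\omega_C, \tau_A\otimes\sigma_B\otimes\omega_C) = F(\sigma_{AB}, \tau_A\otimes\sigma_B) \cdot F(\omega_C,\omega_C) = F(\sigma_{AB}, \tau_A\otimes\sigma_B),
\end{equation}
using multiplicativity of fidelity under tensor products and $F(\omega_C,\omega_C)=1$. Taking the supremum over $\tau_A$ on the right, and observing that the left side is bounded above by $F(A;B|C)_\rho$ for every choice of $\tau_A$, yields the matching lower bound. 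The second equation $I_F(A;B|C)_\rho = I_F(A;B)_\sigma$ is then immediate upon taking $-\log$.

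There is no real obstacle here; the argument is essentially a direct consequence of the product structure together with two well-known properties of fidelity (monotonicity under CPTP maps and multiplicativity on tensor products). The only mildly subtle step is recognizing in the upper bound that tracing out $C$ after the recovery automatically forces a product form $\tau_A\otimes\sigma_B$, which is what makes the bound match the optimization defining $F(A;B)_\sigma$.
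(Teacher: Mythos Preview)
Your proof is correct and follows essentially the same approach as the paper's own proof: both directions use monotonicity of fidelity under partial trace over $C$ for the upper bound and multiplicativity of fidelity on tensor products with an explicit preparation recovery map for the lower bound. The only cosmetic difference is the order of operations in the upper bound (you trace out $C$ first and then identify the product form, whereas the paper first factors out $\sigma_B$ and then traces), but the logic is identical.
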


\begin{proof}
Consider that, for any recovery map $\mathcal{R}_{C\rightarrow AC}$%
\begin{align}
  F\left(  \sigma_{AB}\otimes\omega_{C},\mathcal{R}_{C\rightarrow AC}\left(
\sigma_{B}\otimes\omega_{C}\right)  \right) 
&  =F\left(  \sigma_{AB}\otimes\omega_{C},\sigma_{B}\otimes\mathcal{R}%
_{C\rightarrow AC}\left(  \omega_{C}\right)  \right) \\
&  \leq F\left(  \sigma_{AB},\sigma_{B}\otimes\mathcal{R}_{C\rightarrow
A}\left(  \omega_{C}\right)  \right) \\
&  \leq\sup_{\tau_{A}}F\left(  \sigma_{AB},\sigma_{B}\otimes\tau_{A}\right)  .
\end{align}
The first inequality follows because fidelity is monotone with respect to a
partial trace over the $C$ system. The second inequality follows by optimizing
the second argument to the fidelity over all states on the $A$ system. Since
the inequality holds independent of the recovery map $\mathcal{R}%
_{C\rightarrow AC}$, we find that%
\begin{equation}
F( A;B|C) _{\rho}\leq F( A;B) _{\sigma}.
\end{equation}
To prove the other inequality $F( A;B) _{\sigma}\leq F\left(  A;B|C\right)
_{\rho}$, consider for any state $\tau_{A}$ that%
\begin{align}
  F\left(  \sigma_{AB},\tau_{A}\otimes\sigma_{B}\right)
&  =F\left(  \sigma_{AB}\otimes\omega_{C},\tau_{A}\otimes\sigma_{B}%
\otimes\omega_{C}\right) \\
&  =F\left(  \sigma_{AB}\otimes\omega_{C},\left(  \mathcal{P}_{A}^{\tau
}\otimes\text{id}_{C}\right)  \left(  \sigma_{B}\otimes\omega_{C}\right)
\right) \\
&  \leq\sup_{\mathcal{R}_{C\rightarrow AC}}F\left(  \sigma_{AB}\otimes
\omega_{C},\mathcal{R}_{C\rightarrow AC}\left(  \sigma_{B}\otimes\omega
_{C}\right)  \right)  .
\end{align}
The first equality follows because fidelity is multiplicative with respect to
tensor-product states. The second equality follows by taking $\left(
\text{id}_{C}\otimes\mathcal{P}_{A}^{\tau}\right)  $ to be the recovery map
that does nothing to system $C$ and prepares $\tau_{A}$ on system $A$. The
inequality follows by optimizing over all recovery maps. Since the inequality
is independent of the prepared state, we obtain the other inequality%
\begin{equation}
F( A;B) _{\sigma}\leq F( A;B|C) _{\rho}.
\end{equation}

The equality $I_{F}(A;B|C)_{\rho}=I_{F}(A;B)_{\sigma}$ follows by applying a
negative logarithm to $F(A;B|C)_{\rho}=F(A;B)_{\sigma}$. We note in passing
that the quantity on the RHS\ in (\ref{eq:sandwich-similar}) is closely
related to the sandwiched R\'{e}nyi mutual information of order 1/2
\cite{MDSFT13,WWY13,B13,GW13}.
\end{proof}

\begin{proposition}
[Dimension bound]\label{prop:dim-bound}The fidelity of recovery obeys the
following dimension bound:%
\begin{equation}
F( A;B|C) _{\rho}\geq\frac{1}{\left\vert A\right\vert ^{2}},
\label{eq:FoR-dim-bnd-1}%
\end{equation}
which is equivalent to%
\begin{equation}
I_{F}( A;B|C) _{\rho}\leq2\log\left\vert A\right\vert .
\label{eq:FoR-dim-bnd-2}%
\end{equation}
If the system $A$ is classical, so that we relabel it as $X$, then the
following hold%
\begin{align}
F( X;B|C) _{\rho}  &  \geq\frac{1}{\left\vert X\right\vert }%
,\label{eq:FoR-class-dim-bnd-1}\\
I_{F}( X;B|C) _{\rho}  &  \leq\log\left\vert X\right\vert .
\label{eq:FoR-class-dim-bnd-2}%
\end{align}
Examples of states achieving these bounds are $\Phi_{AB}\otimes\sigma_{C}$ for
\eqref{eq:FoR-dim-bnd-1}-\eqref{eq:FoR-dim-bnd-2} and $\overline{\Phi}%
_{XB}\otimes\sigma_{C}$ for
\eqref{eq:FoR-class-dim-bnd-1}-\eqref{eq:FoR-class-dim-bnd-2}, where%
\begin{equation}
\overline{\Phi}_{XB}\equiv\frac{1}{\left\vert X\right\vert }\sum_{x}\left\vert
x\right\rangle \left\langle x\right\vert _{X}\otimes\left\vert x\right\rangle
\left\langle x\right\vert _{B}.
\end{equation}

\end{proposition}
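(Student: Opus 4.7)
The plan is to exhibit the maximally-mixed recovery $\mathcal{R}_{C\to AC}(\sigma_C) = (I_A/|A|) \otimes \sigma_C$ and show directly that it witnesses the claimed lower bound on $F(A;B|C)_\rho$. This recovery yields $\mathcal{R}(\rho_{BC}) = (I_A/|A|) \otimes \rho_{BC}$, so it suffices to prove the operator inequality $\rho_{ABC} \leq |A| \cdot I_A \otimes \rho_{BC}$: dividing by $|A|^2$ gives $(I_A/|A|) \otimes \rho_{BC} \geq \rho_{ABC}/|A|^2$, after which a short lemma converts this operator domination into a fidelity lower bound. The saturating examples are then handled by monotonicity of fidelity under partial trace.

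The main step is the operator inequality $\rho_{ABC} \leq |A| \cdot I_A \otimes \rho_{BC}$. By spectral decomposition and convexity it suffices to take $\rho_{ABC} = |\psi\rangle\langle\psi|$ pure. Schmidt-decomposing $|\psi\rangle$ across the $A$ vs.\ $BC$ cut as $|\psi\rangle = \sum_{i=1}^{r} \sqrt{\lambda_i}\,|u_i\rangle_A |v_i\rangle_{BC}$ with $r \leq |A|$, the operator $|A|\cdot I_A \otimes \rho_{BC}^\psi - |\psi\rangle\langle\psi|$ is manifestly positive on the orthogonal complement of $\operatorname{span}\{|u_i\rangle|v_i\rangle\}$ and reduces on that subspace, in the basis $\{|u_i\rangle|v_i\rangle\}$, to the $r \times r$ matrix $|A|\,\operatorname{diag}(\lambda_1,\dots,\lambda_r) - |w\rangle\langle w|$ with $w_i = \sqrt{\lambda_i}$, whose positivity is exactly the Cauchy--Schwarz inequality $|\sum_i \sqrt{\lambda_i}\,c_i|^2 \leq r\sum_i \lambda_i |c_i|^2 \leq |A|\sum_i \lambda_i |c_i|^2$. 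In the classical case $A = X$ one has $\rho_{XBC} = \sum_x p_X(x)|x\rangle\langle x|_X \otimes \rho_{BC}^x$, and the $|X|$ factor is unnecessary: $I_X \otimes \rho_{BC} - \rho_{XBC} = \sum_x |x\rangle\langle x|_X \otimes \sum_{y\neq x} p_X(y)\rho_{BC}^y \geq 0$.

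The conversion from operator domination to a fidelity lower bound is the following: if $\sigma \geq \lambda \rho$ for a state $\rho$ and scalar $\lambda \geq 0$, operator monotonicity of the square root yields $\sqrt{\sigma} \geq \sqrt{\lambda}\,\sqrt{\rho}$, whence
\begin{equation}
\sqrt{F}(\rho,\sigma) = \|\sqrt{\rho}\sqrt{\sigma}\|_1 \geq \operatorname{Tr}(\sqrt{\rho}\sqrt{\sigma}) \geq \sqrt{\lambda}\,\operatorname{Tr}(\rho) = \sqrt{\lambda},
\end{equation}
using that $\|X\|_1 \geq |\operatorname{Tr}(X)|$ and that $X \geq Y \geq 0$ implies $\operatorname{Tr}(\sqrt{\rho}\,X) \geq \operatorname{Tr}(\sqrt{\rho}\,Y)$ (the latter since $\operatorname{Tr}(\sqrt{\rho}(X-Y)) = \operatorname{Tr}(\rho^{1/4}(X-Y)\rho^{1/4}) \geq 0$). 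Applying this with $\lambda = 1/|A|^2$ (respectively $1/|X|$) and taking the supremum over recovery maps gives \eqref{eq:FoR-dim-bnd-1} (respectively \eqref{eq:FoR-class-dim-bnd-1}); the bounds \eqref{eq:FoR-dim-bnd-2} and \eqref{eq:FoR-class-dim-bnd-2} follow by negative logarithm.

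For the saturating examples: with $\rho_{ABC} = \Phi_{AB} \otimes \sigma_C$ one has $\rho_{BC} = (I_B/|B|) \otimes \sigma_C$, so any recovery map outputs $(I_B/|B|) \otimes \tau_{AC}$ for some state $\tau_{AC}$. Monotonicity of fidelity under $\operatorname{Tr}_C$ gives $F(\Phi_{AB}\otimes\sigma_C,(I_B/|B|)\otimes\tau_{AC}) \leq F(\Phi_{AB},\tau_A\otimes(I_B/|B|))$, and a direct overlap calculation with $|\Phi\rangle_{AB} = |A|^{-1/2}\sum_i |i\rangle_A|i\rangle_B$ yields $\langle\Phi|(\tau_A\otimes(I_B/|B|))|\Phi\rangle = |A|^{-2}\operatorname{Tr}(\tau_A) = 1/|A|^2$ independent of $\tau_A$, pinning $F(A;B|C)_\rho = 1/|A|^2$ together with the lower bound. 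The classical case $\overline{\Phi}_{XB}\otimes\sigma_C$ is handled the same way; diagonality in the computational basis reduces the upper-bound to a classical fidelity that Cauchy--Schwarz bounds by $1/|X|$. The main obstacle throughout is the pure-state operator inequality $|\psi\rangle\langle\psi| \leq |A|\cdot I_A \otimes \rho_{BC}^\psi$; once that is in hand, the rest is short and routine.
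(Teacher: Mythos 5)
Your proof is correct, but it reaches the key lower bound by a genuinely different route than the paper. Both arguments start from the same recovery map (discard nothing on $C$, prepare $\pi_A=I_A/|A|$), so both reduce to bounding $F(\rho_{ABC},\pi_A\otimes\rho_{BC})$ from below. The paper bounds this by $\frac{1}{|A|}\left[\operatorname{Tr}\{\sqrt{\rho_{ABC}}\sqrt{I_A\otimes\rho_{BC}}\}\right]^2$, recognizes the resulting term as the conditional R\'enyi entropy $H_{1/2}(A|BC)_\rho$, and then invokes the duality $H_{1/2}(A|BC)=-H_{3/2}(A|D)$ for a purification, R\'enyi data processing, and the generic R\'enyi dimension bound (and, for classical $A$, a separate lemma that $H_{1/2}(X|BC)\geq 0$). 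You instead prove the operator inequalities $\rho_{ABC}\leq|A|\,I_A\otimes\rho_{BC}$ (via Schmidt decomposition across the $A$ versus $BC$ cut plus Cauchy--Schwarz) and $\rho_{XBC}\leq I_X\otimes\rho_{BC}$ for classical $X$, and then convert operator domination $\sigma\geq\lambda\rho$ into $F(\rho,\sigma)\geq\lambda$ using operator monotonicity of the square root. Your route is entirely self-contained and elementary---no R\'enyi entropies, duality, or data processing---and in fact establishes the stronger min-entropy statement $H_{\min}(A|BC)\geq-\log|A|$ rather than only the $H_{1/2}$ bound the paper needs; the paper's route is shorter if one is willing to cite the R\'enyi machinery of Tomamichel--Colbeck--Renner. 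For the saturating examples your direct overlap computations coincide in substance with the paper's (which routes the quantum case through its normalization and pure-state-reduction propositions and computes the classical case directly), so no issues there.
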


\begin{proof}
Consider that the following inequality holds, simply by choosing the recovery
map to be one in which we do not do anything to system $C$ and prepare the
maximally mixed state $\pi_{A}\equiv I_{A}/\left\vert A\right\vert $\ on
system $A$:%
\begin{align}
F( A;B|C) _{\rho}  &  \geq F\left(  \rho_{ABC},\pi_{A}\otimes\rho_{BC}\right)
\\
&  =\frac{1}{\left\vert A\right\vert }F\left(  \rho_{ABC},I_{A}\otimes
\rho_{BC}\right) \\
&  \geq\frac{1}{\left\vert A\right\vert }\left[  \operatorname{Tr}\left\{
\sqrt{\rho_{ABC}}\sqrt{I_{A}\otimes\rho_{BC}}\right\}  \right]  ^{2}.
\end{align}
Taking a negative logarithm and letting $\phi_{ABCD}$ be a purification of
$\rho_{ABC}$, we find that%
\begin{align}
  I_{F}( A;B|C) _{\rho}
&  \leq\log\left\vert A\right\vert -2\log\operatorname{Tr}\left\{  \sqrt
{\rho_{ABC}}\sqrt{I_{A}\otimes\rho_{BC}}\right\} \\
&  =\log\left\vert A\right\vert -H_{1/2}( A|BC) _{\rho}\label{eq:branch-step}%
\\
&  =\log\left\vert A\right\vert +H_{3/2}( A|D) _{\rho}\\
&  \leq\log\left\vert A\right\vert +H_{3/2}( A) _{\rho}\\
&  \leq2\log\left\vert A\right\vert .
\end{align}
The first equality follows by recognizing that the second term is a
conditional R\'{e}nyi entropy of order 1/2 \cite[Definition~3]{TCR09} (see Appendix~\ref{app} for a definition). The
second equality follows from a duality relation for this conditional R\'{e}nyi
entropy \cite[Lemma 6]{TCR09}. The second inequality is a consequence of the
quantum data processing inequality for conditional R\'{e}nyi entropies
\cite[Lemma 5]{TCR09} (with the map taken to be a partial trace over system
$D$). The last inequality follows from a dimension bound which holds for any
R\'{e}nyi entropy.

To see that $\Phi_{AB}\otimes\sigma_{C}$ has $I_{F}( A;B|C) =2\log\left\vert
A\right\vert $, we can apply Propositions~\ref{prop:normalization} and
\ref{prop:pure-state}.

For classical $A$ system, we follow the same steps up to (\ref{eq:branch-step}%
), but then apply Lemma~\ref{lem:classical-non-neg} in Appendix~\ref{app} to conclude that
$H_{1/2}(A|BC)\geq0$ for a classical $A$. This gives
\eqref{eq:FoR-class-dim-bnd-1}-\eqref{eq:FoR-class-dim-bnd-2}. To see that
$\overline{\Phi}_{XB}\otimes\sigma_{C}$ has $I_{F}(X;B|C)=\log\left\vert
X\right\vert $, we apply Proposition~\ref{prop:condition-on-product} and then
evaluate%
\begin{align}
  F\left(  \overline{\Phi}_{XB},\tau_{X}\otimes\overline{\Phi}_{B}\right)
&  =\left\Vert \left(  \sum_{x}\frac{1}{\sqrt{\left\vert X\right\vert }%
}\left\vert x\right\rangle \left\langle x\right\vert _{X}\otimes\left\vert
x\right\rangle \left\langle x\right\vert _{B}\right)  \left(  \sqrt{\tau_{X}%
}\otimes\frac{1}{\sqrt{\left\vert X\right\vert }}I_{B}\right)  \right\Vert
_{1}^{2}\nonumber\\
&  =\left[  \frac{1}{\left\vert X\right\vert }\left\Vert \left(  \sum
_{x}\left\vert x\right\rangle \left\langle x\right\vert _{X}\otimes\left\vert
x\right\rangle \left\langle x\right\vert _{B}\right)  \left(  \sqrt{\tau_{X}%
}\otimes I_{B}\right)  \right\Vert _{1}\right]  ^{2}\nonumber\\
&  =\left[  \frac{1}{\left\vert X\right\vert }\sum_{x}\left\Vert \left\vert
x\right\rangle \left\langle x\right\vert _{X}\sqrt{\tau_{X}}\right\Vert
_{1}\right]  ^{2}\nonumber\\
&  =\left[  \frac{1}{\left\vert X\right\vert }\sum_{x}\sqrt{\left\langle
x\right\vert \tau\left\vert x\right\rangle }\right]  ^{2}\nonumber\\
&  \leq\frac{1}{\left\vert X\right\vert }\sum_{x}\left\langle x\right\vert
\tau\left\vert x\right\rangle \nonumber\\
&  =\frac{1}{\left\vert X\right\vert }.
\end{align}
Choosing $\tau_{X}$ maximally mixed then achieves the upper bound, i.e.,
\begin{align}
\sup_{\tau_{X}}F\left(  \overline{\Phi}_{XB},\tau_{X}\otimes\overline{\Phi
}_{B}\right)   &  =F\left(  \overline{\Phi}_{XB},\pi_{X}\otimes\overline{\Phi
}_{B}\right)  \\
&  =\frac{1}{\left\vert X\right\vert }.
\end{align}

\end{proof}

The following proposition gives a simple proof of the main result of
\cite{FR14}\ when the tripartite state of interest is pure:

\begin{proposition}
[Approximate q.~Markov chain]\label{prop:approx-QMC}The conditional
mutual information $I( A;B|C) _{\psi}$ of a pure tripartite state $\psi_{ABC}$
has the following lower bound:%
\begin{equation}
I( A;B|C) _{\psi}\geq-\log F( A;B|C) _{\psi}.
\end{equation}

\end{proposition}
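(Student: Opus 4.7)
The plan is to exploit the duality of Proposition~\ref{prop:duality-f} together with a standard relation between quantum relative entropy and fidelity. For a pure tripartite state both sides of the inequality collapse to simpler quantities involving only the reduced state $\psi_{AB}$, at which point the claim reduces to a well-known scalar inequality.

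First, I would simplify the left-hand side by using the pure-state entropic identities $H(AC)_\psi=H(B)_\psi$, $H(BC)_\psi=H(A)_\psi$, and $H(ABC)_\psi=0$ to obtain
\begin{equation}
I(A;B|C)_\psi = H(A)_\psi+H(B)_\psi-H(AB)_\psi = I(A;B)_\psi = D(\psi_{AB}\Vert\psi_A\otimes\psi_B).
\end{equation}

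To simplify the right-hand side in a matching way, I would pad $\psi_{ABC}$ with a one-dimensional ancilla and set $\psi_{ABCD}\equiv\psi_{ABC}\otimes|0\rangle\langle 0|_D$, which is a four-party pure state. Proposition~\ref{prop:duality-f} then gives $F(A;B|C)_\psi = F(A;B|D)_{\psi_{ABCD}}$, and since $\psi_{ABD} = \psi_{AB}\otimes|0\rangle\langle 0|_D$ is a tensor product across the cut $AB:D$, Proposition~\ref{prop:condition-on-product} yields
\begin{equation}
F(A;B|C)_\psi = \sup_{\tau_A} F(\psi_{AB},\,\tau_A\otimes\psi_B).
\end{equation}

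Combining these two reductions, the proposition becomes
\begin{equation}
D(\psi_{AB}\Vert\psi_A\otimes\psi_B)\geq -\log\sup_{\tau_A}F(\psi_{AB},\tau_A\otimes\psi_B),
\end{equation}
which I would establish by choosing the suboptimal $\tau_A=\psi_A$ on the right and invoking the standard inequality $D(\rho\Vert\sigma)\geq -\log F(\rho,\sigma)$. This bound is immediate from monotonicity in the order of the sandwiched R\'enyi relative entropy, using the identification $-\log F(\rho,\sigma)=\widetilde{D}_{1/2}(\rho\Vert\sigma)$ and the limit $\widetilde{D}_1(\rho\Vert\sigma)=D(\rho\Vert\sigma)$. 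No step is a genuine obstacle; the only minor subtlety is padding by a trivial $D$ system in the duality step, which is legitimate since Proposition~\ref{prop:duality-f} applies to any four-party pure state.
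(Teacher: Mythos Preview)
Your proposal is correct and follows essentially the same route as the paper: pad with a trivial purifying system $D$, use duality (Proposition~\ref{prop:duality-f}) and the product-conditioning reduction (Proposition~\ref{prop:condition-on-product}) to rewrite $F(A;B|C)_\psi$ as $\sup_{\tau_A}F(\psi_{AB},\tau_A\otimes\psi_B)$, then specialize to $\tau_A=\psi_A$ and invoke the sandwiched R\'enyi monotonicity bound $D(\rho\Vert\sigma)\geq -\log F(\rho,\sigma)$. The only cosmetic difference is that you reduce $I(A;B|C)_\psi$ to $I(A;B)_\psi$ via the pure-state entropic identities directly, whereas the paper phrases this step as duality of the conditional mutual information through the same auxiliary system~$D$; these are equivalent.
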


\begin{proof}
Let $\varphi_{D}$ be a pure state on an auxiliary system $D$, so that
$\left\vert \psi\right\rangle _{ABC}\otimes\left\vert \varphi\right\rangle
_{D}$ is a purification of $\left\vert \psi\right\rangle _{ABC}$. Consider the
following chain of inequalities:%
\begin{align}
I( A;B|C) _{\psi}  &  =I( A;B|D) _{\psi\otimes\varphi}\\
&  =I( A;B) _{\psi}\\
&  \geq-\log F\left(  \psi_{AB},\psi_{A}\otimes\psi_{B}\right) \label{eq:monotone-renyi-step}\\
&  \geq-\log F( A;B) _{\psi}\\
&  =-\log F( A;B|D) _{\psi\otimes\varphi}\\
&  =-\log F( A;B|C) _{\psi}.
\end{align}
The first equality follows from duality of conditional mutual information. The
second follows because system $D$ is product with systems $A$ and $B$. The
first inequality follows from monotonicity of the sandwiched R\'{e}nyi
relative entropies \cite[Theorem 7]{MDSFT13}:%
\begin{equation}
\widetilde{D}_{\alpha}(  \rho\Vert\sigma)  \leq\widetilde{D}%
_{\beta}(  \rho\Vert\sigma) \label{eq:renyi-monotone-sandwiched} ,
\end{equation}
for states $\rho$ and $\sigma$ and R\'{e}nyi parameters $\alpha$ and $\beta$
such that $0\leq\alpha\leq\beta$. Recall that the sandwiched R\'enyi relative entropy is defined as \cite{MDSFT13,WWY13}
\begin{equation}
\widetilde{D}_{\alpha}(  \rho\Vert\sigma) \equiv
\frac{2 \alpha}{\alpha-1}
\log \left\Vert \sigma^{(1-\alpha)/2\alpha} \rho^{1/2}
\right\Vert_{2 \alpha}
\end{equation}
whenever $\operatorname{supp}(\rho) \subseteq 
\operatorname{supp}(\sigma)$, and it is equal to
$+\infty$ otherwise. The following limit is known
\cite{MDSFT13,WWY13}:
\begin{equation}
\lim_{\alpha \to 1}\widetilde{D}_{\alpha}(  \rho\Vert\sigma) = D(  \rho\Vert\sigma),
\end{equation}
where the quantum relative entropy is defined as
$D(\rho\Vert\sigma) \equiv \operatorname{Tr}
\{\rho [\log \rho - \log \sigma]  \}$ whenever 
$\operatorname{supp}(\rho) \subseteq 
\operatorname{supp}(\sigma)$, and it is equal to
$+\infty$ otherwise.
To arrive at \eqref{eq:monotone-renyi-step}, we apply
\eqref{eq:renyi-monotone-sandwiched} with the choices $\alpha=1/2$,
$\beta=1$, $\rho=\psi_{AB}$, and $\sigma=\psi_{A}\otimes\psi_{B}$. The second
inequality follows by optimizing over states on system $A$ and applying the
definition in (\ref{eq:sandwich-similar}). The second-to-last equality follows
from Proposition~\ref{prop:condition-on-product}\ and the last from
Proposition~\ref{prop:duality-f}.
\end{proof}

\section{Geometric squashed entanglement}

In this section, we formally define the geometric squashed entanglement of a
bipartite state $\rho_{AB}$, and we prove that it obeys the properties claimed
in Section~\ref{sec:summary}.

\begin{definition}
[Geometric squashed entanglement]\label{def:gse}The geometric squashed
entanglement of a bipartite state $\rho_{AB}$ is defined as follows:%
\begin{equation}
E_{F}^{\operatorname{sq}}( A;B) _{\rho}\equiv-\frac{1}{2}\log
F^{\operatorname{sq}}( A;B) _{\rho},
\end{equation}
where%
\begin{align}
  F^{\operatorname{sq}}( A;B) _{\rho}
&  \equiv\sup_{\omega_{ABE}}\left\{  F( A;B|E) _{\rho}:\rho_{AB}%
=\operatorname{Tr}_{E}\left\{  \omega_{ABE}\right\}  \right\}
\end{align}

\end{definition}

The geometric squashed entanglement can equivalently be written in terms of an
optimization over \textquotedblleft squashing channels\textquotedblright%
\ acting on a purifying system of the original state (cf.~\cite[Eq.~(3)]{CW04}):

\begin{proposition}
Let $\rho_{AB}$ be a bipartite state and let $\left\vert \psi\right\rangle
_{ABE^{\prime}}$ be a fixed purification of it. Then%
\begin{equation}
F^{\operatorname{sq}}( A;B) _{\rho}=\sup_{\mathcal{S}_{E^{\prime}\rightarrow
E}}F( A;B|E) _{\mathcal{S}\left(  \psi\right)  },
\end{equation}
where the optimization is over quantum channels $\mathcal{S}_{E^{\prime
}\rightarrow E}$.
\end{proposition}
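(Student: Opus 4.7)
The plan is to establish equality by showing that the two optimizations range over the same collection of tripartite states; the underlying fact is the standard correspondence between extensions of $\rho_{AB}$ and channels acting on the purifying system of a fixed purification.

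First, for the direction $\sup_{\mathcal{S}_{E^{\prime}\rightarrow E}}F(A;B|E)_{\mathcal{S}(\psi)}\leq F^{\operatorname{sq}}(A;B)_{\rho}$, I would observe that for any channel $\mathcal{S}_{E^{\prime}\rightarrow E}$, the state $\omega_{ABE}\equiv\mathcal{S}_{E^{\prime}\rightarrow E}(\psi_{ABE^{\prime}})$ satisfies $\operatorname{Tr}_{E}\{\omega_{ABE}\}=\operatorname{Tr}_{E^{\prime}}\{\psi_{ABE^{\prime}}\}=\rho_{AB}$, since $\mathcal{S}_{E^{\prime}\rightarrow E}$ is trace-preserving and acts only on the purifying system. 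Hence $\omega_{ABE}$ is an admissible extension of $\rho_{AB}$, so $F(A;B|E)_{\omega}$ appears among the values being optimized in Definition~\ref{def:gse}. Taking the supremum over $\mathcal{S}_{E^{\prime}\rightarrow E}$ gives the desired inequality.

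For the reverse direction $F^{\operatorname{sq}}(A;B)_{\rho}\leq\sup_{\mathcal{S}_{E^{\prime}\rightarrow E}}F(A;B|E)_{\mathcal{S}(\psi)}$, I would take an arbitrary extension $\omega_{ABE}$ of $\rho_{AB}$ and let $|\phi\rangle_{ABEF}$ be any purification of $\omega_{ABE}$. Since both $|\psi\rangle_{ABE^{\prime}}$ and $|\phi\rangle_{ABEF}$ are purifications of $\rho_{AB}$, the standard uniqueness-of-purifications result (possibly after enlarging $F$ with an ancilla so that $\dim(EF)\geq\dim(E^{\prime})$) supplies a linear isometry $V_{E^{\prime}\to EF}$ with $|\phi\rangle_{ABEF}=(I_{AB}\otimes V_{E^{\prime}\to EF})|\psi\rangle_{ABE^{\prime}}$. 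Defining the CPTP map $\mathcal{S}_{E^{\prime}\rightarrow E}(\cdot)\equiv\operatorname{Tr}_{F}\{V_{E^{\prime}\to EF}(\cdot)V_{E^{\prime}\to EF}^{\dag}\}$ then gives $\mathcal{S}_{E^{\prime}\rightarrow E}(\psi_{ABE^{\prime}})=\omega_{ABE}$, so $F(A;B|E)_{\omega}=F(A;B|E)_{\mathcal{S}(\psi)}$ is attained by a particular channel in the supremum on the right-hand side.

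There is no serious obstacle here; the only mild subtlety is the dimension bookkeeping required to invoke uniqueness of purifications, which is routinely handled by padding the purifying system with an ancilla. Once the correspondence between channels $\mathcal{S}_{E^{\prime}\rightarrow E}$ acting on the fixed purification and extensions $\omega_{ABE}$ of $\rho_{AB}$ is in place, both suprema range over the same set of fidelities of recovery, and the claimed equality follows immediately.
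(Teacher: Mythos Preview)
Your proposal is correct and follows essentially the same approach as the paper: both directions proceed by identifying extensions of $\rho_{AB}$ with channels applied to the purifying system of a fixed purification, using the uniqueness-of-purifications theorem to construct the requisite squashing channel via a Stinespring-type isometry followed by a partial trace. If anything, your treatment is slightly more explicit than the paper's about the dimension bookkeeping needed to invoke the isometry relating two purifications.
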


\begin{proof}
We first prove the inequality $F^{\operatorname{sq}}( A;B) _{\rho}\geq
\sup_{\mathcal{S}_{E^{\prime}\rightarrow E}}F( A;B|E) _{\mathcal{S}\left(
\psi\right)  }$. Indeed, for a given purification $\psi_{ABE^{\prime}}$ and
squashing channel $\mathcal{S}_{E^{\prime}\rightarrow E}$, the state
$\mathcal{S}_{E^{\prime}\rightarrow E}\left(  \psi_{ABE^{\prime}}\right)  $ is
an extension of $\rho_{AB}$. So it follows by definition that%
\begin{equation}
F( A;B|E) _{\mathcal{S}\left(  \psi\right)  }\leq F^{\operatorname{sq}}( A;B)
_{\rho}.
\end{equation}
Since the choice of squashing channel was arbitrary, the first inequality follows.

We now prove the other inequality%
\begin{equation}
F^{\operatorname{sq}}( A;B) _{\rho}\leq\sup_{\mathcal{S}_{E^{\prime
}\rightarrow E}}F( A;B|E) _{\mathcal{S}\left(  \psi\right)  }.
\label{eq:other-ineq-sq-chan}%
\end{equation}
Let $\omega_{ABE}$ be an extension of $\rho_{AB}$. Let $\varphi_{ABEE_{1}}$ be
a purification of $\omega_{ABE}$, which is in turn also a purification of
$\rho_{AB}$. Since all purifications are related by isometries acting on the
purifying system, we know that there exists an isometry $U_{E^{\prime
}\rightarrow EE_{1}}^{\omega}$ (depending on $\omega$) such that%
\begin{equation}
\left\vert \varphi\right\rangle _{ABEE_{1}}=U_{E^{\prime}\rightarrow EE_{1}%
}^{\omega}\left\vert \psi\right\rangle _{ABE^{\prime}}.
\end{equation}
Furthermore, we know that%
\begin{align}
\omega_{ABE}  &  =\operatorname{Tr}_{E_{1}}\left\{  U_{E^{\prime}\rightarrow
EE_{1}}^{\omega}\psi_{ABE^{\prime}}\left(  U_{E^{\prime}\rightarrow EE_{1}%
}^{\omega}\right)  ^{\dag}\right\} \\
&  \equiv\mathcal{S}_{E^{\prime}\rightarrow E}^{\omega}\left(  \psi
_{ABE^{\prime}}\right)  ,
\end{align}
where we define the squashing channel $\mathcal{S}_{E^{\prime}\rightarrow
E}^{\omega}$ from the isometry $U_{E^{\prime}\rightarrow EE_{1}}^{\omega}$. So
this implies that%
\begin{align}
F( A;B|E) _{\omega}  &  =F( A;B|E) _{\mathcal{S}^{\omega}\left(  \psi\right)
}\\
&  \leq\sup_{\mathcal{S}_{E^{\prime}\rightarrow E}}F( A;B|E) _{\mathcal{S}%
\left(  \psi\right)  }.
\end{align}
Since the inequality above holds for all extensions, the inequality in
(\ref{eq:other-ineq-sq-chan}) follows.
\end{proof}

The following statement is a direct consequence of
Proposition~\ref{prop:FoR-mono-local-ops}:

\begin{corollary}
\label{prop:gse-mono-lo}The geometric squashed entanglement is monotone with
respect to local operations on both systems $A$ and $B$:%
\begin{equation}
E_{F}^{\operatorname{sq}}( A;B) _{\rho}\geq E_{F}^{\operatorname{sq}}\left(
A^{\prime};B^{\prime}\right)  _{\tau}, \label{eq:mono-LO}%
\end{equation}
where $\tau_{A^{\prime}B^{\prime}}\equiv\left(  \mathcal{N}_{A\rightarrow
A^{\prime}}\otimes\mathcal{M}_{B\rightarrow B^{\prime}}\right)  \left(
\rho_{AB}\right)  $ and $\mathcal{N}_{A\rightarrow A^{\prime}}$ and
$\mathcal{M}_{B\rightarrow B^{\prime}}$ are local quantum channels. This is
equivalent to%
\begin{equation}
F^{\operatorname{sq}}( A;B) _{\rho}\leq F^{\operatorname{sq}}\left(
A^{\prime};B^{\prime}\right)  _{\tau}. \label{eq:mono-LO-FoR}%
\end{equation}

\end{corollary}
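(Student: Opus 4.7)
The plan is to lift the monotonicity of the fidelity of recovery (Proposition~\ref{prop:FoR-mono-local-ops}) through the extension supremum that defines $F^{\operatorname{sq}}$. Concretely, I will first prove the $F^{\operatorname{sq}}$ form in (\ref{eq:mono-LO-FoR}), and then obtain (\ref{eq:mono-LO}) by applying $-\tfrac{1}{2}\log(\cdot)$ to both sides (which reverses the inequality because $-\log$ is monotone decreasing).

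To prove (\ref{eq:mono-LO-FoR}), I would let $\omega_{ABE}$ be an arbitrary extension of $\rho_{AB}$, so that $\operatorname{Tr}_E\{\omega_{ABE}\}=\rho_{AB}$. Define
\begin{equation}
\omega'_{A'B'E} \equiv \left( \mathcal{N}_{A\rightarrow A'}\otimes\mathcal{M}_{B\rightarrow B'}\otimes\mathrm{id}_E\right)(\omega_{ABE}).
\end{equation}
Since the local channels commute with the partial trace over $E$, we have
\begin{equation}
\operatorname{Tr}_E\{\omega'_{A'B'E}\} = \left(\mathcal{N}_{A\rightarrow A'}\otimes\mathcal{M}_{B\rightarrow B'}\right)(\rho_{AB}) = \tau_{A'B'},
\end{equation}
so $\omega'_{A'B'E}$ is a valid extension of $\tau_{A'B'}$. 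Proposition~\ref{prop:FoR-mono-local-ops} applied to $\omega_{ABE}$ with $C$ replaced by $E$ gives
\begin{equation}
F(A;B|E)_{\omega} \leq F(A';B'|E)_{\omega'}.
\end{equation}

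Since $\omega'$ is a particular extension of $\tau_{A'B'}$, the right-hand side is bounded above by $F^{\operatorname{sq}}(A';B')_{\tau}$. This shows $F(A;B|E)_{\omega} \leq F^{\operatorname{sq}}(A';B')_{\tau}$ for every extension $\omega_{ABE}$ of $\rho_{AB}$. Taking the supremum over all such extensions on the left-hand side yields (\ref{eq:mono-LO-FoR}), and applying $-\tfrac{1}{2}\log$ gives (\ref{eq:mono-LO}). There is no real obstacle here; the corollary is essentially bookkeeping on top of Proposition~\ref{prop:FoR-mono-local-ops}, with the only thing to notice being that the partial trace over the squashing system $E$ commutes with the local channels acting on $A$ and $B$, so an extension of $\rho_{AB}$ naturally induces an extension of $\tau_{A'B'}$.
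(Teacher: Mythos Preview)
Your proof is correct and follows essentially the same approach as the paper: pick an arbitrary extension $\omega_{ABE}$ of $\rho_{AB}$, push it through the local channels to get an extension of $\tau_{A'B'}$, apply Proposition~\ref{prop:FoR-mono-local-ops}, bound by the supremum defining $F^{\operatorname{sq}}(A';B')_{\tau}$, and then take the supremum over extensions on the left. The only difference is cosmetic---you make explicit the step that the induced state is an extension of $\tau_{A'B'}$ (via commutation of the local channels with $\operatorname{Tr}_E$), which the paper leaves implicit.
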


\begin{proof}
Let $\omega_{ABE}$ be an arbitrary extension of $\rho_{AB}$ and let%
\begin{equation}
\theta_{A^{\prime}B^{\prime}E}\equiv\left(  \mathcal{N}_{A\rightarrow
A^{\prime}}\otimes\mathcal{M}_{B\rightarrow B^{\prime}}\right)  \left(
\omega_{ABE}\right)  .
\end{equation}
Then by the monotonicity of fidelity of recovery with respect to local quantum
operations, we find that%
\begin{equation}
F( A;B|E) _{\omega}\leq F\left(  A^{\prime};B^{\prime}|E\right)  _{\theta}\leq
F^{\operatorname{sq}}\left(  A^{\prime};B^{\prime}\right)  _{\tau}.
\end{equation}
Since the inequality holds for an arbitrary extension $\omega_{ABE}$ of
$\rho_{AB}$, we can conclude that (\ref{eq:mono-LO-FoR}) holds and
(\ref{eq:mono-LO}) follows by definition.
\end{proof}

\begin{proposition}
The geometric squashed entanglement is invariant with respect to local
isometries, in the sense that%
\begin{equation}
E_{F}^{\operatorname{sq}}( A;B) _{\rho}=E_{F}^{\operatorname{sq}}(A^{\prime
};B^{\prime})_{\sigma},
\end{equation}
where%
\begin{equation}
\sigma_{A^{\prime}B^{\prime}}\equiv\left(  \mathcal{U}_{A\rightarrow
A^{\prime}}\otimes\mathcal{V}_{B\rightarrow B^{\prime}}\right)  \left(
\rho_{AB}\right)
\end{equation}
and $\mathcal{U}_{A\rightarrow A^{\prime}}$ and $\mathcal{V}_{B\rightarrow
B^{\prime}}$ are isometric quantum channels.
\end{proposition}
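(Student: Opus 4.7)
The plan is to sandwich the two quantities using the 1-LOCC monotonicity stated in Corollary~\ref{prop:gse-mono-lo}, which in particular implies that $E_{F}^{\operatorname{sq}}(A;B)$ is monotone non-increasing under arbitrary local CPTP maps on systems $A$ and $B$.

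First I would dispatch the easy direction. Since isometric channels are a fortiori CPTP, and since $\sigma_{A^{\prime}B^{\prime}}=(\mathcal{U}_{A\to A^{\prime}}\otimes\mathcal{V}_{B\to B^{\prime}})(\rho_{AB})$, a direct application of Corollary~\ref{prop:gse-mono-lo} yields
\begin{equation}
E_{F}^{\operatorname{sq}}(A;B)_{\rho} \geq E_{F}^{\operatorname{sq}}(A^{\prime};B^{\prime})_{\sigma}.
\end{equation}

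For the reverse inequality I would reuse the trick already employed in the proof of Proposition~\ref{prop:FoR-local-iso}: define CPTP maps $\mathcal{T}^{\mathcal{U}}_{A^{\prime}\to A}$ and $\mathcal{T}^{\mathcal{V}}_{B^{\prime}\to B}$ exactly as in (\ref{eq:T-maps}), so that $\mathcal{T}^{\mathcal{U}}_{A^{\prime}\to A}\circ\mathcal{U}_{A\to A^{\prime}}=\operatorname{id}_{A}$ and $\mathcal{T}^{\mathcal{V}}_{B^{\prime}\to B}\circ\mathcal{V}_{B\to B^{\prime}}=\operatorname{id}_{B}$. Then
\begin{equation}
\bigl(\mathcal{T}^{\mathcal{U}}_{A^{\prime}\to A}\otimes\mathcal{T}^{\mathcal{V}}_{B^{\prime}\to B}\bigr)(\sigma_{A^{\prime}B^{\prime}}) = \rho_{AB},
\end{equation}
and invoking Corollary~\ref{prop:gse-mono-lo} once more with these local CPTP maps acting on $\sigma_{A^{\prime}B^{\prime}}$ gives
\begin{equation}
E_{F}^{\operatorname{sq}}(A^{\prime};B^{\prime})_{\sigma} \geq E_{F}^{\operatorname{sq}}(A;B)_{\rho}.
\end{equation}
Combining the two inequalities yields the claimed equality.

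Since all the real work has already been done in proving the local isometric invariance of the fidelity of recovery (Proposition~\ref{prop:FoR-local-iso}) and the resulting 1-LOCC monotonicity of the geometric squashed entanglement (Corollary~\ref{prop:gse-mono-lo}), there is no substantive obstacle here; the argument collapses to a two-line sandwich. The only minor point worth noting explicitly is that the auxiliary maps $\mathcal{T}^{\mathcal{U}}$ and $\mathcal{T}^{\mathcal{V}}$ genuinely invert the corresponding isometric channels on their image, which is immediate because the isometry condition $U^{\dag}U=I$ forces the correction term in (\ref{eq:T-maps}) to vanish whenever the input lies in the image of the isometric channel.
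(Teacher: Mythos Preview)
Your proof is correct and matches the paper's argument essentially line for line: both directions are obtained from Corollary~\ref{prop:gse-mono-lo}, the forward one by applying the isometric channels themselves and the reverse one by applying the CPTP left-inverses $\mathcal{T}^{\mathcal{U}}$, $\mathcal{T}^{\mathcal{V}}$ from (\ref{eq:T-maps}). The only cosmetic difference is your additional closing remark explaining why the correction term in (\ref{eq:T-maps}) vanishes on the image of the isometry.
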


\begin{proof}
From Corollary~\ref{prop:gse-mono-lo}, we can conclude that%
\begin{equation}
E_{F}^{\operatorname{sq}}( A;B) _{\rho}\geq E_{F}^{\operatorname{sq}%
}(A^{\prime};B^{\prime})_{\sigma}.
\end{equation}
Now let $\mathcal{T}_{A^{\prime}\rightarrow A}^{\mathcal{U}}$ and
$\mathcal{T}_{B^{\prime}\rightarrow B}^{\mathcal{V}}$ be the quantum channels
defined in (\ref{eq:T-maps}). Again using Corollary~\ref{prop:gse-mono-lo}, we
find that%
\begin{align}
E_{F}^{\operatorname{sq}}(A^{\prime};B^{\prime})_{\sigma}  &  \geq
E_{F}^{\operatorname{sq}}(A;B)_{\left(  \mathcal{T}^{\mathcal{U}}%
\otimes\mathcal{T}^{\mathcal{V}}\right)  \left(  \sigma\right)  }\\
&  =E_{F}^{\operatorname{sq}}( A;B) _{\rho},
\end{align}
where the equality follows from (\ref{eq:invert-isometry-A}%
)-(\ref{eq:invert-isometry-B}).
\end{proof}

\begin{proposition}
\label{prop:inv-cc}The geometric squashed entanglement obeys the following
classical communication relations:%
\begin{align}
E_{F}^{\operatorname{sq}}\left(  AX_{A};B\right)  _{\rho}  &  \leq
E_{F}^{\operatorname{sq}}\left(  AX_{A};BX_{B}\right)  _{\rho}\\
&  =E_{F}^{\operatorname{sq}}\left(  A;BX_{B}\right)  _{\rho},
\end{align}
for a state $\rho_{X_{A}X_{B}AB}$ defined as%
\begin{equation}
\rho_{X_{A}X_{B}AB}\equiv\sum_{x}p_{X}( x) \left\vert x\right\rangle
\left\langle x\right\vert _{X_{A}}\otimes\left\vert x\right\rangle
\left\langle x\right\vert _{X_{B}}\otimes\rho_{AB}^{x}.
\end{equation}
These are equivalent to%
\begin{align}
F^{\operatorname{sq}}\left(  AX_{A};B\right)  _{\rho}  &  \geq
F^{\operatorname{sq}}\left(  AX_{A};BX_{B}\right)  _{\rho}\\
&  =F^{\operatorname{sq}}\left(  A;BX_{B}\right)  _{\rho}.
\end{align}

\end{proposition}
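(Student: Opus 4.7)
The plan is to establish each of the three relations separately. The inequality $E_{F}^{\operatorname{sq}}(AX_{A};B)_{\rho} \leq E_{F}^{\operatorname{sq}}(AX_{A};BX_{B})_{\rho}$ and one direction of the claimed equality, namely $E_{F}^{\operatorname{sq}}(AX_{A};BX_{B})_{\rho} \geq E_{F}^{\operatorname{sq}}(A;BX_{B})_{\rho}$, both follow immediately from Corollary~\ref{prop:gse-mono-lo} applied to a local partial trace. In the first case, $\operatorname{Tr}_{X_{B}}$ is a local channel on Bob's side that sends $\rho_{AX_{A}BX_{B}}$ to $\rho_{AX_{A}B}$; in the second, $\operatorname{Tr}_{X_{A}}$ is a local channel on Alice's side that sends $\rho_{AX_{A}BX_{B}}$ to $\rho_{ABX_{B}}$. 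Monotonicity of $F^{\operatorname{sq}}$ under local channels then does all the work.

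The nontrivial task is the reverse direction $F^{\operatorname{sq}}(A;BX_{B})_{\rho} \leq F^{\operatorname{sq}}(AX_{A};BX_{B})_{\rho}$. The idea is to show that any extension $\omega_{ABX_{B}E}$ of $\rho_{ABX_{B}}$ can be upgraded to an extension of $\rho_{AX_{A}BX_{B}}$ whose fidelity of recovery is at least as large. Using Proposition~\ref{prop:FoR-mono-local-ops} with a dephasing channel on $X_{B}$ (a local operation on Bob's side), I may assume without loss of generality that $X_{B}$ is classical throughout $\omega$, so that
\begin{equation}
\omega_{ABX_{B}E} = \sum_{x} p_{X}(x)\,|x\rangle\langle x|_{X_{B}}\otimes\omega_{ABE}^{x},
\end{equation}
with each $\omega_{ABE}^{x}$ an extension of $\rho_{AB}^{x}$. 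I then define the candidate extension
\begin{equation}
\omega'_{AX_{A}BX_{B}EX'} \equiv \sum_{x} p_{X}(x)\,|x\rangle\langle x|_{X_{A}}\otimes|x\rangle\langle x|_{X_{B}}\otimes|x\rangle\langle x|_{X'}\otimes\omega_{ABE}^{x},
\end{equation}
which is a valid extension of $\rho_{AX_{A}BX_{B}}$ with enlarged squashing environment $E' = EX'$. The critical feature is that the environment now carries a classical copy $X'$ of the label $x$.

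Given any recovery map $\mathcal{R}_{E\to EA}$ used for $F(A;BX_{B}|E)_{\omega}$, I lift it to a recovery $\mathcal{R}'_{EX'\to EX'AX_{A}}$ that measures $X'$ in the computational basis, writes the outcome into the new $X_{A}$ register, and applies $\mathcal{R}$ to $E$. Since both $\omega'_{AX_{A}BX_{B}EX'}$ and $\mathcal{R}'(\omega'_{BX_{B}EX'})$ are block-diagonal in the classical registers $X_{A}X_{B}X'$ with matching labels in each block, the fidelity decomposes via the standard identity for block-diagonal states to give
\begin{equation}
\sqrt{F(\omega'_{AX_{A}BX_{B}EX'},\mathcal{R}'(\omega'_{BX_{B}EX'}))}
= \sum_{x} p_{X}(x)\sqrt{F(\omega_{ABE}^{x},\mathcal{R}(\omega_{BE}^{x}))}
= \sqrt{F(\omega_{ABX_{B}E},\mathcal{R}(\omega_{BX_{B}E}))}.
\end{equation}
Since $\omega'$ is a particular extension of $\rho_{AX_{A}BX_{B}}$, the left-hand side is at most $F^{\operatorname{sq}}(AX_{A};BX_{B})_{\rho}$; taking the supremum over $\mathcal{R}$ and then over extensions $\omega$ yields the claim. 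The main obstacle is identifying the correct upgraded extension: a recovery map for $F(AX_{A};BX_{B}|\cdot)$ has no access to the $X_{B}$ register and so cannot produce a copy $X_{A}$ correlated with $X_{B}$ unless the squashing environment itself carries a classical copy of the label. Adjoining $X'$ to the environment circumvents this obstruction, after which the fidelity computation becomes a routine block-diagonal calculation.
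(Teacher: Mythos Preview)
Your proof is correct and follows essentially the same route as the paper: the two easy inequalities come from monotonicity under local partial traces, and the nontrivial inequality $F^{\operatorname{sq}}(A;BX_{B})_{\rho}\leq F^{\operatorname{sq}}(AX_{A};BX_{B})_{\rho}$ is obtained by adjoining a classical copy $X'$ (the paper calls it $X_{E}$) to the environment, building the same block-diagonal extension, and using a recovery map that copies $X'$ into $X_{A}$ before applying the original $\mathcal{R}_{E\to AE}$, so that the block-diagonal fidelity identity yields exact equality with $F(\omega_{ABX_{B}E},\mathcal{R}(\omega_{BX_{B}E}))$. Your preliminary dephasing step on $X_{B}$ makes explicit a point the paper leaves implicit---namely that one may restrict to extensions that are already block-diagonal in $X_{B}$---and is a clean way to justify that reduction.
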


\begin{proof}
From monotonicity with respect to local operations, we find that%
\begin{align}
F^{\operatorname{sq}}\left(  AX_{A};BX_{B}\right)  _{\rho}  &  \leq
F^{\operatorname{sq}}\left(  AX_{A};B\right)  _{\rho},\\
F^{\operatorname{sq}}\left(  AX_{A};BX_{B}\right)  _{\rho}  &  \leq
F^{\operatorname{sq}}\left(  A;BX_{B}\right)  _{\rho}.
\end{align}
We now give a proof of the following inequality:%
\begin{equation}
F^{\operatorname{sq}}\left(  A;BX_{B}\right)  _{\rho}\leq F^{\operatorname{sq}%
}\left(  AX_{A};BX_{B}\right)  _{\rho}.
\end{equation}
Let%
\begin{equation}
\rho_{X_{A}X_{B}X_{E}ABE}=
\sum_{x}p_{X}( x) \left\vert x\right\rangle \left\langle x\right\vert _{X_{A}%
}\otimes\left\vert x\right\rangle \left\langle x\right\vert _{X_{B}}%
\otimes\left\vert x\right\rangle \left\langle x\right\vert _{X_{E}}\otimes
\rho_{ABE}^{x},
\end{equation}
where $\rho_{ABE}^{x}$ extends $\rho_{AB}^{x}$. Observe that $\rho_{X_{A}%
X_{B}X_{E}ABE}$ is an extension of $\rho_{X_{A}X_{B}AB}$ and $\rho_{X_{B}ABE}$
is an arbitrary extension of $\rho_{X_{B}AB}$. Let $\mathcal{R}_{E\rightarrow
AE}$ be an arbitrary recovery channel and let $\mathcal{R}_{EX_{E}\rightarrow
AX_{A}EX_{E}}$ be a channel that copies the value in $X_{E}$ to $X_{A}$ and
applies $\mathcal{R}_{E\rightarrow AE}$ to system $E$. Consider that

\begin{align}
&  F\left(  \rho_{ABX_{B}E},\mathcal{R}_{E\rightarrow AE}\left(  \rho
_{BX_{B}E}\right)  \right)  \\
&  =\left[  \sum_{x}p_{X}(  x)  \sqrt{F}\left(  \rho_{ABE}%
^{x},\mathcal{R}_{E\rightarrow AE}\left(  \rho_{BE}^{x}\right)  \right)
\right]  ^{2}\\
&  =F\left(  \sum_{x}p_{X}(  x)  \left\vert xxx\right\rangle
\left\langle xxx\right\vert _{X_{A}X_{B}X_{E}}\otimes\rho_{ABE}^{x},\sum
_{x}p_{X}(  x)  \left\vert xxx\right\rangle \left\langle
xxx\right\vert _{X_{A}X_{B}X_{E}}\otimes\mathcal{R}_{E\rightarrow AE}\left(
\rho_{BE}^{x}\right)  \right)  \\
&  =F\left(  \rho_{AX_{A}BX_{B}EX_{E}},\mathcal{R}_{EX_{E}\rightarrow
AX_{A}EX_{E}}\left(  \rho_{BX_{B}EX_{E}}\right)  \right)  \\
&  \leq F^{\text{sq}}\left(  AX_{A};BX_{B}\right)  _{\rho}.
\end{align}

The first two equalities are a consequence of the following property of
fidelity:%
\begin{equation}
\sqrt{F}\left(  \tau_{ZS},\omega_{ZS}\right)  =\sum_{z}p_{Z}\left(  z\right)
\sqrt{F}\left(  \tau_{S}^{z},\omega_{S}^{z}\right)  ,
\end{equation}
where%
\begin{align}
\tau_{ZS}  &  \equiv\sum_{z}p_{Z}\left(  z\right)  \left\vert z\right\rangle
\left\langle z\right\vert _{Z}\otimes\tau_{S}^{z},\\
\omega_{ZS}  &  \equiv\sum_{z}p_{Z}\left(  z\right)  \left\vert z\right\rangle
\left\langle z\right\vert _{Z}\otimes\omega_{S}^{z}.
\end{align}
The third equality follows from the description of the map $\mathcal{R}%
_{EX_{E}\rightarrow AX_{A}EX_{E}}$ given above. The last inequality is a
consequence of the definition of $F^{\text{sq}}$ because $\rho_{AX_{A}%
BX_{B}EX_{E}}$ is a particular extension of $\rho_{ABX_{B}E}$ and
$\mathcal{R}_{EX_{E}\rightarrow AX_{A}EX_{E}}$ is a particular recovery map.
Given that the chain of inequalities holds for all recovery maps
$\mathcal{R}_{E\rightarrow AE}$ and extensions $\rho_{ABX_{B}E}$ of
$\rho_{ABX_{B}}$, we can conclude that%
\begin{equation}
F^{\operatorname{sq}}\left(  A;BX_{B}\right)  _{\rho}\leq F^{\operatorname{sq}%
}\left(  AX_{A};BX_{B}\right)  _{\rho}.
\end{equation}

\end{proof}

\begin{remark}
The inequalities in Proposition~\ref{prop:inv-cc}\ demonstrate that the
geometric squashed entanglement is monotone non-increasing with respect to
classical communication from Bob to Alice, but not necessarily the other way
around. The essential idea in establishing the inequality
$F^{\operatorname{sq}}\left(  A;BX_{B}\right)  _{\rho}\leq
F^{\operatorname{sq}}\left(  AX_{A};BX_{B}\right)  _{\rho}$ is to give a copy
of the classical data to the party possessing the extension system and to have
the recovery map give a copy to Alice. It is unclear to us whether the other
inequality $F^{\operatorname{sq}}\left(  AX_{A};B\right)  _{\rho}\leq
F^{\operatorname{sq}}\left(  AX_{A};BX_{B}\right)  _{\rho}$ could be
established, given that the recovery operation only goes from an extension
system to Alice, and so it appears that we have no way of giving a copy of
this classical data to Bob.
\end{remark}

The following theorem is a direct consequence of
Corollary~\ref{prop:gse-mono-lo}\ and Proposition~\ref{prop:inv-cc}:

\begin{theorem}
[1-LOCC\ monotone]\label{thm:locc-monotone}The geometric squashed entanglement
is a 1-LOCC monotone, in the sense that it is monotone non-increasing with
respect to local operations and classical communication from Bob to Alice.
\end{theorem}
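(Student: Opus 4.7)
The plan is to exhibit any one-way LOCC channel from Bob to Alice as a finite composition of maps, each of which either leaves $F^{\operatorname{sq}}$ invariant or weakly increases it. A single round of 1-LOCC from Bob to Alice consists of Bob applying a quantum instrument $\{\mathcal{M}^{x}_{B}\}_{x}$, transmitting the outcome $x$ classically to Alice, and Alice applying a conditional channel $\mathcal{N}^{x}_{A}$; because multi-round 1-LOCC protocols are simply compositions of such rounds (interleaved with additional local channels), it suffices to bound $E_{F}^{\operatorname{sq}}$ under one such round.

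The decomposition I would use is the following four-step chain. First, Bob coherently stores his instrument outcome in a local classical register $X_{B}$, producing the cq state $\rho^{(1)}_{ABX_{B}} \equiv \sum_{x}\mathcal{M}^{x}_{B}(\rho_{AB}) \otimes |x\rangle\langle x|_{X_{B}}$; this is purely a local operation on Bob's side. Second, a classical copy map creates a register $X_{A}$ holding the same value on Alice's side, yielding $\rho^{(2)}_{AX_{A}BX_{B}} \equiv \sum_{x}\mathcal{M}^{x}_{B}(\rho_{AB}) \otimes |x\rangle\langle x|_{X_{A}} \otimes |x\rangle\langle x|_{X_{B}}$; this is the classical communication step. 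Third, Alice applies her conditional channel controlled on $X_{A}$, which is a local operation on Alice's side. Fourth, both classical registers are traced out, another pair of local operations. The output of the chain is the desired state $\Lambda_{AB \to A'B'}(\rho_{AB})$.

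Steps one, three, and four are local quantum channels acting on one party's systems, so Corollary~\ref{prop:gse-mono-lo} (monotonicity under local operations) immediately gives $E_{F}^{\operatorname{sq}} \leq E_{F}^{\operatorname{sq}}$ in the appropriate direction at each such step. For step two, the copy map takes us from the state $\rho^{(1)}_{ABX_{B}}$ to $\rho^{(2)}_{AX_{A}BX_{B}}$, and by the equality part of Proposition~\ref{prop:inv-cc} we have $F^{\operatorname{sq}}(A;BX_{B})_{\rho^{(1)}} = F^{\operatorname{sq}}(AX_{A};BX_{B})_{\rho^{(2)}}$, so this step preserves the quantity exactly. Composing the four bounds yields $E_{F}^{\operatorname{sq}}(A;B)_{\rho} \geq E_{F}^{\operatorname{sq}}(A';B')_{\Lambda(\rho)}$, as required.

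I do not expect a real obstacle here: the substantive work has already been packaged into Corollary~\ref{prop:gse-mono-lo} and Proposition~\ref{prop:inv-cc}, so the only content of this theorem is the decomposition argument above. The one point worth emphasizing is why the proof stops at 1-LOCC rather than extending to full LOCC. Classical communication in the other direction (Alice to Bob) corresponds to the inequality $F^{\operatorname{sq}}(AX_{A};B)_{\rho} \geq F^{\operatorname{sq}}(AX_{A};BX_{B})_{\rho}$ in Proposition~\ref{prop:inv-cc}, which goes the \emph{wrong} way for monotonicity of $E_{F}^{\operatorname{sq}}$, reflecting the asymmetry built into $F(A;B|C)_{\rho}$ under exchange of $A$ and $B$. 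Thus the argument I propose is tight for 1-LOCC, but no strengthening to two-way LOCC is available from these ingredients.
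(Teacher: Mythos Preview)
Your proposal is correct and follows exactly the route the paper intends: the paper's proof is the single sentence that the theorem is a direct consequence of Corollary~\ref{prop:gse-mono-lo} and Proposition~\ref{prop:inv-cc}, and your four-step decomposition is precisely the standard unpacking of that claim. Your closing remark on why the argument does not extend to two-way LOCC also matches the paper's Remark following Proposition~\ref{prop:inv-cc}.
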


\begin{theorem}
[Convexity]\label{thm:convex}The geometric squashed entanglement is convex,
i.e.,%
\begin{equation}
\sum_{x}p_{X}( x) E_{F}^{\operatorname{sq}}( A;B) _{\rho^{x}}\geq
E_{F}^{\operatorname{sq}}( A;B) _{\overline{\rho}},
\label{eq:geo-squash-convex}%
\end{equation}
where%
\begin{equation}
\overline{\rho}_{AB}\equiv\sum_{x}p_{X}( x) \rho_{AB}^{x}.
\end{equation}

\end{theorem}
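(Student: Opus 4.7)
The plan is to reduce the convexity of $E_F^{\operatorname{sq}}$ to a concavity-of-square-root-fidelity statement that I can attack using the classical-register trick via Proposition~\ref{prop:condition-classical}. Since $E_F^{\operatorname{sq}}=-\tfrac12\log F^{\operatorname{sq}}$, the claim \eqref{eq:geo-squash-convex} would follow once I prove the stronger concavity estimate
\begin{equation}
\sqrt{F^{\operatorname{sq}}(A;B)_{\bar\rho}}\;\geq\;\sum_{x}p_{X}(x)\,\sqrt{F^{\operatorname{sq}}(A;B)_{\rho^{x}}},\label{eq:sqrtFsq-concave}
\end{equation}
because applying $-\log$ (which is monotone decreasing) and then Jensen's inequality (for the convex function $-\log$) gives
\[
E_F^{\operatorname{sq}}(A;B)_{\bar\rho}\;\leq\;-\log\sum_{x}p_{X}(x)\sqrt{F^{\operatorname{sq}}(A;B)_{\rho^x}}\;\leq\;\sum_{x}p_{X}(x)\bigl(-\log\sqrt{F^{\operatorname{sq}}(A;B)_{\rho^x}}\bigr)\;=\;\sum_{x}p_{X}(x)\,E_F^{\operatorname{sq}}(A;B)_{\rho^x}.
\]

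To establish \eqref{eq:sqrtFsq-concave} I would bundle individually (nearly) optimal extensions together with a classical flag. For each $x$, pick an extension $\omega^{x}_{ABE}$ of $\rho^{x}_{AB}$ achieving $F(A;B|E)_{\omega^{x}}$ within $\varepsilon$ of $F^{\operatorname{sq}}(A;B)_{\rho^{x}}$ (in finite dimensions, compactness of the set of purifications/extensions allows one to take $\varepsilon=0$, or one may let $\varepsilon\to0$ at the end); without loss of generality all $\omega^{x}_{ABE}$ share a common auxiliary Hilbert space $\mathcal{H}_{E}$ by padding. Now form the classically-indexed state
\[
\omega_{ABEX}\;\equiv\;\sum_{x}p_{X}(x)\,\omega^{x}_{ABE}\otimes|x\rangle\langle x|_{X},
\]
whose marginal on $AB$ is exactly $\bar\rho_{AB}$, so it is an admissible extension of $\bar\rho_{AB}$ with extension system $EX$. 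By the definition of $F^{\operatorname{sq}}$ we have $F^{\operatorname{sq}}(A;B)_{\bar\rho}\geq F(A;B|EX)_{\omega}$, and then Proposition~\ref{prop:condition-classical} (applied with $C$ empty and $X$ playing its classical role in the extension register) yields
\[
\sqrt{F^{\operatorname{sq}}(A;B)_{\bar\rho}}\;\geq\;\sqrt{F(A;B|EX)_{\omega}}\;\geq\;\sum_{x}p_{X}(x)\sqrt{F(A;B|E)_{\omega^{x}}}\;\geq\;\sum_{x}p_{X}(x)\sqrt{F^{\operatorname{sq}}(A;B)_{\rho^{x}}}-\varepsilon',
\]
where $\varepsilon'$ absorbs the near-optimality slack and vanishes in the limit.

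I do not anticipate a genuine obstacle: the content of the proof is entirely the classical-register packaging of the optimizer, and all the heavy lifting has already been done inside Proposition~\ref{prop:condition-classical}, whose $\sqrt{F}$-version is exactly shaped to propagate convex mixtures through the recovery quantity. The only matter requiring a line of care is the attainability of the supremum in the definition of $F^{\operatorname{sq}}$, handled either by compactness in finite dimensions or by an $\varepsilon$-argument as above; no new quantum-information inequality beyond what the paper already establishes is needed.
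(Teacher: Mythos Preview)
Your proof is correct and follows essentially the same approach as the paper: both bundle (near-)optimal extensions $\omega^{x}_{ABE}$ with a classical flag $X$ to obtain an extension of $\bar\rho_{AB}$, and then invoke Proposition~\ref{prop:condition-classical}. The only cosmetic difference is that the paper applies the $I_F$-version \eqref{eq:IF-conditioning-classical} of that proposition directly, whereas you apply the $\sqrt{F}$-version \eqref{eq:FoR-conditioning-classical} and then Jensen---which is exactly how the paper derives \eqref{eq:IF-conditioning-classical} from \eqref{eq:FoR-conditioning-classical} in the first place; note also that in your parenthetical the conditioning system should be identified with $E$, not taken to be empty.
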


\begin{proof}
Let $\rho_{ABE}^{x}$ be an extension of each $\rho_{AB}^{x}$, so that%
\begin{equation}
\omega_{XABE}\equiv\sum_{x}p_{X}( x) \left\vert x\right\rangle \left\langle
x\right\vert _{X}\otimes\rho_{ABE}^{x}%
\end{equation}
is some extension of $\overline{\rho}_{AB}$. Then the definition of
$E_{F}^{\operatorname{sq}}( A;B) _{\overline{\rho}}$ and
Proposition~\ref{prop:condition-classical} give that%
\begin{align}
2E_{F}^{\operatorname{sq}}( A;B) _{\overline{\rho}}  &  \leq I_{F}\left(
A;B|EX\right)  _{\omega}\\
&  \leq\sum_{x}p_{X}( x) I_{F}( A;B|E) _{\rho^{x}}.
\end{align}
Since the inequality holds independent of each particular extension of
$\rho_{AB}^{x}$, we can conclude (\ref{eq:geo-squash-convex}).
\end{proof}

\begin{theorem}
[Faithfulness]The geometric squashed entanglement is faithful, in the sense
that%
\begin{equation}
E_{F}^{\operatorname{sq}}( A;B) _{\rho}=0~\text{if and only if }\rho
_{AB}\text{ is separable.}%
\end{equation}
This is equivalent to%
\begin{equation}
F^{\operatorname{sq}}( A;B) _{\rho}=1~\text{if and only if }\rho_{AB}\text{ is
separable.}%
\end{equation}
Furthermore, we have the following bound holding for all states $\rho_{AB}$:%
\begin{equation}
E_{F}^{\operatorname{sq}}( A;B) _{\rho}\geq\frac{1}{512\left\vert A\right\vert
^{4}}\left\Vert \rho_{AB}-\operatorname{SEP}(A:B)\right\Vert _{1}^{4}.
\end{equation}

\end{theorem}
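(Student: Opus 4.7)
I split the argument into an easy direction (separable $\Rightarrow E_F^{\operatorname{sq}}=0$) and a quantitative lower bound that contains the converse.

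For the easy direction, given a separable decomposition $\rho_{AB}=\sum_{x}p_{X}(x)\,\sigma_{A}^{x}\otimes\tau_{B}^{x}$, the plan is to exhibit a single extension that realises $F(A;B|E)_{\omega}=1$, namely the classical-flag extension
\[
\omega_{ABE}\equiv\sum_{x}p_{X}(x)\,\sigma_{A}^{x}\otimes\tau_{B}^{x}\otimes|x\rangle\langle x|_{E}.
\]
Applying Proposition~\ref{prop:condition-classical} with trivial $C$, and then Proposition~\ref{prop:condition-on-product} to identify $F(A;B|C)$ with $F(A;B)$ when $C$ is trivial, gives $\sqrt{F}(A;B|E)_{\omega}\geq\sum_{x}p_{X}(x)\sqrt{F}(A;B)_{\sigma^{x}\otimes\tau^{x}}$. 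For each product state $\sigma_{A}^{x}\otimes\tau_{B}^{x}$ the inner root fidelity equals $1$, since multiplicativity of the fidelity under tensor products shows that the recovery map which discards the input and prepares $\sigma_{A}^{x}$ on system $A$ already attains unit fidelity. Hence $F^{\operatorname{sq}}(A;B)_{\rho}=1$ and $E_{F}^{\operatorname{sq}}(A;B)_{\rho}=0$.

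For the quantitative bound the plan is to follow the argument of \cite{Winterconj} step-for-step, substituting the $F(A;B|E)$-based quantities for the $I(A;B|E)$-based ones used there. First, the elementary inequality $-\log x\geq(1-x)/\ln 2$ on $x\in(0,1]$ gives $E_{F}^{\operatorname{sq}}(A;B)_{\rho}\geq(1-F^{\operatorname{sq}}(A;B)_{\rho})/(2\ln 2)$. Second, the Fuchs--van de Graaf inequality applied for any extension $\omega_{ABE}$ and any recovery $\mathcal{R}_{E\rightarrow AE}$, together with monotonicity of the trace norm under the partial trace over $E$, gives
\[
1-F(\omega_{ABE},\mathcal{R}_{E\rightarrow AE}(\omega_{BE}))\,\geq\,\tfrac{1}{4}\,\|\rho_{AB}-\mathcal{R}_{E\rightarrow AE}(\omega_{BE})_{AB}\|_{1}^{2}.
\]
Taking appropriate suprema then reduces the problem to lower bounding the best achievable distance from $\rho_{AB}$ to a ``recovered marginal'' of the form $\mathcal{R}_{E\rightarrow AE}(\omega_{BE})_{AB}$ by $|A|^{-2}\|\rho_{AB}-\operatorname{SEP}(A:B)\|_{1}^{2}$.

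The hard part, and the step that genuinely requires \cite{Winterconj}, is that $\mathcal{R}_{E\rightarrow AE}(\omega_{BE})_{AB}$ need not lie in $\operatorname{SEP}(A:B)$, so one cannot simply invoke the triangle inequality. The resolution combines the LOCC monotonicity of $F^{\operatorname{sq}}$ (Theorem~\ref{thm:locc-monotone}) with a $k$-extendibility / quantum de Finetti estimate on the $B$ side: one embeds the extension into a $k$-copy symmetric extension of $B$, uses that a $k$-extendible state is close in trace norm to $\operatorname{SEP}(A:B)$ with an error scaling as $|A|^{2}/k$, and balances this against the recovery infidelity of the extension. The net effect is an estimate of the form $\sqrt{1-F^{\operatorname{sq}}(A;B)_{\rho}}\gtrsim|A|^{-2}\|\rho_{AB}-\operatorname{SEP}(A:B)\|_{1}^{2}$, whose squaring combined with the first step above introduces both the fourth power of $\|\rho_{AB}-\operatorname{SEP}(A:B)\|_{1}$ and the $|A|^{4}$ prefactor. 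Tracking numerical constants through the chain of inequalities yields the announced prefactor $1/(512|A|^{4})$. Since $\operatorname{SEP}(A:B)$ is closed, the quantitative bound then forces the converse direction: $E_{F}^{\operatorname{sq}}(A;B)_{\rho}=0$ implies $\|\rho_{AB}-\operatorname{SEP}(A:B)\|_{1}=0$, i.e., $\rho_{AB}\in\operatorname{SEP}(A:B)$, completing faithfulness.
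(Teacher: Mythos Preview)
Your easy direction is fine and matches the paper's argument (the paper writes out the recovery map explicitly rather than invoking Propositions~\ref{prop:condition-classical} and~\ref{prop:condition-on-product}, but the content is the same).

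The quantitative bound, however, has a genuine gap. After Fuchs--van de Graaf you trace over $E$ and claim the problem reduces to lower bounding $\inf_{\omega,\mathcal{R}}\|\rho_{AB}-\operatorname{Tr}_E[\mathcal{R}_{E\to AE}(\omega_{BE})]\|_1$. But this infimum is \emph{zero} for any state $\rho_{AB}$ that is $2$-extendible on the $A$ side: if $\omega_{A_1A_2B}$ is a symmetric extension of $\rho_{AB}$, set $E=A_2$, $\omega_{ABE}=\omega_{A_1BA_2}$, and let $\mathcal{R}_{E\to AE}$ relabel $E$ as $A$ and append a fixed ancilla on $E$; then $\operatorname{Tr}_E[\mathcal{R}(\omega_{BE})]=\omega_{BA_2}=\rho_{AB}$. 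Since entangled $2$-extendible states exist, your reduced quantity cannot detect entanglement. Tracing over $E$ at this stage throws away precisely the structure the argument needs, and neither Theorem~\ref{thm:locc-monotone} nor a $B$-side de Finetti estimate repairs this.

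What you are missing is the actual mechanism from \cite{Winterconj} that the paper uses: keep the full tripartite estimate $\|\omega_{ABE}-\mathcal{R}_{E\to AE}(\omega_{BE})\|_1\leq\delta$ and \emph{iterate} the recovery map on $E$ to build $\gamma_{A_1\cdots A_kBE}\equiv\mathcal{R}_{E\to A_kE}\circ\cdots\circ\mathcal{R}_{E\to A_2E}(\omega_{A_1BE})$. Telescoping with the triangle inequality shows each marginal $\gamma_{A_jB}$ is within $k\delta$ of $\rho_{AB}$; symmetrizing over the $A_j$'s yields a $k$-extendible state on the $A$ side (not the $B$ side) that is $k\delta$-close to $\rho_{AB}$. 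Only then does the de Finetti bound $2|A|^2/k$ of \cite{CKMR07} apply, and balancing $k\delta$ against $2|A|^2/k$ via $k\approx|A|\sqrt{2/\delta}$ gives the constant $1/(512|A|^4)$. LOCC monotonicity plays no role.
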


\begin{proof}
We first prove the if-part of the theorem. So, given by assumption that
$\rho_{AB}$ is separable, it has a decomposition of the following form:%
\begin{equation}
\rho_{AB}=\sum_{x}p_{X}( x) \left\vert \psi_{x}\right\rangle \left\langle
\psi_{x}\right\vert _{A}\otimes\left\vert \phi_{x}\right\rangle \left\langle
\phi_{x}\right\vert _{B}.
\end{equation}
Then an extension of the state is of the form%
\begin{equation}
\rho_{ABE}=\sum_{x}p_{X}( x) \left\vert \psi_{x}\right\rangle \left\langle
\psi_{x}\right\vert _{A}\otimes\left\vert \phi_{x}\right\rangle \left\langle
\phi_{x}\right\vert _{B}\otimes\left\vert x\right\rangle \left\langle
x\right\vert _{E}.
\end{equation}
Clearly, if the system $A$ becomes lost, someone who possesses system $E$
could measure it and prepare the state $\left\vert \psi_{x}\right\rangle _{A}$
conditioned on the measurement outcome. That is, the recovery map
$\mathcal{R}_{E\rightarrow AE}$ is as follows:%
\begin{equation}
\mathcal{R}_{E\rightarrow AE}\left(  \sigma_{E}\right)  =\sum_{x}\left\langle
x\right\vert \sigma_{E}\left\vert x\right\rangle \ \left\vert \psi
_{x}\right\rangle \left\langle \psi_{x}\right\vert _{A}\otimes\left\vert
x\right\rangle \left\langle x\right\vert _{E}.
\end{equation}
So this implies that%
\begin{equation}
F\left(  \rho_{ABE},\mathcal{R}_{E\rightarrow AE}\left(  \rho_{BE}\right)
\right)  =1,
\end{equation}
and thus $F^{\operatorname{sq}}( A;B) _{\rho}=1$.

The only-if-part of the theorem is a direct consequence of the reasoning
in~\cite{Winterconj}. We repeat the argument from \cite{Winterconj}\ here for
the convenience of the reader. The reasoning from \cite{Winterconj}
establishes that the trace distance between $\rho_{AB}$ and the set
SEP$(A:B)$\ of separable states on systems $A$ and $B$ is bounded from above
by a function of $-1/2\log F^{\operatorname{sq}}( A;B) _{\rho}$ and
$\left\vert A\right\vert $. This will then allow us to conclude the
only-if-part of the theorem.

Let%
\begin{equation}
\varepsilon\equiv-\frac{1}{2}\log F^{\operatorname{sq}}( A;B) _{\rho}
\label{eq:squash-val}%
\end{equation}
for some bipartite state $\rho_{AB}$ and let%
\begin{equation}
\varepsilon_{\omega,\mathcal{R}}\equiv-\frac{1}{2}\log F(\omega_{ABE}%
,\mathcal{R}_{E\rightarrow AE}\left(  \omega_{BE}\right)  ),
\label{eq:particular-squash-val}%
\end{equation}
for some extension $\omega_{ABE}$ and a recovery map $\mathcal{R}%
_{E\rightarrow AE}$. By definition, we have that%
\begin{equation}
\varepsilon=\inf_{\omega,\mathcal{R}_{E\rightarrow AE}}\varepsilon
_{\omega,\mathcal{R}}.
\end{equation}
Then consider that%
\begin{equation}
\varepsilon_{\omega,\mathcal{R}}\geq\frac{1}{8}\left\Vert \omega
_{ABE}-\mathcal{R}_{E\rightarrow AE}\left(  \omega_{BE}\right)  \right\Vert
_{1}^{2}, \label{epsilongsq}%
\end{equation}
where the inequality follows from a well known relation between the fidelity
and trace distance \cite{FG98}. Therefore, by defining $\delta_{\omega
,\mathcal{R}}=\sqrt{8\varepsilon_{\omega,\mathcal{R}}}$ we have that%
\begin{align}
\delta_{\omega,\mathcal{R}}  &  \geq\left\Vert \omega_{ABE}-\mathcal{R}%
_{E\rightarrow AE}\left(  \omega_{BE}\right)  \right\Vert _{1}\\
&  =\left\Vert \omega_{ABE}-\left(  \mathcal{R}_{E\rightarrow A_{2}E}%
\circ\operatorname{Tr}_{A_{1}}\right)  \left(  \omega_{A_{1}BE}\right)
\right\Vert _{1},
\end{align}
where the systems $A_{1}$ and $A_{2}$ are defined to be isomorphic to system
$A$. Now consider applying the same recovery map again. We then have that

\begin{equation}
\delta_{\omega,\mathcal{R}}\geq\left\Vert \left(  \mathcal{R}_{E\rightarrow
A_{3}E}\circ\operatorname{Tr}_{A_{2}}\right)  \left(  \omega_{A_{2}BE}\right)
-\bigcirc_{i=2}^{3}\left(  \mathcal{R}_{E\rightarrow A_{i}E}\circ
\operatorname{Tr}_{A_{i-1}}\right)  \left(  \omega_{A_{1}BE}\right)
\right\Vert _{1},
\end{equation}
which follows from the inequality above and monotonicity of the trace distance
with respect to the quantum operation $\mathcal{R}_{E\rightarrow A_{3}E}\circ
$Tr$_{A_{2}}$. Combining via the triangle inequality, we find for $k\geq2$
that%
\begin{align}
\left\Vert \omega_{ABE}-\bigcirc_{i=2}^{3}\left(  \mathcal{R}_{E\rightarrow
A_{i}E}\circ\operatorname{Tr}_{A_{i-1}}\right)  \left(  \omega_{A_{1}%
BE}\right)  \right\Vert _{1}
\leq2\delta_{\omega,\mathcal{R}}\leq k\delta_{\omega,\mathcal{R}}.
\end{align}
We can iterate this reasoning in the following way:\ For $j\in\left\{
4,\ldots,k\right\}  $ (assuming now $k\geq4$), apply the maps $\mathcal{R}%
_{E\rightarrow A_{j}E}\circ$Tr$_{A_{j-1}}$ along with monotonicity of trace
distance to establish the following inequalities:%
\begin{equation}
\left\Vert \left[  \bigcirc_{i=3}^{j}\left(  \mathcal{R}_{E\rightarrow A_{i}%
E}\circ\operatorname{Tr}_{A_{i-1}}\right)  \left(  \omega_{A_{2}BE}\right)
\right]  -\left[  \bigcirc_{i=2}^{j}\left(  \mathcal{R}_{E\rightarrow A_{i}%
E}\circ\operatorname{Tr}_{A_{i-1}}\right)  \left(  \omega_{A_{1}BE}\right)
\right]  \right\Vert _{1}\leq\delta_{\omega,\mathcal{R}}.
\end{equation}
Apply the triangle inequality to all of these to establish the following
inequalities for $j\in\left\{  1,\ldots,k\right\}  $:%
\begin{equation}
\left\Vert \omega_{ABE}-\bigcirc_{i=2}^{j}\left(  \mathcal{R}_{E\rightarrow
A_{i}E}\circ\operatorname{Tr}_{A_{i-1}}\right)  \left(  \omega_{A_{1}%
BE}\right)  \right\Vert _{1}\leq k\delta_{\omega,\mathcal{R}},
\end{equation}
with the interpretation for $j=1$ that there is no map applied. From
monotonicity of trace distance with respect to quantum operations, we can then
conclude the following inequalities for $j\in\left\{  1,\ldots,k\right\}  $:%
\begin{equation}
\left\Vert \rho_{AB}-\operatorname{Tr}_{E}\left\{  \bigcirc_{i=2}^{j}\left(
\mathcal{R}_{E\rightarrow A_{i}E}\circ\operatorname{Tr}_{A_{i-1}}\right)
\left(  \omega_{A_{1}BE}\right)  \right\}  \right\Vert _{1}\leq k\delta
_{\omega,\mathcal{R}}.\label{eq:faithful-ineqs-1}%
\end{equation}
Let $\gamma_{A_{1}A_{2}\cdots A_{k}BE}$\ denote the following state:%
\begin{equation}
\gamma_{A_{1}A_{2}\cdots A_{k}BE}\equiv\mathcal{R}_{E\rightarrow A_{k}%
E}\left(  \cdots\left(  \mathcal{R}_{E\rightarrow A_{2}E}\left(  \omega
_{A_{1}BE}\right)  \right)  \right)  .
\end{equation}
(See Figure~\ref{fig:faithfulness} for a graphical depiction of this
state.)\begin{figure}[ptb]
\center
\includegraphics[scale=1.0]{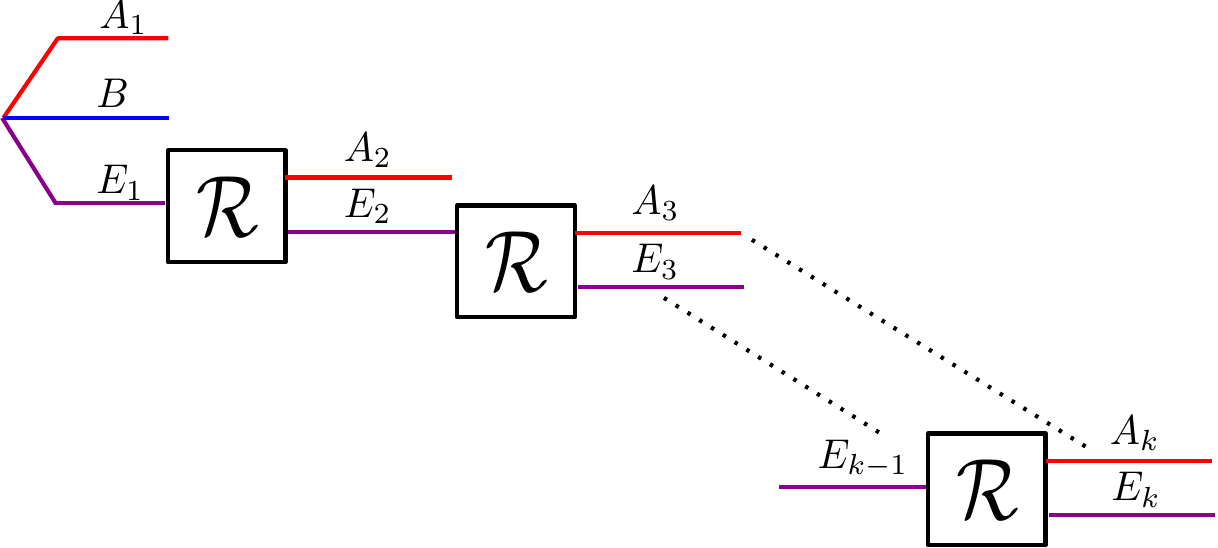}\caption{This figure
illustrates the global state after performing a recovery map $k$ times on
system~$E$.}%
\label{fig:faithfulness}%
\end{figure}Then the inequalities in (\ref{eq:faithful-ineqs-1}) are
equivalent to the following inequalities for $j\in\left\{  1\ldots,k\right\}
$:%
\begin{equation}
\left\Vert \rho_{AB}-\gamma_{A_{j}B}\right\Vert _{1}\leq k\delta
_{\omega,\mathcal{R}},
\end{equation}
which are in turn equivalent to the following ones for any permutation $\pi\in
S_{k}$:%
\begin{equation}
\left\Vert \rho_{AB}-\operatorname{Tr}_{A_{2}\cdots A_{k}}\left\{
W_{A_{1}A_{2}\cdots A_{k}}^{\pi}\gamma_{A_{1}A_{2}\cdots A_{k}B}\left(
W_{A_{1}A_{2}\cdots A_{k}}^{\pi}\right)  ^{\dag}\right\}  \right\Vert _{1}\leq
k\delta_{\omega,\mathcal{R}},\label{eq:faithful-ineqs-2}%
\end{equation}
with $W_{A_{1}A_{2}\cdots A_{k}}^{\pi}$ a unitary representation of the
permutation $\pi$. We can then define $\overline{\gamma}_{A_{1}\cdots A_{k}B}$
as a symmetrized version of $\gamma_{A_{1}\cdots A_{k}B}$:%
\begin{equation}
\overline{\gamma}_{A_{1}\cdots A_{k}B}\equiv\frac{1}{k!}\sum_{\pi\in S_{k}%
}W_{A_{1}A_{2}\cdots A_{k}}^{\pi}\gamma_{A_{1}\cdots A_{k}B}\left(
W_{A_{1}A_{2}\cdots A_{k}}^{\pi}\right)  ^{\dag}.
\end{equation}
The inequalities in (\ref{eq:faithful-ineqs-2}) allow us to conclude that%
\begin{align}
k\delta_{\omega,\mathcal{R}} &  \geq\frac{1}{k!}\sum_{\pi\in S_{k}}\left\Vert
\rho_{AB}-\operatorname{Tr}_{A_{2}\cdots A_{k}}\left\{  W_{A_{1}A_{2}\cdots
A_{k}}^{\pi}\gamma_{A_{1}A_{2}\cdots A_{k}B}\left(  W_{A_{1}A_{2}\cdots A_{k}%
}^{\pi}\right)  ^{\dag}\right\}  \right\Vert _{1}\\
&  \geq\left\Vert \rho_{AB}-\operatorname{Tr}_{A_{2}\cdots A_{k}}\left\{
\frac{1}{k!}\sum_{\pi\in S_{k}}W_{A_{1}A_{2}\cdots A_{k}}^{\pi}\gamma
_{A_{1}A_{2}\cdots A_{k}B}\left(  W_{A_{1}A_{2}\cdots A_{k}}^{\pi}\right)
^{\dag}\right\}  \right\Vert _{1}\\
&  =\left\Vert \rho_{AB}-\overline{\gamma}_{A_{1}B}\right\Vert _{1}%
,\label{eq:faithful-ineqs-3}%
\end{align}

where the second inequality is a consequence of the convexity of trace
distance. So what the reasoning in \cite{Winterconj} accomplishes is to
construct a $k$-extendible state $\overline{\gamma}_{A_{1}B}$ that is
$k\delta_{\omega,\mathcal{R}}$-close to $\rho_{AB}$ in trace distance.

Following \cite{Winterconj}, we now recall a particular quantum de Finetti
result in~\cite[Theorem II.7']{CKMR07}. Consider a state $\omega_{A_{1}\cdots
A_{k}B}$ which is permutation invariant with respect to systems $A_{1}\cdots
A_{k}$. Let $\omega_{A_{1}\cdots A_{n}B}$ denote the reduced state on $n$ of
the $k$ $A$ systems where $n\leq k$. Then, for large $k$, $\omega_{A_{1}\cdots
A_{n}B}$ is close in trace distance to a convex combination of product states
of the form $\int\sigma_{A}^{\otimes n}\otimes\tau\left(  \sigma\right)
_{B}\ d\mu(\sigma)$, where $\mu$ is a probability measure on the set of mixed
states on a single $A$ system and $\left\{  \tau\left(  \sigma\right)
\right\}  _{\sigma}$ is a family of states parametrized by $\sigma$, with the
approximation given by%
\begin{equation}
\frac{2\left\vert A\right\vert ^{2}n}{k}\geq\left\Vert \omega_{A_{1}\cdots
A_{n}B}-\int\sigma_{A}^{\otimes n}\otimes\tau\left(  \sigma\right)  _{B}%
\ d\mu(\sigma)\right\Vert _{1}.
\end{equation}
Applying this theorem in our context (choosing $n=1$) leads to the following
conclusion:%
\begin{align}
\frac{2\left\vert A\right\vert ^{2}}{k}  &  \geq\left\Vert \overline{\gamma
}_{A_{1}B}-\int\sigma_{A_{1}}\otimes\tau\left(  \sigma\right)  _{B}%
\ d\mu(\sigma)\right\Vert _{1}\\
&  \geq\left\Vert \overline{\gamma}_{A_{1}B}-\text{SEP}(A_{1}:B)\right\Vert
_{1}, \label{eq:dist-to-sep}%
\end{align}
because the state $\int\sigma_{A_{1}}\otimes\tau\left(  \sigma\right)
_{B}\ d\mu(\sigma)$ is a particular separable state.

We can now combine (\ref{eq:faithful-ineqs-3}) and (\ref{eq:dist-to-sep}%
)\ with the triangle inequality to conclude the following bound%
\begin{equation}
\left\Vert \rho_{AB}-\text{SEP}(A:B)\right\Vert _{1}\leq\frac{2|A|^{2}}%
{k}+k\delta_{\omega,\mathcal{R}}.
\end{equation}
By choosing $k$ to diverge slower than $\delta_{\omega,\mathcal{R}}^{-1}$, say
as $k=|A|\sqrt{2/\delta_{\omega,\mathcal{R}}}$, we obtain the following bound:%
\begin{align}
  \left\Vert \rho_{AB}-\text{SEP}(A:B)\right\Vert _{1}
&  \leq\left\vert A\right\vert \sqrt{8\delta_{\omega,\mathcal{R}}}\\
&  =\left(  512\right)  ^{1/4}\left\vert A\right\vert \varepsilon
_{\omega,\mathcal{R}}^{1/4}.
\end{align}
Since the above bound holds for all extensions and recovery maps, we can
obtain the tightest bound by taking an infimum over all of them. By
substituting with (\ref{eq:squash-val}) and (\ref{eq:particular-squash-val}),
we find that%
\begin{equation}
\left\Vert \rho_{AB}-\text{SEP}(A:B)\right\Vert _{1}\leq
\left(  512\right)  ^{1/4}\left\vert A\right\vert \left(  -1/2\log
F^{\operatorname{sq}}( A;B) _{\rho}\right)  ^{1/4},
\end{equation}
or equivalently%
\begin{align}
E_{F}^{\operatorname{sq}}( A;B) _{\rho}  &  =-1/2\log F^{\operatorname{sq}}(
A;B) _{\rho}\\
&  \geq\frac{1}{512\left\vert A\right\vert ^{4}}\left\Vert \rho_{AB}%
-\text{SEP}(A:B)\right\Vert _{1}^{4}.
\end{align}
This proves the converse part of the faithfulness of the geometric squashed entanglement.
\end{proof}

\begin{proposition}
[Reduction to geometric measure]\label{prop:pure-state}Let $\phi_{AB}$ be a
bipartite pure state. Then%
\begin{align}
E_{F}^{\operatorname{sq}}( A;B) _{\phi}  &  =-\frac{1}{2}\log\sup_{\left\vert
\varphi\right\rangle _{A}}\left\langle \phi\right\vert _{AB}\left(
\varphi_{A}\otimes\phi_{B}\right)  \left\vert \phi\right\rangle _{AB}%
\label{eq:geo-reduce}\\
&  =-\log\left\Vert \phi_{A}\right\Vert _{\infty}. \label{eq:geo-reduce-2}%
\end{align}

\end{proposition}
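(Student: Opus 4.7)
The plan is to reduce the squashed quantity to an unconditional one by exploiting the purity of $\phi_{AB}$, then evaluate the resulting optimization explicitly via the Schmidt decomposition.

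First, I would observe that because $\phi_{AB}$ is pure, every extension $\omega_{ABE}$ of $\phi_{AB}$ factorizes as $\omega_{ABE}=\phi_{AB}\otimes \sigma_E$ for some state $\sigma_E$ (this is a standard fact: purify $\omega_{ABE}$ and apply Uhlmann's theorem, or equivalently note that a pure state cannot be correlated with any auxiliary system). Applying Proposition~\ref{prop:condition-on-product} to each such extension gives
\begin{equation}
F(A;B|E)_{\phi\otimes\sigma}=F(A;B)_{\phi}=\sup_{\tau_A}F(\phi_{AB},\tau_A\otimes\phi_B),
\end{equation}
and since this value is independent of $\sigma_E$, taking the supremum over extensions yields $F^{\operatorname{sq}}(A;B)_{\phi}=\sup_{\tau_A}F(\phi_{AB},\tau_A\otimes\phi_B)$.

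Next, because $\phi_{AB}$ is pure, the fidelity collapses to an inner product:
\begin{equation}
F(\phi_{AB},\tau_A\otimes\phi_B)=\langle\phi|_{AB}(\tau_A\otimes\phi_B)|\phi\rangle_{AB}.
\end{equation}
This expression is linear in $\tau_A$, so the supremum over the convex set of density operators on $A$ is attained at an extreme point, i.e., a pure state $\tau_A=\varphi_A=|\varphi\rangle\langle\varphi|_A$. This gives the first equality in \eqref{eq:geo-reduce}, and then $E_F^{\operatorname{sq}}(A;B)_{\phi}=-\tfrac{1}{2}\log F^{\operatorname{sq}}(A;B)_{\phi}$ produces the form on the RHS of \eqref{eq:geo-reduce}.

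Finally, for \eqref{eq:geo-reduce-2}, I would compute the inner product in the Schmidt basis. Writing $|\phi\rangle_{AB}=\sum_i\sqrt{\lambda_i}|i\rangle_A|i\rangle_B$, so that $\phi_B=\sum_i\lambda_i|i\rangle\langle i|_B$, a direct expansion gives
\begin{equation}
\langle\phi|_{AB}(\varphi_A\otimes\phi_B)|\phi\rangle_{AB}=\sum_i\lambda_i^2\,|\langle i|\varphi\rangle|^2.
\end{equation}
The supremum over unit vectors $|\varphi\rangle_A$ is clearly $\max_i\lambda_i^2=\|\phi_A\|_\infty^2$, attained by choosing $|\varphi\rangle$ to be the Schmidt vector with maximal coefficient. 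Taking $-\tfrac{1}{2}\log$ yields $-\log\|\phi_A\|_\infty$, establishing \eqref{eq:geo-reduce-2}. No step here looks like a serious obstacle; the only subtlety is verifying the pure-state reduction of extensions, which is a well-known consequence of Uhlmann's theorem and can be invoked in one line.
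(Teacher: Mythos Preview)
Your proof is correct and follows essentially the same approach as the paper: reduce to the unconditional $F(A;B)_\phi$ via the product structure of extensions of a pure state and Proposition~\ref{prop:condition-on-product}, then restrict to pure $\tau_A$ by linearity/convexity, and finally evaluate the overlap. The only cosmetic difference is in the last step: the paper computes $\langle\phi|_{AB}(\varphi_A\otimes\phi_B)|\phi\rangle_{AB}=\langle\varphi|_A\phi_A^2|\varphi\rangle_A$ in a basis-free way and then takes the operator norm, whereas you expand in the Schmidt basis to get $\sum_i\lambda_i^2|\langle i|\varphi\rangle|^2$; these are the same quantity written two ways.
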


\begin{proof}
Any extension of a pure bipartite state is of the form $\phi_{AB}\otimes
\omega_{E}$, where $\omega_{E}$ is some state. Applying
Proposition~\ref{prop:condition-on-product}, we find that%
\begin{align}
F( A;B|E) _{\phi\otimes\omega}  &  =F( A;B) _{\phi}\\
&  =\sup_{\sigma_{A}}F\left(  \phi_{AB},\sigma_{A}\otimes\phi_{B}\right) \\
&  =\sup_{\left\vert \varphi\right\rangle _{A}}\left\langle \phi\right\vert
_{AB}\left(  \varphi_{A}\otimes\phi_{B}\right)  \left\vert \phi\right\rangle
_{AB}.
\end{align}
The last equality follows due to a convexity argument applied to%
\begin{equation}
F\left(  \phi_{AB},\sigma_{A}\otimes\phi_{B}\right)  =\left\langle
\phi\right\vert _{AB}\sigma_{A}\otimes\phi_{B}\left\vert \phi\right\rangle
_{AB}.
\end{equation}
Since the equality holds independent of any particular extension of $\phi
_{AB}$, we obtain (\ref{eq:geo-reduce})\ upon applying a negative logarithm
and dividing by two. The other equality (\ref{eq:geo-reduce-2})\ follows
because%
\begin{align}
  \left\langle \phi\right\vert _{AB}\left(  \varphi_{A}\otimes\phi
_{B}\right)  \left\vert \phi\right\rangle _{AB}
&  =\left\langle \phi\right\vert _{AB}\left(  \varphi_{A}\phi_{A}\otimes
I_{B}\right)  \left\vert \phi\right\rangle _{AB}\\
&  =\text{Tr}\left\{  \left\vert \phi\right\rangle \left\langle \phi
\right\vert _{AB}\left(  \varphi_{A}\phi_{A}\otimes I_{B}\right)  \right\} \\
&  =\text{Tr}\left\{  \phi_{A}\varphi_{A}\phi_{A}\right\} \\
&  =\left\langle \varphi\right\vert _{A}\phi_{A}^{2}\left\vert \varphi
\right\rangle _{A}.
\end{align}
Taking a supremum over all unit vectors $\left\vert \varphi\right\rangle _{A}$
then gives%
\begin{equation}
E_{F}^{\operatorname{sq}}( A;B) _{\phi}=-\frac{1}{2}\log\left\Vert \phi
_{A}^{2}\right\Vert _{\infty},
\end{equation}
which is equivalent to (\ref{eq:geo-reduce-2}).
\end{proof}

\begin{proposition}
[Normalization]\label{prop:normalization}For a maximally entangled state
$\Phi_{AB}$ of Schmidt rank $d$,%
\begin{equation}
E_{F}^{\operatorname{sq}}( A;B) _{\Phi}=\log d.
\end{equation}

\end{proposition}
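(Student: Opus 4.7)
The plan is to recognize that this statement is an immediate corollary of the reduction-to-geometric-measure property already established in Proposition~\ref{prop:pure-state}. Since a maximally entangled state $\Phi_{AB}$ is pure, that proposition applies directly and reduces the computation to evaluating the infinity norm of the reduced state on $A$.

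Concretely, I would invoke Proposition~\ref{prop:pure-state} to write
\begin{equation}
E_F^{\operatorname{sq}}(A;B)_\Phi = -\log \|\Phi_A\|_\infty,
\end{equation}
and then observe that tracing out $B$ from
\begin{equation}
|\Phi\rangle_{AB} = \frac{1}{\sqrt{d}} \sum_{i=0}^{d-1} |i\rangle_A |i\rangle_B
\end{equation}
gives $\Phi_A = I_A/d$, whose largest singular value is $1/d$. Substituting yields $E_F^{\operatorname{sq}}(A;B)_\Phi = -\log(1/d) = \log d$, as desired.

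There is essentially no obstacle here, since all the work has been done in Proposition~\ref{prop:pure-state}; the only routine checks are that the maximally entangled state is a pure bipartite state (so that proposition applies) and that its reduction to system $A$ is the maximally mixed state. The statement is really a sanity check confirming that the geometric squashed entanglement is normalized correctly, and it also connects back to the dimension bound in Proposition~\ref{prop:dim-bound}: combined with the factor of $1/2$ in the definition of $E_F^{\operatorname{sq}}$ and the fact that $I_F(A;B|E)_\Phi \otimes \sigma = 2\log d$ (achieving the dimension bound when $E$ is trivial), we obtain the same value $\log d$, providing an independent consistency check.
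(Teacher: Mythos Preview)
Your proposal is correct and follows exactly the same approach as the paper's proof, which simply cites equation~(\ref{eq:geo-reduce-2}) of Proposition~\ref{prop:pure-state} and notes that $\Phi_A = I_A/d$. Your additional remark connecting this to the dimension bound in Proposition~\ref{prop:dim-bound} is a nice consistency check but not part of the paper's argument.
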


\begin{proof}
This follows directly from (\ref{eq:geo-reduce-2}) of
Proposition~\ref{prop:pure-state} because $\Phi_{A}=I_{A}/d$.
\end{proof}

\begin{proposition}
For a private state $\gamma_{ABA^{\prime}B^{\prime}}$\ of $\log d$ private
bits, the geometric squashed entanglement obeys the following bound:%
\begin{equation}
E_{F}^{\operatorname{sq}}\left(  AA^{\prime};BB^{\prime}\right)  _{\gamma}%
\geq\log d.
\end{equation}

\end{proposition}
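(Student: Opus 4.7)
The plan is to establish the bound by first handling the case of a pure shield and then lifting the result to general mixed shields.

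\textbf{Pure shield.} Suppose $\rho_{A'B'}=|\psi\rangle\langle\psi|_{A'B'}$. Then $\gamma=U(\Phi_{AB}\otimes|\psi\rangle\langle\psi|)U^{\dag}$ is itself a pure state, so Proposition~\ref{prop:pure-state} applies and gives $E_{F}^{\operatorname{sq}}(AA';BB')_{\gamma}=-\log\|\gamma_{AA'}\|_{\infty}$. Using the first form of the twisting unitary, a direct computation gives
\begin{equation*}
\gamma_{AA'}=\tfrac{1}{d}\sum_{i}|i\rangle\langle i|_{A}\otimes\sigma^{i}_{A'},\qquad \sigma^{i}_{A'}\equiv\operatorname{Tr}_{B'}\!\left[V^{i}|\psi\rangle\langle\psi|(V^{i})^{\dag}\right].
\end{equation*}
Since each $\sigma^{i}_{A'}$ is a density operator, $\|\sigma^{i}_{A'}\|_{\infty}\leq1$, and so $\|\gamma_{AA'}\|_{\infty}=\tfrac{1}{d}\max_{i}\|\sigma^{i}_{A'}\|_{\infty}\leq 1/d$, yielding $E_{F}^{\operatorname{sq}}(\gamma)\geq\log d$.

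\textbf{Mixed shield.} For general $\rho_{A'B'}$, I would pass to a purification $|\rho\rangle_{A'B'R}$ so that $|\gamma\rangle_{ABA'B'R}=\tfrac{1}{\sqrt{d}}\sum_{i}|i\rangle_{A}|i\rangle_{B}\otimes V^{i}|\rho\rangle_{A'B'R}$ is a pure state whose reduced state on $AA'$ still satisfies $\|\gamma_{AA'}\|_{\infty}\leq 1/d$ by the same computation (with $\sigma^{i}_{A'}$ replaced by the partial trace of $V^{i}|\rho\rangle\langle\rho|(V^{i})^{\dag}$ over $B'R$). Treating $|\gamma\rangle$ as a bipartite pure state in the cut $AA'\,|\,BB'R$, Proposition~\ref{prop:pure-state} then furnishes $E_{F}^{\operatorname{sq}}(AA';BB'R)_{|\gamma\rangle}\geq\log d$. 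The task is then to transfer this bound to the mixed bipartite state $\gamma_{AA'BB'}$ obtained by tracing out $R$; since the direct monotonicity under this partial trace runs in the wrong direction, one instead works with an arbitrary extension $\omega_{AA'BB'E}$ of $\gamma$ and uses the fact that every such extension arises from a channel $R\to E$ applied to the purification. By decomposing $\rho$ into pure shields $|\psi_{k}\rangle\langle\psi_{k}|$ (equivalently, measuring $R$ in an appropriate basis), one obtains a classical-quantum extension in which the pure-shield bound applies conditionally to each $\gamma^{k}$, and I would then use the classical-conditioning identity for the root fidelity of recovery (Proposition~\ref{prop:condition-classical}) to assemble these into the desired bound on $F(AA';BB'|E)_{\omega}$.

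\textbf{Main obstacle.} The cleanest step is the pure-shield argument, which uses only the pure-state formula and the structural computation of $\gamma_{AA'}$. The delicate part is the second step, because the most natural convexity/monotonicity inequalities available for $E_{F}^{\operatorname{sq}}$ (convexity of $E_{F}^{\operatorname{sq}}$, monotonicity under local operations, and the classical-conditioning inequality for $\sqrt{F}$) all go in the direction of upper-bounding the mixture-state quantity rather than lower-bounding it. Overcoming this requires either extracting a pure-shield private state via a 1-LOCC protocol from Bob to Alice that matches $E_{F}^{\operatorname{sq}}$, or arguing directly at the level of extensions using the security property $\omega_{X_{A}X_{B}E}=\bar{\Phi}_{X_{A}X_{B}}\otimes\omega_{E}$ to rule out any recovery map attaining fidelity greater than $1/d^{2}$.
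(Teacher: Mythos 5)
Your pure-shield computation is correct, and your diagnosis of the obstacle is accurate, but the proposal does not actually prove the proposition: the mixed-shield case is the entire content of the statement, and you leave it unresolved. As you note yourself, convexity of $E_{F}^{\operatorname{sq}}$, monotonicity under local operations (tracing out $R$ from Bob's side), and the classical-conditioning inequality for $\sqrt{F}$ all bound the mixed-shield quantity from \emph{above} by pure-shield quantities, which is the wrong direction. The refinement-of-the-extension idea also fails: passing from an extension $\omega_{AA'BB'E}$ to a finer classical-quantum extension $\omega'_{AA'BB'EX}$ does not let you bound $F(AA';BB'|E)_{\omega}$ by $F(AA';BB'|EX)_{\omega'}$, because for a fixed recovery map the fidelity only \emph{increases} under the partial trace over $X$. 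So neither of the two assembly mechanisms you gesture at closes the gap.

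The paper's proof is precisely the ``direct argument at the level of extensions'' that you mention in your last sentence but do not execute, and it bypasses the pure/mixed dichotomy entirely. The key structural input is that \emph{every} extension of $\gamma_{ABA'B'}$ has the form $U_{ABA'B'}\left(\Phi_{AB}\otimes\rho_{A'B'E}\right)U_{ABA'B'}^{\dag}$ for some extension $\rho_{A'B'E}$ of the shield state, because $\Phi_{AB}$ is pure and hence unextendible. Given an arbitrary recovery map $\mathcal{R}_{E\rightarrow AA'E}$, one conjugates both arguments of the fidelity by $U_{ABA'B'}^{\dag}$ (fidelity is unitarily invariant), uses the block form $U=I_{A}\otimes\sum_{j}\vert j\rangle\langle j\vert_{B}\otimes V_{A'B'}^{j}$ to cancel the twisting unitaries, and then traces out $A'B'$ (which can only increase the fidelity). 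After this partial trace the recovered state collapses to $\pi_{B}\otimes\mathcal{R}_{E\rightarrow AE}(\rho_{E})$, so the fidelity is at most $F\left(\Phi_{AB},\pi_{B}\otimes\tau_{A}\right)=1/d^{2}$ for some state $\tau_{A}$, uniformly over all extensions and recovery maps; taking $-\frac{1}{2}\log$ gives $\log d$. If you want to salvage your outline, you should replace the purification-and-decomposition step with this direct bound; the security-property route you mention at the end would also need to be fleshed out into an explicit bound of this kind before it counts as a proof.
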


\begin{proof}
The proof is in a similar spirit to the proof of \cite[Proposition 4.19]{C06},
but tailored to the fidelity of recovery quantity. Recall (\ref{eq:private-1}%
)-(\ref{eq:private-last}). Any extension $\gamma_{ABA^{\prime}B^{\prime}E}%
$\ of a private state $\gamma_{ABA^{\prime}B^{\prime}}$ takes the form:%
\begin{equation}
\gamma_{ABA^{\prime}B^{\prime}E}=U_{ABA^{\prime}B^{\prime}}\left(  \Phi
_{AB}\otimes\rho_{A^{\prime}B^{\prime}E}\right)  U_{ABA^{\prime}B^{\prime}%
}^{\dag},
\end{equation}
where $\rho_{A^{\prime}B^{\prime}E}$ is an extension of $\rho_{A^{\prime
}B^{\prime}}$. This is because the state $\Phi_{AB}$ is not extendible. Then
consider that%
\begin{equation}
F\left(  AA^{\prime};BB^{\prime}|E\right)  _{\gamma}=\sup_{\mathcal{R}%
}F\left(  \gamma_{ABA^{\prime}B^{\prime}E},\mathcal{R}_{E\rightarrow
AA^{\prime}E}\left(  \gamma_{BB^{\prime}E}\right)  \right)  ,
\end{equation}
where $\mathcal{R}_{E\rightarrow AA^{\prime}E}$ is a recovery map. From
(\ref{eq:private-1})-(\ref{eq:private-last}), we can write%
\begin{equation}
\gamma_{ABA^{\prime}B^{\prime}E}=\frac{1}{d}\sum_{i,j}\left\vert
i\right\rangle \left\langle j\right\vert _{A}\otimes\left\vert i\right\rangle
\left\langle j\right\vert _{B}\otimes V_{A^{\prime}B^{\prime}}^{i}%
\rho_{A^{\prime}B^{\prime}E}\left(  V_{A^{\prime}B^{\prime}}^{j}\right)
^{\dag},
\end{equation}
which implies that%
\begin{equation}
\gamma_{BB^{\prime}E}=\frac{1}{d}\sum_{i}\left\vert i\right\rangle
\left\langle i\right\vert _{B}\otimes\operatorname{Tr}_{\hat{A}^{\prime}%
}\left\{  V_{\hat{A}^{\prime}B^{\prime}}^{i}\rho_{\hat{A}^{\prime}B^{\prime}%
E}\left(  V_{\hat{A}^{\prime}B^{\prime}}^{i}\right)  ^{\dag}\right\}  .
\end{equation}
So then consider the fidelity of recovery for a particular recovery map
$\mathcal{R}_{E\rightarrow AA^{\prime}E}$: 
\begin{align}
&  F\left(  \gamma_{ABA^{\prime}B^{\prime}E},\mathcal{R}_{E\rightarrow
AA^{\prime}E}\left(  \gamma_{BB^{\prime}E}\right)  \right)  \nonumber\\
&  =F\left(  U_{ABA^{\prime}B^{\prime}}\left(  \Phi_{AB}\otimes\rho
_{A^{\prime}B^{\prime}E}\right)  U_{ABA^{\prime}B^{\prime}}^{\dag},\frac{1}%
{d}\sum_{i}\left\vert i\right\rangle \left\langle i\right\vert _{B}%
\otimes\mathcal{R}_{E\rightarrow AA^{\prime}E}\left(  \operatorname{Tr}%
_{\hat{A}^{\prime}}\left\{  V_{\hat{A}^{\prime}B^{\prime}}^{i}\rho_{\hat
{A}^{\prime}B^{\prime}E}\left(  V_{\hat{A}^{\prime}B^{\prime}}^{i}\right)
^{\dag}\right\}  \right)  \right)  \\
&  =F\left(  \left(  \Phi_{AB}\otimes\rho_{A^{\prime}B^{\prime}E}\right)
,U_{ABA^{\prime}B^{\prime}}^{\dag}\left[  \frac{1}{d}\sum_{i}\left\vert
i\right\rangle \left\langle i\right\vert _{B}\otimes\mathcal{R}_{E\rightarrow
AA^{\prime}E}\left(  \operatorname{Tr}_{\hat{A}^{\prime}}\left\{  V_{\hat
{A}^{\prime}B^{\prime}}^{i}\rho_{\hat{A}^{\prime}B^{\prime}E}\left(
V_{\hat{A}^{\prime}B^{\prime}}^{i}\right)  ^{\dag}\right\}  \right)  \right]
U_{ABA^{\prime}B^{\prime}}\right)  ,\label{eq:private-state-bound}%
\end{align}
where the second equality follows from invariance of the fidelity with respect
to unitaries. Then consider that%
\begin{align}
&  U_{ABA^{\prime}B^{\prime}}^{\dag}\left[  \frac{1}{d}\sum_{i}\left\vert
i\right\rangle \left\langle i\right\vert _{B}\otimes\mathcal{R}_{E\rightarrow
AA^{\prime}E}\left(  \operatorname{Tr}_{\hat{A}^{\prime}}\left\{  V_{\hat
{A}^{\prime}B^{\prime}}^{i}\rho_{\hat{A}^{\prime}B^{\prime}E}\left(
V_{\hat{A}^{\prime}B^{\prime}}^{i}\right)  ^{\dag}\right\}  \right)  \right]
U_{ABA^{\prime}B^{\prime}}\nonumber\\
&  =\left(  I_{A}\otimes\sum_{j}\left\vert j\right\rangle \left\langle
j\right\vert _{B}\otimes\left(  V_{A^{\prime}B^{\prime}}^{j}\right)  ^{\dag
}\right)  \left[  \frac{1}{d}\sum_{i}\left\vert i\right\rangle \left\langle
i\right\vert _{B}\otimes\mathcal{R}_{E\rightarrow AA^{\prime}E}\left(
\operatorname{Tr}_{\hat{A}^{\prime}}\left\{  V_{\hat{A}^{\prime}B^{\prime}%
}^{i}\rho_{\hat{A}^{\prime}B^{\prime}E}\left(  V_{\hat{A}^{\prime}B^{\prime}%
}^{i}\right)  ^{\dag}\right\}  \right)  \right]  \times\nonumber\\
&  \ \ \ \ \ \ \ \ \ \ \ \ \left(  I_{A}\otimes\sum_{j^{\prime}}\left\vert
j^{\prime}\right\rangle \left\langle j^{\prime}\right\vert _{B}\otimes
V_{A^{\prime}B^{\prime}}^{j^{\prime}}\right)  \\
&  =\frac{1}{d}\sum_{i}\left\vert i\right\rangle \left\langle i\right\vert
_{B}\otimes\left(  V_{A^{\prime}B^{\prime}}^{i}\right)  ^{\dag}\mathcal{R}%
_{E\rightarrow AA^{\prime}E}\left(  \operatorname{Tr}_{\hat{A}^{\prime}%
}\left\{  V_{\hat{A}^{\prime}B^{\prime}}^{i}\rho_{\hat{A}^{\prime}B^{\prime}%
E}\left(  V_{\hat{A}^{\prime}B^{\prime}}^{i}\right)  ^{\dag}\right\}  \right)
V_{A^{\prime}B^{\prime}}^{i}.
\end{align}
If we trace over systems $A^{\prime}B^{\prime}$, the fidelity only goes up, so
consider that the state above becomes as follows after taking this partial
trace:%
\begin{align}
&  \frac{1}{d}\sum_{i}\left\vert i\right\rangle \left\langle i\right\vert
_{B}\otimes\operatorname{Tr}_{A^{\prime}B^{\prime}}\left\{  \left(
V_{A^{\prime}B^{\prime}}^{i}\right)  ^{\dag}\mathcal{R}_{E\rightarrow
AA^{\prime}E}\left(  \operatorname{Tr}_{\hat{A}^{\prime}}\left\{  V_{\hat
{A}^{\prime}B^{\prime}}^{i}\rho_{\hat{A}^{\prime}B^{\prime}E}\left(
V_{\hat{A}^{\prime}B^{\prime}}^{i}\right)  ^{\dag}\right\}  \right)
V_{A^{\prime}B^{\prime}}^{i}\right\}  \nonumber\\
&  =\frac{1}{d}\sum_{i}\left\vert i\right\rangle \left\langle i\right\vert
_{B}\otimes\operatorname{Tr}_{A^{\prime}B^{\prime}}\left\{  \mathcal{R}%
_{E\rightarrow AA^{\prime}E}\left(  \operatorname{Tr}_{\hat{A}^{\prime}%
}\left\{  V_{\hat{A}^{\prime}B^{\prime}}^{i}\rho_{\hat{A}^{\prime}B^{\prime}%
E}\left(  V_{\hat{A}^{\prime}B^{\prime}}^{i}\right)  ^{\dag}\right\}  \right)
\right\}  \\
&  =\frac{1}{d}\sum_{i}\left\vert i\right\rangle \left\langle i\right\vert
_{B}\otimes\operatorname{Tr}_{A^{\prime}}\left\{  \mathcal{R}_{E\rightarrow
AA^{\prime}E}\left(  \operatorname{Tr}_{\hat{A}^{\prime}B^{\prime}}\left\{
V_{\hat{A}^{\prime}B^{\prime}}^{i}\rho_{\hat{A}^{\prime}B^{\prime}E}\left(
V_{\hat{A}^{\prime}B^{\prime}}^{i}\right)  ^{\dag}\right\}  \right)  \right\}
\\
&  =\frac{1}{d}\sum_{i}\left\vert i\right\rangle \left\langle i\right\vert
_{B}\otimes\operatorname{Tr}_{A^{\prime}}\left\{  \mathcal{R}_{E\rightarrow
AA^{\prime}E}\left(  \operatorname{Tr}_{\hat{A}^{\prime}B^{\prime}}\left\{
\rho_{\hat{A}^{\prime}B^{\prime}E}\right\}  \right)  \right\}  \\
&  =\frac{1}{d}\sum_{i}\left\vert i\right\rangle \left\langle i\right\vert
_{B}\otimes\operatorname{Tr}_{A^{\prime}}\left\{  \mathcal{R}_{E\rightarrow
AA^{\prime}E}\left(  \rho_{E}\right)  \right\}  \\
&  =\pi_{B}\otimes\mathcal{R}_{E\rightarrow AE}\left(  \rho_{E}\right)  ,
\end{align}

where $\pi_{B}$ is a maximally mixed state on system $B$. So an upper bound on
(\ref{eq:private-state-bound}) is given by%
\begin{align}
  F\left(  \Phi_{AB}\otimes\rho_{E},\pi_{B}\otimes\mathcal{R}_{E\rightarrow
AE}\left(  \rho_{E}\right)  \right) 
&  \leq F\left(  \Phi_{AB},\pi_{B}\otimes\mathcal{R}_{E\rightarrow A}\left(
\rho_{E}\right)  \right) \\
&  =1/d^{2}.
\end{align}
Since this upper bound is universal for any recovery map and any extension of
the original state, we obtain the following inequality:%
\begin{equation}
\sup_{\substack{\gamma_{ABA^{\prime}B^{\prime}E}:\\\gamma_{ABA^{\prime
}B^{\prime}}=\operatorname{Tr}_{E}\left\{  \gamma_{ABA^{\prime}B^{\prime}%
E}\right\}  }}F\left(  AA^{\prime};BB^{\prime}|E\right)  _{\gamma}\leq1/d^{2}.
\end{equation}
After taking a negative logarithm, we recover the statement of the proposition.
\end{proof}

\begin{proposition}
[Subadditivity]Let $\omega_{A_{1}B_{1}A_{2}B_{2}}\equiv\rho_{A_{1}B_{1}%
}\otimes\sigma_{A_{2}B_{2}}$. Then%
\begin{equation}
E_{F}^{\operatorname{sq}}\left(  A_{1}A_{2};B_{1}B_{2}\right)  _{\omega}\leq
E_{F}^{\operatorname{sq}}\left(  A_{1};B_{1}\right)  _{\rho}+E_{F}%
^{\operatorname{sq}}\left(  A_{2};B_{2}\right)  _{\sigma},
\end{equation}
which is equivalent to%
\begin{equation}
F^{\operatorname{sq}}\left(  A_{1};B_{1}\right)  _{\rho}\cdot
F^{\operatorname{sq}}\left(  A_{2};B_{2}\right)  _{\tau}\leq
F^{\operatorname{sq}}\left(  A_{1}A_{2};B_{1}B_{2}\right)  _{\rho\otimes\tau}.
\end{equation}

\end{proposition}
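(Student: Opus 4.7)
The plan is to prove the equivalent multiplicative statement
$F^{\operatorname{sq}}(A_{1};B_{1})_{\rho}\cdot F^{\operatorname{sq}}(A_{2};B_{2})_{\sigma}\leq F^{\operatorname{sq}}(A_{1}A_{2};B_{1}B_{2})_{\omega}$
by building a compound extension and a compound recovery map from the individual ones. Specifically, I would start by fixing arbitrary extensions $\rho_{A_{1}B_{1}E_{1}}$ of $\rho_{A_{1}B_{1}}$ and $\sigma_{A_{2}B_{2}E_{2}}$ of $\sigma_{A_{2}B_{2}}$, and arbitrary recovery maps $\mathcal{R}^{1}_{E_{1}\rightarrow A_{1}E_{1}}$ and $\mathcal{R}^{2}_{E_{2}\rightarrow A_{2}E_{2}}$. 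The key observation is that $\rho_{A_{1}B_{1}E_{1}}\otimes\sigma_{A_{2}B_{2}E_{2}}$ is a valid extension of $\omega_{A_{1}B_{1}A_{2}B_{2}}$ (with joint extension system $E_{1}E_{2}$), and the tensor-product channel $\mathcal{R}^{1}\otimes\mathcal{R}^{2}$ is a particular CPTP recovery map from $E_{1}E_{2}$ to $A_{1}A_{2}E_{1}E_{2}$.

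The main computational step uses multiplicativity of the fidelity under tensor products,
\[
F(P_{1}\otimes P_{2},Q_{1}\otimes Q_{2})=F(P_{1},Q_{1})\,F(P_{2},Q_{2}),
\]
applied to $P_{i}=\rho_{A_{i}B_{i}E_{i}}$ (or $\sigma$) and $Q_{i}=\mathcal{R}^{i}(\rho_{B_{i}E_{i}})$ (or $\sigma$). This yields
\[
F\bigl(\rho_{A_{1}B_{1}E_{1}},\mathcal{R}^{1}(\rho_{B_{1}E_{1}})\bigr)\cdot F\bigl(\sigma_{A_{2}B_{2}E_{2}},\mathcal{R}^{2}(\sigma_{B_{2}E_{2}})\bigr)\leq F^{\operatorname{sq}}(A_{1}A_{2};B_{1}B_{2})_{\omega},
\]
since the left-hand side is the fidelity produced by a particular extension-plus-recovery pair for the joint state, hence bounded above by the sup defining $F^{\operatorname{sq}}$.

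Finally, I would take a supremum on the left-hand side in two stages: first over the recovery maps $\mathcal{R}^{1}$ and $\mathcal{R}^{2}$ independently (valid because the RHS does not depend on them and the product of suprema of non-negative quantities equals the sup of products), to obtain $F(A_{1};B_{1}|E_{1})_{\rho}\cdot F(A_{2};B_{2}|E_{2})_{\sigma}\leq F^{\operatorname{sq}}(A_{1}A_{2};B_{1}B_{2})_{\omega}$; then over the extensions $\rho_{A_{1}B_{1}E_{1}}$ and $\sigma_{A_{2}B_{2}E_{2}}$ independently, to obtain the desired multiplicative inequality. Applying $-\tfrac{1}{2}\log$ and using the definition of $E_{F}^{\operatorname{sq}}$ converts the product bound into the additive bound in the statement.

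The only mildly subtle point is to be careful that sup-over-products equals product-of-sups here, which is immediate because each fidelity factor depends only on its own extension and recovery map and is non-negative, so the steps are all routine once the tensor-product extension and tensor-product recovery are identified. There is no obstacle of the kind that appeared in the LOCC monotonicity argument, because no classical side information needs to be broadcast between the two copies.
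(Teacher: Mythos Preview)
Your proposal is correct and follows essentially the same approach as the paper: fix arbitrary extensions and recovery maps for each factor, use multiplicativity of the fidelity under tensor products to identify the product as the fidelity achieved by the tensor-product extension and tensor-product recovery map for $\omega$, bound by $F^{\operatorname{sq}}(A_{1}A_{2};B_{1}B_{2})_{\omega}$, and then take suprema. The paper's proof is slightly terser (it takes all four suprema at once rather than in two stages), but the argument is identical.
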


\begin{proof}
Let $\rho_{A_{1}B_{1}E_{1}}$ be an extension of $\rho_{A_{1}B_{1}}$ and let
$\tau_{A_{2}B_{2}E_{2}}$ be an extension of $\tau_{A_{2}B_{2}}$. Let
$\mathcal{R}_{E_{1}\rightarrow A_{1}E_{1}}^{1}$ and $\mathcal{R}%
_{E_{2}\rightarrow A_{2}E_{2}}^{2}$ be recovery maps.\ Then
\begin{align}
&  F\left(  \rho_{A_{1}B_{1}E_{1}},\mathcal{R}_{E_{1}\rightarrow A_{1}E_{1}%
}^{1}\left(  \rho_{B_{1}E_{1}}\right)  \right)  \cdot F\left(  \tau
_{A_{2}B_{2}E_{2}},\mathcal{R}_{E_{2}\rightarrow A_{2}E_{2}}^{2}\left(
\tau_{B_{2}E_{2}}\right)  \right) \nonumber\\
&  =F\left(  \rho_{A_{1}B_{1}E_{1}}\otimes\tau_{A_{2}B_{2}E_{2}}%
,\mathcal{R}_{E_{1}\rightarrow A_{1}E_{1}}^{1}\left(  \rho_{B_{1}E_{1}%
}\right)  \otimes\mathcal{R}_{E_{2}\rightarrow A_{2}E_{2}}^{2}\left(
\tau_{B_{2}E_{2}}\right)  \right) \\
&  \leq\sup_{\omega_{A_{1}A_{2}B_{1}B_{2}E}}\sup_{\mathcal{R}_{E\rightarrow
A_{1}A_{2}E}}\left\{  F\left(  \omega_{A_{1}A_{2}B_{1}B_{2}E},\mathcal{R}%
_{E\rightarrow A_{1}A_{2}E}\left(  \omega_{B_{1}B_{2}E}\right)  \right)
:\rho_{A_{1}B_{1}}\otimes\tau_{A_{2}B_{2}}=\operatorname{Tr}_{E}\left\{
\omega_{A_{1}A_{2}B_{1}B_{2}E}\right\}  \right\} \\
&  =F^{\operatorname{sq}}\left(  A_{1}A_{2};B_{1}B_{2}\right)  _{\rho
\otimes\tau}.
\end{align}

Since the inequality holds for all extensions $\rho_{A_{1}B_{1}E_{1}}$ and
$\tau_{A_{2}B_{2}E_{2}}$ and recovery maps $\mathcal{R}_{E_{1}\rightarrow
A_{1}E_{1}}^{1}$ and $\mathcal{R}_{E_{2}\rightarrow A_{2}E_{2}}^{2}$, we can
conclude that%
\begin{equation}
F^{\operatorname{sq}}\left(  A_{1};B_{1}\right)  _{\rho}\cdot
F^{\operatorname{sq}}\left(  A_{2};B_{2}\right)  _{\tau}\leq
F^{\operatorname{sq}}\left(  A_{1}A_{2};B_{1}B_{2}\right)  _{\rho\otimes\tau}.
\end{equation}
By taking negative logarithms and dividing by 1/2, we arrive at the
subadditivity statement for $E_{F}^{\operatorname{sq}}$.
\end{proof}

\begin{proposition}
[Continuity]\label{prop:geo-SE-cont}The geometric squashed entanglement is a
continuous function of its input. That is, given two bipartite states
$\rho_{AB}$ and $\sigma_{AB}$ such that $F\left(  \rho_{AB},\sigma
_{AB}\right)  \geq1-\varepsilon$ where $\varepsilon\in\left[  0,1\right]  $,
then the following inequalities hold%
\begin{align}
\left\vert F^{\operatorname{sq}}( A;B) _{\rho}-F^{\operatorname{sq}}( A;B)
_{\sigma}\right\vert  &  \leq8\sqrt{\varepsilon},\label{eq:f-sq-cont}\\
\left\vert E_{F}^{\operatorname{sq}}( A;B) _{\rho}-E_{F}^{\operatorname{sq}}(
A;B) _{\sigma}\right\vert  &  \leq4\left\vert A\right\vert ^{2}\sqrt
{\varepsilon}. \label{eq:e-f-sq-cont}%
\end{align}

\end{proposition}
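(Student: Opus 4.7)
The plan is to reduce continuity of the geometric squashed entanglement to the continuity of the fidelity of recovery (Proposition~\ref{prop:FoR-continuity}) by exploiting the squashing-channel representation of $F^{\operatorname{sq}}$ established just above Corollary~\ref{prop:gse-mono-lo}. First, I would use Uhlmann's theorem: since $F(\rho_{AB},\sigma_{AB}) \geq 1-\varepsilon$, I can pick purifications $\psi^{\rho}_{ABE^{\prime}}$ and $\psi^{\sigma}_{ABE^{\prime}}$ on a common purifying system $E^{\prime}$ (large enough to purify both) such that $F(\psi^{\rho}_{ABE^{\prime}},\psi^{\sigma}_{ABE^{\prime}}) \geq 1-\varepsilon$ as well.

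Next, for an arbitrary squashing channel $\mathcal{S}_{E^{\prime}\rightarrow E}$, monotonicity of fidelity under quantum operations gives $F(\mathcal{S}(\psi^{\rho}),\mathcal{S}(\psi^{\sigma})) \geq 1-\varepsilon$, so by Proposition~\ref{prop:FoR-continuity} applied to these two tripartite states I obtain
\begin{equation}
F(A;B|E)_{\mathcal{S}(\psi^{\sigma})} \leq F(A;B|E)_{\mathcal{S}(\psi^{\rho})} + 8\sqrt{\varepsilon} \leq F^{\operatorname{sq}}(A;B)_{\rho} + 8\sqrt{\varepsilon},
\end{equation}
where the second inequality uses the squashing-channel representation. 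Taking a supremum over $\mathcal{S}_{E^{\prime}\rightarrow E}$ on the left then yields $F^{\operatorname{sq}}(A;B)_{\sigma} \leq F^{\operatorname{sq}}(A;B)_{\rho} + 8\sqrt{\varepsilon}$, and the reverse inequality follows by swapping the roles of $\rho$ and $\sigma$, establishing \eqref{eq:f-sq-cont}.

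For the bound \eqref{eq:e-f-sq-cont} on $E_{F}^{\operatorname{sq}}$, I would mimic the end of the proof of Proposition~\ref{prop:FoR-continuity}: divide \eqref{eq:f-sq-cont} by $F^{\operatorname{sq}}(A;B)_{\rho}$ and use $\log(1+y) \leq y$ together with the dimension lower bound $F^{\operatorname{sq}}(A;B)_{\rho} \geq 1/|A|^{2}$, which follows because the trivial extension $\rho_{AB}\otimes|0\rangle\langle 0|_{E}$ together with Proposition~\ref{prop:dim-bound} implies $F^{\operatorname{sq}} \geq F(A;B|E) \geq 1/|A|^{2}$. This produces
\begin{equation}
\log\frac{F^{\operatorname{sq}}(A;B)_{\sigma}}{F^{\operatorname{sq}}(A;B)_{\rho}} \leq 8|A|^{2}\sqrt{\varepsilon},
\end{equation}
and multiplying by $-1/2$ (together with the symmetric inequality) gives $|E_{F}^{\operatorname{sq}}(A;B)_{\rho} - E_{F}^{\operatorname{sq}}(A;B)_{\sigma}| \leq 4|A|^{2}\sqrt{\varepsilon}$.

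The only real subtlety, and hence the main obstacle, is the opening Uhlmann step: one must ensure the two purifications live on the same purifying Hilbert space and that a single squashing-channel optimization controls both $F^{\operatorname{sq}}(A;B)_{\rho}$ and $F^{\operatorname{sq}}(A;B)_{\sigma}$ simultaneously. Once that alignment is in place, everything else is a direct inheritance from the already-proved continuity of the tripartite fidelity of recovery, and the constants propagate cleanly through the dimension bound.
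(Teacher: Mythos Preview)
Your proposal is correct and follows essentially the same approach as the paper: both reduce to Proposition~\ref{prop:FoR-continuity} via an Uhlmann-type step that produces close extensions of $\rho_{AB}$ and $\sigma_{AB}$, and then finish with the dimension bound exactly as you describe. The only cosmetic difference is that the paper invokes the extension-level Uhlmann result directly (\cite[Corollary~9]{TCR10}) rather than passing through purifications and the squashing-channel representation, so the ``subtlety'' you flag is already packaged into that citation.
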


\begin{proof}
This is a direct consequence of the continuity of fidelity of recovery
(Proposition~\ref{prop:FoR-continuity}). Letting $\sigma_{ABE}$ be an
arbitrary extension of $\sigma_{AB}$, \cite[Corollary 9]{TCR10}\ implies that
there exists an extension $\rho_{ABE}$ of $\rho_{AB}$ such that%
\begin{equation}
F\left(  \rho_{ABE},\sigma_{ABE}\right)  \geq1-\varepsilon.
\end{equation}
By Proposition~\ref{prop:FoR-continuity}, we can conclude that%
\begin{align}
F( A;B|E) _{\sigma}  &  \leq F( A;B|E) _{\rho}+8\sqrt{\varepsilon}\\
&  \leq F^{\operatorname{sq}}( A;B) _{\rho}+8\sqrt{\varepsilon}.
\end{align}
Given that the extension of $\sigma_{AB}$ is arbitrary, we can conclude that%
\begin{equation}
F^{\operatorname{sq}}( A;B) _{\sigma}\leq F^{\operatorname{sq}}( A;B) _{\rho
}+8\sqrt{\varepsilon}.
\end{equation}
A similar argument gives that%
\begin{equation}
F^{\operatorname{sq}}( A;B) _{\rho}\leq F^{\operatorname{sq}}( A;B) _{\sigma
}+8\sqrt{\varepsilon},
\end{equation}
from which we can conclude (\ref{eq:f-sq-cont}). We then obtain
(\ref{eq:e-f-sq-cont}) by the same line of reasoning that led us to
(\ref{eq:I-FoR-continuity}).
\end{proof}

\section{Fidelity of recovery from a quantum measurement}

\label{sec:FoMR}In this section, we propose an alternative measure of quantum
correlations, the \textit{surprisal of measurement recoverability}, which
follows the original motivation behind the quantum discord \cite{zurek}.
However, our measure has a clear operational meaning in the \textquotedblleft
one-shot\textquotedblright\ setting, being based on how well one can recover a
bipartite quantum state if one system is measured. We begin by recalling the
definition of the quantum discord and proceed from there with the motivation
behind the newly proposed measure.

\begin{definition}
[Quantum discord]The quantum discord of a bipartite state $\rho_{AB}$ is
defined as the difference between the quantum mutual information of $\rho
_{AB}$ and the classical correlation \cite{HV01}\ of $\rho_{AB}$:%
\begin{align}
D( \overline{A};B) _{\rho}  &  \equiv I( A;B) _{\rho}-\sup_{\left\{
\Lambda^{x}\right\}  }I\left(  X;B\right)  _{\sigma}\\
&  =\inf_{\left\{  \Lambda^{x}\right\}  }\left[  I( A;B) _{\rho}-I\left(
X;B\right)  _{\sigma}\right]  , \label{eq:disc-obj-func}%
\end{align}
where $\left\{  \Lambda^{x}\right\}  $ is a POVM\ with $\Lambda^{x}\geq0$ for
all $x$ and $\sum_{x}\Lambda^{x}=I$ and $\sigma_{XB}$ is defined as%
\begin{equation}
\sigma_{XB}\equiv\sum_{x}\left\vert x\right\rangle \left\langle x\right\vert
_{X}\otimes\operatorname{Tr}_{A}\left\{  \Lambda_{A}^{x}\rho_{AB}\right\}  .
\label{eq:sigma-state}%
\end{equation}

\end{definition}

We now recall how to write the quantum discord in terms of conditional mutual
information as done explicitly in \cite{P12} (see also \cite{BSW14} and
\cite{SBW14}). Let $\mathcal{M}_{A\rightarrow X}$ denote the following
measurement map:%
\begin{equation}
\mathcal{M}_{A\rightarrow X}\left(  \omega_{A}\right)  \equiv\sum
_{x}\operatorname{Tr}\left\{  \Lambda_{A}^{x}\omega_{A}\right\}  \left\vert
x\right\rangle \left\langle x\right\vert _{X}. \label{eq:meas-map}%
\end{equation}
Using this, we can write \eqref{eq:sigma-state} as$\ \sigma_{XB}%
=\mathcal{M}_{A\rightarrow X}( \rho_{AB}) $. Now, to every measurement map
$\mathcal{M}_{A\rightarrow X}$, we can find an isometric extension of it,
having the following form:%
\begin{equation}
U_{A\rightarrow XE}^{\mathcal{M}}\left\vert \psi\right\rangle _{A}\equiv
\sum_{x}\left\vert x\right\rangle _{X}\left\vert x,y\right\rangle
_{E}\left\langle \varphi_{x,y}\right\vert _{A}\left\vert \psi\right\rangle
_{A}, \label{eq:meas-isometry}%
\end{equation}
where the vectors $\left\{  \left\vert \varphi_{x,y}\right\rangle
_{A}\right\}  $ are part of a rank-one refinement of the POVM $\left\{
\Lambda_{A}^{x}\right\}  $:%
\begin{equation}
\Lambda_{A}^{x}=\sum_{y}\left\vert \varphi_{x,y}\right\rangle \left\langle
\varphi_{x,y}\right\vert .
\end{equation}
(In the above, we are taking a spectral decomposition of the operator
$\Lambda_{A}^{x}$.) Thus,%
\begin{equation}
\mathcal{M}_{A\rightarrow X}\left(  \omega_{A}\right)  =\operatorname{Tr}%
_{E}\left\{  \mathcal{U}_{A\rightarrow XE}^{\mathcal{M}}\left(  \omega
_{A}\right)  \right\}  ,
\end{equation}
where%
\begin{equation}
\mathcal{U}_{A\rightarrow XE}^{\mathcal{M}}\left(  \omega_{A}\right)  \equiv
U_{A\rightarrow XE}^{\mathcal{M}}\left(  \omega_{A}\right)  \left(
U_{A\rightarrow XE}^{\mathcal{M}}\right)  ^{\dag}. \label{eq:iso-meas-map}%
\end{equation}
Let $\sigma_{XEB}$ denote the following state:%
\begin{equation}
\sigma_{XEB}=\mathcal{U}_{A\rightarrow XE}^{\mathcal{M}}\left(  \rho
_{AB}\right)  .
\end{equation}

We can use the above development to rewrite the objective function of the
quantum discord in (\ref{eq:disc-obj-func}) as follows:%
\begin{align}
I( A;B) _{\rho}-I\left(  X;B\right)  _{\sigma}  &  =I\left(  XE;B\right)
_{\sigma}-I\left(  X;B\right)  _{\sigma}\\
&  =I\left(  E;B|X\right)  _{\sigma}.
\end{align}
So this means that we can rewrite the discord in terms of the conditional
mutual information as%
\begin{equation}
D( \overline{A};B) =\inf_{\left\{  \Lambda^{x}\right\}  }I\left(
E;B|X\right)  _{\sigma}, \label{eq:discord-CMI}%
\end{equation}
with the state $\sigma_{XEB}$ understood as described above, as arising from
an isometric extension of a measurement map applied to the state $\rho_{AB}$.
We are now in a position to define the surprisal of measurement recoverability:

\begin{definition}
[Surprisal of meas. recoverability]We define the following information
quantity:%
\begin{equation}
D_{F}( \overline{A};B) _{\rho}\equiv\inf_{\left\{  \Lambda^{x}\right\}  }%
I_{F}\left(  E;B|X\right)  _{\sigma}, \label{eq:discord-F-CMI}%
\end{equation}
where we have simply substituted the conditional mutual information in
\eqref{eq:discord-CMI}\ with $I_{F}$. Writing out the right-hand side of
\eqref{eq:discord-F-CMI}\ carefully, we find that%
\begin{equation}
D_{F}( \overline{A};B) =\label{eq:discord-recovery}
-\log\sup_{\substack{\mathcal{U}_{A\rightarrow XE}^{\mathcal{M}}%
,\\\mathcal{R}_{X\rightarrow XE}}}F\left(  \mathcal{U}_{A\rightarrow
XE}^{\mathcal{M}}( \rho_{AB}) ,\mathcal{R}_{X\rightarrow XE}\left(
\mathcal{M}_{A\rightarrow X}( \rho_{AB}) \right)  \right)  ,
\end{equation}
where $\mathcal{M}_{A\rightarrow X}$ is defined in \eqref{eq:meas-map},
$U_{A\rightarrow XE}^{\mathcal{M}}$ is defined in \eqref{eq:meas-isometry},
and $\mathcal{U}_{A\rightarrow XE}^{\mathcal{M}}$ is defined in \eqref{eq:iso-meas-map}.
\end{definition}

This quantity has a similar interpretation as the original discord, as
summarized in the following quote from \cite{zurek}:

\begin{quote}
\textquotedblleft A vanishing discord can be considered as an indicator of the
superselection rule, or --- in the case of interest --- its value is a measure
of the efficiency of einselection. When [the discord] is large for any
measurement, a lot of information is missed and destroyed by any measurement
on the apparatus alone, but when [the discord] is small almost all the
information about [the system] that exists in the [system--apparatus]
correlations is locally recoverable from the state of the
apparatus.\textquotedblright
\end{quote}

Indeed, we can rewrite $D_{F}$ as characterizing how well a bipartite state
$\rho_{AB}$ is preserved when an entanglement-breaking channel \cite{HSR03}%
\ acts on the $A$ system:

\begin{proposition}
\label{prop:discord-rewrite}For a bipartite state $\rho_{AB}$, we have the
following equality:%
\begin{equation}
D_{F}( \overline{A};B) =-\log\sup_{\mathcal{E}_{A}}F\left(  \rho
_{AB},\mathcal{E}_{A}( \rho_{AB}) \right)  ,
\end{equation}
where the optimization on the right-hand side is over the convex set of
entanglement-breaking channels acting on the system $A$.
\end{proposition}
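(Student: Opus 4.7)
The plan is to prove the equivalent equality (after removing $-\log$)
\[
\sup_{\mathcal{U}^{\mathcal{M}}_{A\to XE},\,\mathcal{R}_{X\to XE}} F\!\left(\mathcal{U}^{\mathcal{M}}(\rho_{AB}),\mathcal{R}(\mathcal{M}(\rho_{AB}))\right) \;=\; \sup_{\mathcal{E}_A} F\!\left(\rho_{AB},\mathcal{E}_A(\rho_{AB})\right),
\]
where on the right the sup ranges over entanglement-breaking channels acting on $A$. I would establish the two inequalities separately, using two standard manipulations: invariance of fidelity under an isometry, and the structural fact that a channel is entanglement breaking iff it factors as measurement followed by preparation.

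For the direction $\leq$, fix a measurement map $\mathcal{M}_{A\to X}$ with isometric extension $\mathcal{U}^{\mathcal{M}}_{A\to XE}$ and an arbitrary recovery map $\mathcal{R}_{X\to XE}$. Define, as in \eqref{eq:T-maps}, a CPTP completion $\mathcal{T}^{\mathcal{U}^{\mathcal{M}}}_{XE\to A}$ of the adjoint of $\mathcal{U}^{\mathcal{M}}$, so that $\mathcal{T}^{\mathcal{U}^{\mathcal{M}}}\circ\mathcal{U}^{\mathcal{M}}=\mathrm{id}_A$. By monotonicity of fidelity under the CPTP map $\mathcal{T}^{\mathcal{U}^{\mathcal{M}}}\otimes\mathrm{id}_B$,
\[
F\!\left(\mathcal{U}^{\mathcal{M}}(\rho_{AB}),\mathcal{R}(\mathcal{M}(\rho_{AB}))\right) \;\leq\; F\!\left(\rho_{AB},\mathcal{E}_A(\rho_{AB})\right),
\]
where $\mathcal{E}_A\equiv \mathcal{T}^{\mathcal{U}^{\mathcal{M}}}\circ\mathcal{R}\circ\mathcal{M}$. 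This $\mathcal{E}_A$ factors through the classical register $X$ (because $\mathcal{M}$ outputs a state diagonal in $\{|x\rangle\langle x|_X\}$), so $\mathcal{T}^{\mathcal{U}^{\mathcal{M}}}\circ\mathcal{R}$ restricted to such inputs is a preparation map; hence $\mathcal{E}_A$ is entanglement breaking. Taking a supremum over $\mathcal{U}^{\mathcal{M}}$ and $\mathcal{R}$ on the left, and observing the resulting $\mathcal{E}_A$ lies in the EB set, gives the inequality.

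For the direction $\geq$, let $\mathcal{E}_A$ be any entanglement-breaking channel and use the Horodecki--Shor--Ruskai structure \cite{HSR03} to write $\mathcal{E}_A=\mathcal{P}_{X\to A}\circ\mathcal{M}_{A\to X}$ for some POVM-induced measurement $\mathcal{M}$ and preparation $\mathcal{P}$. Choose $\mathcal{U}^{\mathcal{M}}_{A\to XE}$ to be the isometric extension of this $\mathcal{M}$ given by \eqref{eq:meas-isometry}, and define the recovery map $\mathcal{R}_{X\to XE}\equiv \mathcal{U}^{\mathcal{M}}\circ\mathcal{P}$, which is CPTP as a composition of CPTP maps. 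Since fidelity is invariant under isometric channels,
\[
F\!\left(\rho_{AB},\mathcal{E}_A(\rho_{AB})\right)=F\!\left(\mathcal{U}^{\mathcal{M}}(\rho_{AB}),\mathcal{U}^{\mathcal{M}}(\mathcal{E}_A(\rho_{AB}))\right)=F\!\left(\mathcal{U}^{\mathcal{M}}(\rho_{AB}),\mathcal{R}(\mathcal{M}(\rho_{AB}))\right),
\]
and supremizing over $\mathcal{E}_A$ yields the reverse inequality. Combining the two inequalities and applying $-\log$ gives the claim.

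The only subtle point---really the single nontrivial observation---is verifying that the composite $\mathcal{T}^{\mathcal{U}^{\mathcal{M}}}\circ\mathcal{R}\circ\mathcal{M}$ is entanglement breaking in the first direction. This is immediate once one remarks that the $X$ register emerging from $\mathcal{M}$ is classical, so whatever $\mathcal{T}^{\mathcal{U}^{\mathcal{M}}}\circ\mathcal{R}$ does to it is equivalent to a classical-to-quantum preparation map. Everything else is just monotonicity and isometric invariance of the fidelity.
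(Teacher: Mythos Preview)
Your proof is correct and follows essentially the same route as the paper's: both directions use the CPTP completion $\mathcal{T}^{\mathcal{U}^{\mathcal{M}}}$ of the isometry's adjoint together with monotonicity of fidelity for~$\leq$, and the Horodecki--Shor--Ruskai decomposition $\mathcal{E}_A=\mathcal{P}\circ\mathcal{M}$ together with isometric invariance (taking $\mathcal{R}=\mathcal{U}^{\mathcal{M}}\circ\mathcal{P}$) for~$\geq$. The only cosmetic difference is that the paper phrases the final step of each direction as ``$\leq\sup$'' rather than your explicit identification of the witnessing map.
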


\begin{proof}
We begin by establishing that%
\begin{equation}
\sup_{\substack{\mathcal{U}_{A\rightarrow XE}^{\mathcal{M}},\mathcal{R}%
_{X\rightarrow XE}}}F\left(  \mathcal{U}_{A\rightarrow XE}^{\mathcal{M}}(
\rho_{AB}) ,\mathcal{R}_{X\rightarrow XE}\left(  \mathcal{M}_{A\rightarrow X}(
\rho_{AB}) \right)  \right) \\
\leq\sup_{\mathcal{E}_{A}}F\left(  \rho_{AB},\mathcal{E}_{A}\left(  \rho
_{AB}\right)  \right)  .
\end{equation}
Let $\mathcal{M}_{A\rightarrow X}$ be any measurement map, let
$U_{A\rightarrow XE}^{\mathcal{M}}$ be an isometric extension for it, and let
$\mathcal{R}_{X\rightarrow XE}$ be any recovery map. Let $\mathcal{T}%
_{XE\rightarrow A}$ denote the following quantum channel:%
\begin{equation}
\mathcal{T}_{XE\rightarrow A}\left(  \gamma_{XE}\right)  \equiv\left(
U^{\mathcal{M}}\right)  ^{\dag}\gamma_{XE}U^{\mathcal{M}}
+\text{Tr}\left\{  \left(  I-U^{\mathcal{M}}\left(  U^{\mathcal{M}}\right)
^{\dag}\right)  \gamma_{XE}\right\}  \sigma_{A},
\end{equation}
where $\sigma_{A}$ is some state on the system $A$. Observe that%
\begin{equation}
\left(  \mathcal{T}_{XE\rightarrow A}\circ\mathcal{U}_{A\rightarrow
XE}^{\mathcal{M}}\right)  ( \rho_{AB}) =\rho_{AB}.
\end{equation}
Then consider that 
\begin{align}
&  F\left(  \mathcal{U}_{A\rightarrow XE}^{\mathcal{M}}\left(  \rho
_{AB}\right)  ,\mathcal{R}_{X\rightarrow XE}\left(  \mathcal{M}_{A\rightarrow
X}(  \rho_{AB})  \right)  \right)  \nonumber\\
&  \leq F\left(  \mathcal{T}_{XE\rightarrow A}\left(  \mathcal{U}%
_{A\rightarrow XE}^{\mathcal{M}}(  \rho_{AB})  \right)
,\mathcal{T}_{XE\rightarrow A}\left(  \mathcal{R}_{X\rightarrow XE}\left(
\mathcal{M}_{A\rightarrow X}(  \rho_{AB})  \right)  \right)
\right)  \\
&  =F\left(  \rho_{AB},\mathcal{T}_{XE\rightarrow A}\left(  \mathcal{R}%
_{X\rightarrow XE}\left(  \mathcal{M}_{A\rightarrow X}\left(  \rho
_{AB}\right)  \right)  \right)  \right)  \\
&  \leq\sup_{\mathcal{E}_{A}}F\left(  \rho_{AB},\mathcal{E}_{A}\left(
\rho_{AB}\right)  \right)  .
\end{align}

The first inequality is a consequence of the monotonicity of fidelity with
respect to quantum operations and the last follows because any entanglement
breaking channel can be written as a concatenation of a measurement followed
by a preparation. In the third line, the measurement is $\mathcal{M}%
_{A\rightarrow X}$ and the preparation is $\mathcal{T}_{XE\rightarrow A}%
\circ\mathcal{R}_{X\rightarrow XE}$.

We now prove the other inequality:%
\begin{equation}
\sup_{\substack{U_{A\rightarrow XE}^{\mathcal{M}},\mathcal{R}_{X\rightarrow
XE}}}F\left(  \mathcal{U}_{A\rightarrow XE}^{\mathcal{M}}\left(  \rho
_{AB}\right)  ,\mathcal{R}_{X\rightarrow XE}\left(  \mathcal{M}_{A\rightarrow
X}( \rho_{AB}) \right)  \right) \label{eq:discord-rewrite}\\
\geq\sup_{\mathcal{E}_{A}}F\left(  \rho_{AB},\mathcal{E}_{A}\left(  \rho
_{AB}\right)  \right)  .
\end{equation}
Let $\mathcal{E}_{A}$ be any entanglement-breaking channel, which consists of
a measurement $\mathcal{M}_{A\rightarrow X}$ followed by a preparation
$\mathcal{P}_{X\rightarrow A}$. Let $U_{A\rightarrow XE}^{\mathcal{M}}$ be an
isometric extension of the measurement map. Then consider that%
\begin{align}
  F( \rho_{AB},\mathcal{E}_{A}( \rho_{AB}) )
&  =F\left(  \rho_{AB},\mathcal{P}_{X\rightarrow A}\left(  \mathcal{M}%
_{A\rightarrow X}( \rho_{AB}) \right)  \right) \\
&  =F\left(  \mathcal{U}_{A\rightarrow XE}^{\mathcal{M}}\left(  \rho
_{AB}\right)  ,\mathcal{U}_{A\rightarrow XE}^{\mathcal{M}}\left(
\mathcal{P}_{X\rightarrow A}\left(  \mathcal{M}_{A\rightarrow X}\left(
\rho_{AB}\right)  \right)  \right)  \right) \\
&  \leq\sup_{\substack{U_{A\rightarrow XE}^{\mathcal{M}},\\\mathcal{R}%
_{X\rightarrow XE}}}F\left(  \mathcal{U}_{A\rightarrow XE}^{\mathcal{M}}(
\rho_{AB}) ,\mathcal{R}_{X\rightarrow XE}\left(  \mathcal{M}_{A\rightarrow X}(
\rho_{AB}) \right)  \right)  ,
\end{align}
where the inequality follows because $\mathcal{U}_{A\rightarrow XE}%
^{\mathcal{M}}\circ\mathcal{P}_{X\rightarrow A}$ is a particular recovery map.
So (\ref{eq:discord-rewrite}) follows and this concludes the proof.
\end{proof}

The proof follows the interpretation given in the quote above:\ the
measurement map $\mathcal{M}_{A\rightarrow X}$ is performed on the $A$ system
of the state $\rho_{AB}$, which is followed by a recovery map $\mathcal{P}%
_{X\rightarrow A}$ that attempts to recover the $A$ system from the state of
the measuring apparatus. Since the measurement map has a classical output, any
recovery map acting on such a classical system is equivalent to a preparation
map. So the quantity $D_{F}( \overline{A};B) $ captures how difficult it is to
recover the full bipartite state after some measurement is performed on it,
following the original spirit of the quantum discord. However, the quantity
$D_{F}( \overline{A};B) $ defined above has the advantage of being a
\textquotedblleft one-shot\textquotedblright\ measure, given that the fidelity
has a clear operational meaning in a \textquotedblleft
one-shot\textquotedblright\ setting. If $D_{F}\left(  \overline{A};B\right)  $
is near to zero, then $F\left(  \rho_{AB},\left(  \mathcal{P}_{X\rightarrow
A}\left(  \mathcal{M}_{A\rightarrow X}\left(  \rho_{AB}\right)  \right)
\right)  \right)  $ is close to one, so that it is possible to recover the
system $A$ by performing a recovery map on the state of the apparatus.
Conversely, if $D_{F}( \overline{A};B) $ is far from zero, then the
measurement recoverability is far from one, so that it is not possible to
recover system $A$ from the state of the measuring apparatus.

The observation in Proposition~\ref{prop:discord-rewrite} leads to the
following proposition, which characterizes quantum states with discord nearly
equal to zero.

\begin{proposition}
[Approximate faithfulness]\label{prop:approx-faithful}A bipartite quantum
state $\rho_{AB}$ has quantum discord nearly equal to zero if and only if it
is an approximate fixed point of an entanglement breaking channel. More
precisely, we have the following:\ If there exists an entanglement breaking
channel $\mathcal{E}_{A}$ and $\varepsilon\in\left[  0,1\right]  $ such that%
\begin{equation}
\left\Vert \rho_{AB}-\mathcal{E}_{A}( \rho_{AB}) \right\Vert _{1}%
\leq\varepsilon, \label{eq:first-impl-1}%
\end{equation}
then the quantum discord $D( \overline{A};B) _{\rho}$\ obeys the following
bound%
\begin{equation}
D( \overline{A};B) _{\rho}\leq4h_{2}( \varepsilon) +8\varepsilon\log\left\vert
A\right\vert , \label{eq:first-impl-2}%
\end{equation}
where $h_{2}( \varepsilon) $ is the binary entropy with the property that
$\lim_{\varepsilon\searrow0}h_{2}( \varepsilon) =0$. Conversely, if the
quantum discord $D( \overline{A};B) _{\rho}$ obeys the following bound for
$\varepsilon\in\left[  0,1\right]  $:%
\begin{equation}
D( \overline{A};B) _{\rho}\leq\varepsilon, \label{eq:second-impl-1}%
\end{equation}
then there exists an entanglement breaking channel $\mathcal{E}_{A}$ such that%
\begin{equation}
\left\Vert \rho_{AB}-\mathcal{E}_{A}( \rho_{AB}) \right\Vert _{1}\leq
2\sqrt{\varepsilon}. \label{eq:second-impl-2}%
\end{equation}

\end{proposition}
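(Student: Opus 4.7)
The plan is to treat the two implications of the proposition separately. Direction 1 ($\eqref{eq:first-impl-1}\Rightarrow\eqref{eq:first-impl-2}$) will be handled by picking the measurement underlying the entanglement-breaking channel as a test measurement in the discord's variational definition, applying data processing, and bounding the resulting difference of mutual informations via Alicki--Fannes--Winter (AFW) continuity. Direction 2 ($\eqref{eq:second-impl-1}\Rightarrow\eqref{eq:second-impl-2}$) will be handled by chaining the Fawzi--Renner inequality with the Fuchs--van de Graaf bound and the inverse-isometry channel constructed in the proof of Proposition~\ref{prop:discord-rewrite}.

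For Direction 1, I would decompose $\mathcal{E}_{A}=\mathcal{P}_{X\rightarrow A}\circ\mathcal{M}_{A\rightarrow X}$ and use the POVM of $\mathcal{M}_{A\rightarrow X}$ as a candidate in the infimum \eqref{eq:disc-obj-func}, giving
\begin{equation*}
D(\overline{A};B)_{\rho}\leq I(A;B)_{\rho}-I(X;B)_{\mathcal{M}(\rho)}.
\end{equation*}
Data processing applied to the preparation map $\mathcal{P}_{X\rightarrow A}$ yields $I(X;B)_{\mathcal{M}(\rho)}\geq I(A;B)_{\mathcal{P}(\mathcal{M}(\rho))}=I(A;B)_{\mathcal{E}_{A}(\rho)}$, so that
\begin{equation*}
D(\overline{A};B)_{\rho}\leq I(A;B)_{\rho}-I(A;B)_{\mathcal{E}_{A}(\rho)}.
\end{equation*}
Since $\|\rho_{AB}-\mathcal{E}_{A}(\rho_{AB})\|_{1}\leq\varepsilon$, writing $I(A;B)=H(A)-H(A|B)$ and combining Fannes' inequality for the marginal entropy with the AFW continuity bound for the conditional entropy bounds the right-hand side by $4h_{2}(\varepsilon)+8\varepsilon\log|A|$, yielding \eqref{eq:first-impl-2}.

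For Direction 2, let $\mathcal{M}^{*}$ be a POVM attaining the infimum in \eqref{eq:discord-CMI} up to some vanishing slack $\delta>0$, so that $I(E;B|X)_{\sigma^{*}}\leq\varepsilon+\delta$ where $\sigma^{*}_{XEB}=\mathcal{U}^{\mathcal{M}^{*}}_{A\rightarrow XE}(\rho_{AB})$. The Fawzi--Renner inequality $I(E;B|X)\geq-\log F(E;B|X)$ gives $F(E;B|X)_{\sigma^{*}}\geq 2^{-(\varepsilon+\delta)}$, so some recovery channel $\mathcal{R}_{X\rightarrow XE}$ satisfies $F(\sigma^{*}_{XEB},\mathcal{R}_{X\rightarrow XE}(\sigma^{*}_{XB}))\geq 2^{-(\varepsilon+\delta)}$. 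The Fuchs--van de Graaf inequality combined with $1-2^{-x}\leq x\ln 2\leq x$ then produces the trace-distance bound $\|\sigma^{*}_{XEB}-\mathcal{R}(\sigma^{*}_{XB})\|_{1}\leq 2\sqrt{\varepsilon+\delta}$. Applying the CPTP map $\mathcal{T}_{XE\rightarrow A}$ constructed in the proof of Proposition~\ref{prop:discord-rewrite} (which inverts the isometric extension of $\mathcal{M}^{*}$) and using monotonicity of the trace distance gives $\|\rho_{AB}-\mathcal{E}_{A}(\rho_{AB})\|_{1}\leq 2\sqrt{\varepsilon+\delta}$, where $\mathcal{E}_{A}:=\mathcal{T}_{XE\rightarrow A}\circ\mathcal{R}_{X\rightarrow XE}\circ\mathcal{M}^{*}_{A\rightarrow X}$ is entanglement breaking because it is a measurement followed by a preparation. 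Sending $\delta\to 0$ recovers \eqref{eq:second-impl-2}. The main technical point is Direction 1, where one must carefully track the constants from AFW and Fannes to reach the stated bound; the loose factors of $4$ and $8$ in the proposition provide ample slack. Direction 2 is essentially a direct assembly of tools already developed in the paper.
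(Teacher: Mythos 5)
Your proposal is correct and follows essentially the same route as the paper: Direction 1 uses the measurement underlying $\mathcal{E}_{A}$ as a test POVM, data processing through the preparation map, and Alicki--Fannes-type continuity, while Direction 2 chains the Fawzi--Renner bound (via the identification of discord with a conditional mutual information and Proposition~\ref{prop:discord-rewrite}) with Fuchs--van de Graaf. The only differences are cosmetic: you unpack the continuity step into separate Fannes/AFW bounds and carry an explicit $\delta$-slack for the near-optimal measurement, both of which land on the same constants.
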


\begin{proof}
We begin by proving (\ref{eq:first-impl-1})-(\ref{eq:first-impl-2}). Since any
entanglement breaking channel $\mathcal{E}_{A}$\ consists of a measurement map
$\mathcal{M}_{A\rightarrow X}$\ followed by a preparation map $\mathcal{P}%
_{X\rightarrow A}$, we can write $\mathcal{E}_{A}=\mathcal{P}_{X\rightarrow
A}\circ\mathcal{M}_{A\rightarrow X}$. Then consider that%
\begin{align}
D( \overline{A};B) _{\rho}  &  =I( A;B) _{\rho}-\sup_{\left\{  \Lambda
^{x}\right\}  }I\left(  X;B\right)  _{\sigma}\\
&  \leq I( A;B) _{\rho}-I\left(  X;B\right)  _{\mathcal{M}\left(  \rho\right)
}\\
&  \leq I( A;B) _{\rho}-I( A;B) _{\mathcal{P}\circ\mathcal{M}\left(
\rho\right)  }\\
&  =I( A;B) _{\rho}-I( A;B) _{\mathcal{E}\left(  \rho\right)  }\\
&  \leq4h_{2}( \varepsilon) +8\varepsilon\log\left\vert A\right\vert .
\end{align}
The first inequality follows because the measurement given by $\mathcal{M}%
_{A\rightarrow X}$ is not necessarily optimal. The second inequality is a
consequence of the quantum data processing inequality, in which quantum mutual
information is non-increasing with respect to the local operation
$\mathcal{P}_{X\rightarrow A}$. The last equality follows because
$\mathcal{E}_{A}=\mathcal{P}_{X\rightarrow A}\circ\mathcal{M}_{A\rightarrow
X}$. The last inequality is a consequence of the Alicki-Fannes inequality
\cite{AF04}.

We now prove (\ref{eq:second-impl-1})-(\ref{eq:second-impl-2}). The
Fawzi-Renner inequality $I(A;B|C)_{\rho}\geq-\log F(A;B|C)_{\rho}$ which holds
for any tripartite state $\rho_{ABC}$ \cite{FR14}, combined with other
observations recalled in this section connecting discord with conditional
mutual information, gives us that there exists an entanglement breaking
channel $\mathcal{E}_{A}$ such that%
\begin{align}
D( \overline{A};B) _{\rho}  &  \geq-\log F\left(  \rho_{AB},\mathcal{E}_{A}(
\rho_{AB}) \right) \\
&  \geq-\log\left(  1-\frac{1}{4}\left\Vert \rho_{AB}-\mathcal{E}_{A}\left(
\rho_{AB}\right)  \right\Vert _{1}^{2}\right) \\
&  \geq\frac{1}{4}\left\Vert \rho_{AB}-\mathcal{E}_{A}\left(  \rho
_{AB}\right)  \right\Vert _{1}^{2},
\end{align}
where the second inequality follows from well known relations between trace
distance and fidelity \cite{FG98}\ and the last from $-\log\left(  1-x\right)
\geq x$, valid for $x\leq1$. This is sufficient to conclude
(\ref{eq:second-impl-1})-(\ref{eq:second-impl-2}).
\end{proof}

\begin{remark}
The main conclusion we can take from Proposition~\ref{prop:approx-faithful} is
that quantum states with discord nearly equal to zero are such that they are
recoverable after performing some measurement on one share of them, making
precise the quote from \cite{zurek} given above. In prior work \cite[Lemma
8.12]{H06}, quantum states with discord exactly equal to zero were
characterized as being entirely classical on the system being measured, but
this condition is perhaps too restrictive for characterizing states with
discord approximately equal to zero.
\end{remark}

\begin{remark}
In prior work, discord-like measures of the following form have been widely
considered throughout the literature \cite{KBCPV12}:%
\begin{align}
&  \inf_{\chi_{AB}\in\text{CQ}}\Delta\left(  \rho_{AB},\chi_{AB}\right)  ,\\
&  \inf_{\chi_{AB}\in\text{CC}}\Delta\left(  \rho_{AB},\chi_{AB}\right)  ,
\end{align}
where CQ\ and CC are the respective sets of classical-quantum and
classical-classical states and $\Delta$ is some suitable (pseudo-)distance
measure such as relative entropy, trace distance, or Hilbert-Schmidt distance.
The larger message of Proposition~\ref{prop:approx-faithful} is that it seems
more reasonable from the physical perspective argued in this section and in
the original discord paper \cite{zurek}\ to consider discord-like measures of
the following form:%
\begin{align}
&  \inf_{\mathcal{E}_{A}}\Delta\left(  \rho_{AB},\mathcal{E}_{A}\left(
\rho_{AB}\right)  \right)  ,\\
&  \inf_{\mathcal{E}_{A},\mathcal{E}_{B}}\Delta\left(  \rho_{AB},\left(
\mathcal{E}_{A}\otimes\mathcal{E}_{B}\right)  ( \rho_{AB}) \right)  ,
\end{align}
where the optimization is over the convex set of entanglement breaking
channels and $\Delta$ is again some suitable (pseudo-)distance measure as
mentioned above. One can understand these measures as being a special case of
the proposed measures in \cite{Piani2014}, but we stress here that we arrived
at them independently through the line of reasoning given in this section.
\end{remark}

We now establish some properties of the surprisal of measurement recoverability:

\begin{proposition}
[Local isometric invariance]$D_{F}( \overline{A};B) _{\rho}$ is invariant with
respect to local isometries, in the sense that%
\begin{equation}
D_{F}( \overline{A};B) _{\rho}=D_{F}(\overline{A^{\prime}};B^{\prime}%
)_{\sigma},
\end{equation}
where%
\begin{equation}
\sigma_{A^{\prime}B^{\prime}}\equiv\left(  \mathcal{U}_{A\rightarrow
A^{\prime}}\otimes\mathcal{V}_{B\rightarrow B^{\prime}}\right)  \left(
\rho_{AB}\right)
\end{equation}
and $\mathcal{U}_{A\rightarrow A^{\prime}}$ and $\mathcal{V}_{B\rightarrow
B^{\prime}}$ are isometric CPTP\ maps.
\end{proposition}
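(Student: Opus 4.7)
The plan is to reduce the claim to the reformulation given in Proposition~\ref{prop:discord-rewrite}, which says that
\begin{equation}
D_F(\overline{A};B)_\rho = -\log \sup_{\mathcal{E}_A} F(\rho_{AB},\mathcal{E}_A(\rho_{AB})),
\end{equation}
where the supremum is over entanglement breaking channels on $A$. Thus it suffices to show that the optimal measurement-recoverability fidelity is invariant under local isometries, i.e., that
\begin{equation}
\sup_{\mathcal{E}_A} F(\rho_{AB},\mathcal{E}_A(\rho_{AB})) = \sup_{\mathcal{E}_{A'}} F(\sigma_{A'B'},\mathcal{E}_{A'}(\sigma_{A'B'})).
\end{equation}
The tools for this are exactly those introduced in the proof of Proposition~\ref{prop:FoR-local-iso}: the CPTP ``inverting'' maps $\mathcal{T}^{\mathcal{U}}_{A'\to A}$ and $\mathcal{T}^{\mathcal{V}}_{B'\to B}$ defined as in \eqref{eq:T-maps} and satisfying the identities \eqref{eq:invert-isometry-A}--\eqref{eq:invert-isometry-B}, together with isometric invariance and monotonicity of the fidelity.

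For the inequality $\leq$, I would take an arbitrary EB channel $\mathcal{E}_A = \mathcal{P}_{X\to A}\circ\mathcal{M}_{A\to X}$ and consider the candidate channel on $A'$ given by $\widetilde{\mathcal{E}}_{A'} \equiv \mathcal{U}_{A\to A'}\circ\mathcal{E}_A\circ\mathcal{T}^{\mathcal{U}}_{A'\to A}$. A short calculation using isometric invariance of the fidelity and the identity $\mathcal{T}^{\mathcal{U}}_{A'\to A}\circ\mathcal{U}_{A\to A'} = \mathrm{id}_A$ yields
\begin{equation}
F(\rho_{AB},\mathcal{E}_A(\rho_{AB})) = F(\sigma_{A'B'},\widetilde{\mathcal{E}}_{A'}(\sigma_{A'B'})),
\end{equation}
and taking the supremum over $\mathcal{E}_A$ then gives one direction. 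For the reverse inequality $\geq$, I would take an arbitrary EB channel $\mathcal{E}_{A'}$ on $A'$ and consider $\widehat{\mathcal{E}}_A \equiv \mathcal{T}^{\mathcal{U}}_{A'\to A}\circ\mathcal{E}_{A'}\circ\mathcal{U}_{A\to A'}$. Applying $\mathcal{T}^{\mathcal{U}}\otimes\mathcal{T}^{\mathcal{V}}$ to both arguments of the fidelity and invoking monotonicity, together with the identities \eqref{eq:invert-isometry-A}--\eqref{eq:invert-isometry-B}, yields
\begin{equation}
F(\sigma_{A'B'},\mathcal{E}_{A'}(\sigma_{A'B'})) \leq F(\rho_{AB},\widehat{\mathcal{E}}_A(\rho_{AB})),
\end{equation}
and the supremum over $\mathcal{E}_{A'}$ closes the argument.

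The one step that needs actual care, and is the main (though mild) obstacle, is to verify that $\widetilde{\mathcal{E}}_{A'}$ and $\widehat{\mathcal{E}}_A$ are indeed entanglement breaking, since otherwise the candidate channels would lie outside the feasible set of the defining supremum. This follows from the measure-and-prepare characterization: in both cases we can regroup the composition as $(\text{preparation}) \circ (\text{measurement})$, because pre-composing a POVM by a CPTP map gives a POVM (on the new input space) and post-composing a preparation map by an isometric or CPTP map still produces a valid preparation. Hence both $\widetilde{\mathcal{E}}_{A'}$ and $\widehat{\mathcal{E}}_A$ are entanglement breaking, the two supremums coincide, and the equality $D_F(\overline{A};B)_\rho = D_F(\overline{A'};B')_\sigma$ follows after applying $-\log$.
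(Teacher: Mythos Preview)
Your proposal is correct and follows essentially the same route as the paper's proof: both reduce to the characterization in Proposition~\ref{prop:discord-rewrite}, build the candidate channels $\mathcal{U}\circ\mathcal{E}_A\circ\mathcal{T}^{\mathcal{U}}$ and $\mathcal{T}^{\mathcal{U}}\circ\mathcal{E}_{A'}\circ\mathcal{U}$ using the inverting maps from \eqref{eq:T-maps}, and use isometric invariance for one direction and monotonicity of fidelity for the other. Your explicit verification that the composed channels remain entanglement breaking is a point the paper leaves implicit, so your write-up is slightly more careful on that score.
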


\begin{proof}
Let $\mathcal{E}_{A}$ be some entanglement-breaking channel. Let
$\mathcal{T}_{A^{\prime}\rightarrow A}^{\mathcal{U}}$ and $\mathcal{T}%
_{B^{\prime}\rightarrow B}^{\mathcal{V}}$ denote the CPTP\ maps defined in
(\ref{eq:T-maps}). Then from invariance of fidelity with respect to isometries
and the identities in (\ref{eq:invert-isometry-A})-(\ref{eq:invert-isometry-B}%
), we find that 
\begin{align}
&  F(  \rho_{AB},\mathcal{E}_{A}(  \rho_{AB})  )
\nonumber\\
&  =F\left(  \left(  \mathcal{U}_{A\rightarrow A^{\prime}}\otimes
\mathcal{V}_{B\rightarrow B^{\prime}}\right)  (  \rho_{AB})
,\left(  \mathcal{U}_{A\rightarrow A^{\prime}}\otimes\mathcal{V}_{B\rightarrow
B^{\prime}}\right)  \left(  \mathcal{E}_{A}(  \rho_{AB})  \right)
\right) \\
&  =F\left(  \left(  \mathcal{U}_{A\rightarrow A^{\prime}}\otimes
\mathcal{V}_{B\rightarrow B^{\prime}}\right)  (  \rho_{AB})
,\left(  \mathcal{U}_{A\rightarrow A^{\prime}}\circ\mathcal{E}_{A}%
\circ\mathcal{T}_{A^{\prime}\rightarrow A}^{\mathcal{U}}\right)  \left[
\left(  \mathcal{U}_{A\rightarrow A^{\prime}}\otimes\mathcal{V}_{B\rightarrow
B^{\prime}}\right)  (  \rho_{AB})  \right]  \right) \\
&  \leq\sup_{\mathcal{E}_{A^{\prime}}}F\left(  \left(  \mathcal{U}%
_{A\rightarrow A^{\prime}}\otimes\mathcal{V}_{B\rightarrow B^{\prime}}\right)
(  \rho_{AB})  ,\mathcal{E}_{A^{\prime}}\left(  \left(
\mathcal{U}_{A\rightarrow A^{\prime}}\otimes\mathcal{V}_{B\rightarrow
B^{\prime}}\right)  (  \rho_{AB})  \right)  \right)  .
\end{align}
Since the inequality is true for any entanglement breaking channel
$\mathcal{E}_{A}$, we find after applying a negative logarithm that%
\begin{equation}
D_{F}(  \overline{A};B)  _{\rho}\geq D_{F}\left(  \overline
{A};B\right)  _{\left(  \mathcal{U}\otimes\mathcal{V}\right)  \left(
\rho\right)  }.
\end{equation}
Now consider that%
\begin{align}
&  F\left(  \left(  \mathcal{U}_{A\rightarrow A^{\prime}}\otimes
\mathcal{V}_{B\rightarrow B^{\prime}}\right)  (  \rho_{AB})
,\mathcal{E}_{A^{\prime}}\left[  \left(  \mathcal{U}_{A\rightarrow A^{\prime}%
}\otimes\mathcal{V}_{B\rightarrow B^{\prime}}\right)  \left(  \rho
_{AB}\right)  \right]  \right) \nonumber\\
&  =F\left(  \mathcal{U}_{A\rightarrow A^{\prime}}(  \rho_{AB})
,\left(  \mathcal{E}_{A^{\prime}}\circ\mathcal{U}_{A\rightarrow A^{\prime}%
}\right)  (  \rho_{AB})  \right) \\
&  \leq F\left(  \left(  \mathcal{T}_{A^{\prime}\rightarrow A}^{\mathcal{U}%
}\circ\mathcal{U}_{A\rightarrow A^{\prime}}\right)  (  \rho_{AB})
,\left(  \mathcal{T}_{A^{\prime}\rightarrow A}^{\mathcal{U}}\circ
\mathcal{E}_{A^{\prime}}\circ\mathcal{U}_{A\rightarrow A^{\prime}}\right)
(  \rho_{AB})  \right) \\
&  =F\left(  \rho_{AB},\left(  \mathcal{T}_{A^{\prime}\rightarrow
A}^{\mathcal{U}}\circ\mathcal{E}_{A^{\prime}}\circ\mathcal{U}_{A\rightarrow
A^{\prime}}\right)  (  \rho_{AB})  \right) \\
&  \leq\sup_{\mathcal{E}_{A}}F\left(  \rho_{AB},\mathcal{E}_{A}\left(
\rho_{AB}\right)  \right)  .
\end{align}

Since the inequality is true for any entanglement breaking channel
$\mathcal{E}_{A^{\prime}}$, we find after applying a negative logarithm that%
\begin{equation}
D_{F}( \overline{A};B) _{\rho}\leq D_{F}\left(  \overline{A};B\right)
_{\left(  \mathcal{U}\otimes\mathcal{V}\right)  \left(  \rho\right)  },
\end{equation}
which gives the statement of the proposition.
\end{proof}

\begin{proposition}
[Exact faithfulness]The surprisal of measurement recoverability $D_{F}\left(
\overline{A};B\right)  _{\rho}$\ is equal to zero if and only if $\rho_{AB}$
is a classical-quantum state, having the form%
\begin{equation}
\rho_{AB}=\sum_{x}p_{X}( x) \left\vert x\right\rangle \left\langle
x\right\vert _{A}\otimes\rho_{B}^{x},
\end{equation}
for some orthonormal basis $\left\{  \left\vert x\right\rangle \right\}  $,
probability distribution $p_{X}( x) $, and states $\left\{  \rho_{B}%
^{x}\right\}  $.
\end{proposition}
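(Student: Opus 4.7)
The plan is to reduce the statement to the known fact that the ordinary quantum discord $D(\overline{A};B)_{\rho}$ vanishes precisely on classical-quantum states (cited from \cite{H06}), by passing through Propositions~\ref{prop:discord-rewrite} and~\ref{prop:approx-faithful}. The key rewriting is $D_{F}(\overline{A};B)_{\rho}=-\log\sup_{\mathcal{E}_{A}}F(\rho_{AB},\mathcal{E}_{A}(\rho_{AB}))$, where the supremum ranges over entanglement breaking channels on $A$.

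For the \textbf{if} direction, suppose $\rho_{AB}=\sum_{x}p_{X}(x)|x\rangle\langle x|_{A}\otimes\rho_{B}^{x}$. Define the measure-and-prepare (hence entanglement breaking) channel
\begin{equation}
\mathcal{E}_{A}(\omega_{A})\equiv\sum_{x}\langle x|\omega_{A}|x\rangle\ |x\rangle\langle x|_{A},
\end{equation}
so that $(\mathcal{E}_{A}\otimes\mathrm{id}_{B})(\rho_{AB})=\rho_{AB}$. This forces $F(\rho_{AB},\mathcal{E}_{A}(\rho_{AB}))=1$, and hence $D_{F}(\overline{A};B)_{\rho}=0$ by Proposition~\ref{prop:discord-rewrite}.

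For the \textbf{only if} direction, suppose $D_{F}(\overline{A};B)_{\rho}=0$. Since we work in finite dimensions, the set of entanglement breaking channels on $A$ is compact (it is a bounded, closed subset of the CPTP maps) and the fidelity is continuous in its second argument. Therefore the supremum in Proposition~\ref{prop:discord-rewrite} is attained by some entanglement breaking channel $\mathcal{E}_{A}^{\star}$ with $F(\rho_{AB},\mathcal{E}_{A}^{\star}(\rho_{AB}))=1$, which is equivalent to the fixed-point equation $\rho_{AB}=\mathcal{E}_{A}^{\star}(\rho_{AB})$. In particular $\|\rho_{AB}-\mathcal{E}_{A}^{\star}(\rho_{AB})\|_{1}=0$, so applying the implication \eqref{eq:first-impl-1}--\eqref{eq:first-impl-2} of Proposition~\ref{prop:approx-faithful} with $\varepsilon=0$ (and recalling $h_{2}(0)=0$) gives $D(\overline{A};B)_{\rho}\leq 0$. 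Combined with the non-negativity of the ordinary quantum discord, we conclude $D(\overline{A};B)_{\rho}=0$, and then the classical characterization of zero-discord states from \cite[Lemma~8.12]{H06}, cited earlier in this paper, yields that $\rho_{AB}$ has the asserted classical-quantum form.

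The routine part is the \textbf{if} direction and the compactness argument; the one nontrivial move is the \textbf{only if} direction, where the natural guess --- writing $\mathcal{E}_{A}^{\star}=\mathcal{P}\circ\mathcal{M}$ and reading off the fixed-point structure directly --- only yields that $\rho_{AB}$ is separable with respect to the prepared ensemble, not that it is classical on $A$. The cleanest way to upgrade "separable fixed point of an entanglement breaking channel" to "classical-quantum on $A$" is to borrow the Alicki--Fannes-based bound already established in Proposition~\ref{prop:approx-faithful} and invoke the known exact characterization of zero-discord states, rather than to reprove that characterization here.
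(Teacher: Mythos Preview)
Your proof is correct, but the \textbf{only if} direction takes a genuinely different route from the paper's. The paper argues directly from the fixed-point equation $\rho_{AB}=\mathcal{E}_{A}(\rho_{AB})$: it forms the Ces\`aro-mean projection $\overline{\mathcal{E}_{A}}\equiv\lim_{K\to\infty}\frac{1}{K}\sum_{k=1}^{K}\mathcal{E}_{A}^{k}$, observes that $\rho_{AB}$ is also fixed by $\overline{\mathcal{E}_{A}}$, and then invokes the structure theorem of \cite[Theorem~5.3]{FNW14}, which says that $\overline{\mathcal{E}_{A}}$ is an entanglement breaking channel of the special form $\sum_{x}\operatorname{Tr}\{\Lambda^{x}_{A}(\cdot)\}\sigma^{x}_{A}$ with the $\sigma^{x}_{A}$ having mutually orthogonal supports; applying such a channel manifestly produces a classical-quantum state, so $\rho_{AB}$ must already have been one. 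Your argument instead reduces to the ordinary von Neumann discord via the Alicki--Fannes half of Proposition~\ref{prop:approx-faithful} and then quotes the zero-discord characterization of \cite[Lemma~8.12]{H06}. Your route has the advantage of recycling results already proved in the paper and avoiding the external structure theorem from \cite{FNW14}; the paper's route is more direct in that it never leaves the fidelity-based framework to pass through the entropic discord, and it exposes the concrete mechanism (orthogonal-support outputs of the averaged channel) that forces the classical-quantum form.
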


\begin{proof}
Suppose that the state is classical-quantum. Then it is a fixed point of the
entanglement breaking map $\sum_{x}\left\vert x\right\rangle \left\langle
x\right\vert _{A}\left(  \cdot\right)  \left\vert x\right\rangle \left\langle
x\right\vert _{A}$, so that the fidelity of measurement recovery is equal to
one and its surprisal is equal to zero. On the other hand, suppose that
$D_{F}( \overline{A};B) _{\rho}=0$. Then this means that there exists an
entanglement breaking channel $\mathcal{E}_{A}$ of which $\rho_{AB}$ is a
fixed point (since $F\left(  \rho_{AB},\mathcal{E}_{A}\left(  \rho
_{AB}\right)  \right)  =1$ is equivalent to $\rho_{AB}=\mathcal{E}_{A}\left(
\rho_{AB}\right)  $), and furthermore, applying the fixed point projection%
\begin{equation}
\overline{\mathcal{E}_{A}}\equiv\lim_{K\rightarrow\infty}\frac{1}{K}\sum
_{k=1}^{K}\mathcal{E}_{A}^{k}%
\end{equation}
leaves $\rho_{AB}$ invariant. The map $\overline{\mathcal{E}_{A}}$ has been
characterized in \cite[Theorem 5.3]{FNW14} to be an entanglement breaking
channel of the following form:%
\begin{equation}
\overline{\mathcal{E}_{A}}\left(  \cdot\right)  =\sum_{x}\operatorname{Tr}%
\left\{  \Lambda^{x}_A\left(  \cdot\right)  \right\}  \sigma^{x}_A,
\end{equation}
where the states $\sigma^{x}_A$ have orthogonal support, $\Lambda^{x}_A\geq0$, and
$\sum_{x}\Lambda^{x}_A=I$. Applying this channel to $\rho_{AB}$ then gives a
classical-quantum state, and since $\rho_{AB}$ is invariant with respect to
the action of this channel to begin with, it must have been classical-quantum
from the start.
\end{proof}

\begin{proposition}
[Dimension bound]\label{prop:disc-dim-bound}The surprisal of measurement
recoverability obeys the following dimension bound:%
\begin{equation}
D_{F}( \overline{A};B) _{\rho}\leq\log\left\vert A\right\vert ,
\end{equation}
or equivalently,%
\begin{equation}
\sup_{\mathcal{E}_{A}}F\left(  \rho_{AB},\mathcal{E}_{A}\left(  \rho
_{AB}\right)  \right)  \geq\frac{1}{\left\vert A\right\vert }.
\end{equation}

\end{proposition}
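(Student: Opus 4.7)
The plan is to invoke Proposition~\ref{prop:discord-rewrite} to reformulate the claim as the existence of an entanglement-breaking channel $\mathcal{E}_{A}$ achieving $F(\rho_{AB},\mathcal{E}_{A}(\rho_{AB}))\geq 1/|A|$, since then $D_{F}(\overline{A};B)_{\rho}\leq\log|A|$ is immediate. A natural guess, the completely depolarizing channel $\sigma\mapsto\operatorname{Tr}(\sigma)\pi_{A}$, is too noisy and only yields $1/|A|^{2}$, so I would instead take the depolarizing channel at the entanglement-breaking threshold:
\begin{equation}
\mathcal{E}_{\lambda}(\sigma)\equiv\lambda\,\sigma+(1-\lambda)\operatorname{Tr}(\sigma)\,\pi_{A},\qquad \lambda=\frac{1}{|A|+1}.
\end{equation}
This channel is well known to be entanglement-breaking precisely at $\lambda=1/(|A|+1)$ (its Choi state is an isotropic state with singlet fraction $1/|A|$, which sits exactly on the separability boundary); equivalently, $\mathcal{E}_{\lambda}$ is a Haar-uniform mixture of rank-one measure-and-prepare channels and is therefore manifestly a convex combination of entanglement-breaking channels.

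Applying $\mathcal{E}_{\lambda}\otimes\text{id}_{B}$ to $\rho_{AB}$ gives $\lambda\rho_{AB}+(1-\lambda)\pi_{A}\otimes\rho_{B}$. By joint concavity of fidelity (applied with the first argument held fixed at $\rho_{AB}$),
\begin{equation}
F(\rho_{AB},\mathcal{E}_{\lambda}\otimes\text{id}_{B}(\rho_{AB}))\geq \lambda\,F(\rho_{AB},\rho_{AB})+(1-\lambda)\,F(\rho_{AB},\pi_{A}\otimes\rho_{B}) = \lambda+(1-\lambda)\,F(\rho_{AB},\pi_{A}\otimes\rho_{B}).
\end{equation}
I would then bound $F(\rho_{AB},\pi_{A}\otimes\rho_{B})\geq 1/|A|^{2}$ by replaying the R\'enyi-entropy chain from the proof of Proposition~\ref{prop:dim-bound}: write $F(\rho_{AB},\pi_{A}\otimes\rho_{B})=(1/|A|)\,2^{H_{1/2}(A|B)_{\rho}}$, then use duality of conditional R\'enyi entropies ($H_{1/2}(A|B)_{\rho}=-H_{3/2}(A|C)_{\phi}$ for a purification $\phi_{ABC}$) together with data processing and the dimension bound $H_{3/2}(A)\leq\log|A|$ to conclude $H_{1/2}(A|B)_{\rho}\geq-\log|A|$.

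Assembling these pieces with $\lambda=1/(|A|+1)$ gives
\begin{equation}
F(\rho_{AB},\mathcal{E}_{\lambda}\otimes\text{id}_{B}(\rho_{AB}))\geq \frac{1}{|A|+1}+\frac{|A|}{(|A|+1)|A|^{2}}=\frac{1}{|A|},
\end{equation}
and the dimension bound follows on taking the negative logarithm. Note that the bound is tight at $\rho_{AB}=\Phi_{AB}$, consistent with the normalization statement that $D_{F}(\overline{A};B)_{\Phi}=\log|A|$. The only subtle step is selecting $\lambda$ so that the two terms balance \emph{exactly} to $1/|A|$ while remaining at the entanglement-breaking threshold; the rest reduces to tools already established earlier in the paper.
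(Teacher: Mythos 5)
Your proof is correct, but it takes a genuinely different route from the paper's. The paper simply takes the entanglement-breaking channel to be the completely dephasing channel $\overline{\Delta}_{A}$ in an arbitrary basis, expands $\rho_{AB}$ in its spectral decomposition, uses joint concavity of the root fidelity together with convexity of $-\log$, and then recognizes the resulting pure-state entanglement fidelities as R\'enyi $2$-entropies of the diagonal distributions, each bounded by $\log|A|$. You instead use the depolarizing channel at the entanglement-breaking threshold $\lambda=1/(|A|+1)$ (correctly justified either via separability of its isotropic Choi state or via the identity expressing it as a Haar measure-and-prepare channel), separate concavity of the fidelity in its second argument, and the bound $F(\rho_{AB},\pi_{A}\otimes\rho_{B})\geq1/|A|^{2}$ recycled from the proof of Proposition~\ref{prop:dim-bound}; the two terms then balance to exactly $1/|A|$. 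Your approach buys a pleasing reuse of the earlier dimension-bound machinery and makes the tightness at $\Phi_{AB}$ transparent, at the cost of needing the (true but nontrivial) entanglement-breaking threshold fact; the paper's choice of channel is more elementary and self-contained. Two small points of rigor: the concavity step you need is the \emph{separate} concavity of the squared fidelity, $F(\rho,\sum_{i}p_{i}\sigma_{i})\geq\sum_{i}p_{i}F(\rho,\sigma_{i})$ --- this is a standard fact, but it does not follow by simply squaring the joint concavity of $\sqrt{F}$, so you should cite it as such; and the relation $F(\rho_{AB},\pi_{A}\otimes\rho_{B})=(1/|A|)2^{H_{1/2}(A|B)_{\rho}}$ should be an inequality $\geq$ (since $\|X\|_{1}\geq|\operatorname{Tr}X|$), which is fortunately the direction you need.
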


\begin{proof}
The idea behind the proof is to consider an entanglement breaking channel
$\mathcal{E}_{A}$ that completely dephases the system $A$. Let $\overline
{\Delta}_{A}$ denote such a channel, so that%
\begin{equation}
\overline{\Delta}_{A}\left(  \cdot\right)  \equiv\sum_{i}\left\vert
i\right\rangle \left\langle i\right\vert _{A}\left(  \cdot\right)  \left\vert
i\right\rangle \left\langle i\right\vert _{A},
\end{equation}
where $\left\{  \left\vert i\right\rangle _{A}\right\}  $ is some orthonormal
basis spanning the space for the $A$ system. Let a spectral decomposition of
$\rho_{AB}$ be given by%
\begin{equation}
\rho_{AB}=\sum_{x}p_{X}( x) \left\vert \psi^{x}\right\rangle \left\langle
\psi^{x}\right\vert _{AB},
\end{equation}
where $p_{X}$ is a probability distribution and $\left\{  \left\vert \psi
^{x}\right\rangle _{AB}\right\}  $ is a set of pure states. We then find that%
\begin{align}
  D_{F}( \overline{A};B) _{\rho}
&  \leq-\log F\left(  \rho_{AB},\overline{\Delta}_{A}( \rho_{AB}) \right) \\
&  =-2\log\sqrt{F}\left(  \rho_{AB},\overline{\Delta}_{A}\left(  \rho
_{AB}\right)  \right) \\
&  \leq\sum_{x}p_{X}( x) \left[  -2\log\sqrt{F}\left(  \psi_{AB}^{x}%
,\overline{\Delta}_{A}\left(  \psi_{AB}^{x}\right)  \right)  \right] \\
&  =\sum_{x}p_{X}( x) \left[  -\log\left\langle \psi^{x}\right\vert
_{AB}\overline{\Delta}_{A}\left(  \psi_{AB}^{x}\right)  \left\vert \psi
^{x}\right\rangle _{AB}\right] \\
&  =\sum_{x}p_{X}( x) \left[  -\log\sum_{i}\left[  \left\langle i\right\vert
_{A}\psi_{A}^{x}\left\vert i\right\rangle _{A}\right]  ^{2}\right] \\
&  \leq\log\left\vert A\right\vert .
\end{align}
The second inequality follows from joint concavity of the root fidelity
$\sqrt{F}$ and convexity of $-\log$. The last equality is a consequence of a
well known expression for the entanglement fidelity of a channel (see, e.g.,
\cite[Theorem~9.5.1]{W13}). The last inequality follows by recognizing%
\begin{equation}
-\log\sum_{i}\left[  \left\langle i\right\vert _{A}\psi_{A}^{x}\left\vert
i\right\rangle _{A}\right]  ^{2}%
\end{equation}
as the R\'{e}nyi 2-entropy of the probability distribution $\left\langle
i\right\vert _{A}\psi_{A}^{x}\left\vert i\right\rangle _{A}$ and from the fact
that all R\'{e}nyi entropies are bounded from above by the logarithm of the
alphabet size of the distribution, which in this case is $\log\left\vert
A\right\vert $.
\end{proof}

Given that the R\'{e}nyi 2-entropy of the marginal of a bipartite pure state
is an entanglement measure, the following proposition demonstrates that the
surprisal of measurement recoverability reduces to an entanglement measure
when evaluated for pure states.

\begin{proposition}
[Pure states]\label{prop:discord-pure-states}Let $\psi_{AB}$ be a pure state.
Then%
\begin{equation}
D_{F}( \overline{A};B) _{\psi}=-\log\operatorname{Tr}\left\{  \psi_{A}%
^{2}\right\}  .
\end{equation}

\end{proposition}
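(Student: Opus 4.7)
The plan is to invoke Proposition~\ref{prop:discord-rewrite} to reduce the claim to establishing
\begin{equation}
\sup_{\mathcal{E}_{A}} F\!\left(\psi_{AB},\mathcal{E}_{A}(\psi_{AB})\right) = \operatorname{Tr}\{\psi_{A}^{2}\},
\end{equation}
where the supremum is over entanglement-breaking channels acting on $A$. I would then prove this by matching upper and lower bounds on the supremum.

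For the achievability direction, I would take $\mathcal{E}_{A}$ to be the completely dephasing channel $\overline{\Delta}_{A}$ in the Schmidt basis $\{|i\rangle_{A}\}$ of $\psi_{A}$. Writing $|\psi\rangle_{AB}=\sum_{i}\sqrt{\lambda_{i}}|i\rangle_{A}|i\rangle_{B}$, a direct calculation (essentially the one already appearing in the proof of Proposition~\ref{prop:disc-dim-bound}) yields $\overline{\Delta}_{A}(\psi_{AB}) = \sum_{i}\lambda_{i}\,|i\rangle\langle i|_{A}\otimes|i\rangle\langle i|_{B}$ and hence $F(\psi_{AB},\overline{\Delta}_{A}(\psi_{AB})) = \sum_{i}\lambda_{i}^{2} = \operatorname{Tr}\{\psi_{A}^{2}\}$.

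For the matching upper bound, I would exploit two structural consequences of $\mathcal{E}_{A}$ being entanglement-breaking: first, $\omega_{AB} := \mathcal{E}_{A}(\psi_{AB})$ is separable across $A:B$ and so admits a product-state decomposition $\omega_{AB} = \sum_{k}p_{k}|a_{k}\rangle\langle a_{k}|_{A}\otimes|b_{k}\rangle\langle b_{k}|_{B}$ with $\|a_{k}\|=\|b_{k}\|=1$; second, trace preservation of $\mathcal{E}_{A}$ forces the marginal identity $\omega_{B}=\psi_{B}$. Expanding each overlap via the Schmidt decomposition and applying Cauchy--Schwarz asymmetrically by grouping $\sqrt{\lambda_{i}}\langle i|b_{k}\rangle$ together, so that the trivial factor $\sum_{i}|\langle i|a_{k}\rangle|^{2}=1$ drops out, I obtain $|\langle\psi|(|a_{k}\rangle_{A}|b_{k}\rangle_{B})|^{2} \leq \langle b_{k}|\psi_{B}|b_{k}\rangle$. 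Summing over $k$ and invoking $\omega_{B}=\psi_{B}$ then gives $F(\psi_{AB},\omega_{AB}) \leq \operatorname{Tr}\{\psi_{B}^{2}\} = \operatorname{Tr}\{\psi_{A}^{2}\}$ for every entanglement-breaking $\mathcal{E}_{A}$, completing the argument.

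The main obstacle is identifying a bound tight enough to match achievability. A cruder estimate using only separability of $\omega_{AB}$ would produce $\sup_{\omega\in\mathrm{SEP}(A:B)}\langle\psi|\omega|\psi\rangle = \|\psi_{A}\|_{\infty}$, which is the overlap appearing in the geometric measure of entanglement (cf.~Proposition~\ref{prop:pure-state}) and is strictly larger than $\operatorname{Tr}\{\psi_{A}^{2}\}$ except when $\psi_{A}$ has a flat spectrum on its support. The crucial extra ingredient is the marginal constraint $\omega_{B}=\psi_{B}$, which is automatic for any channel acting only on $A$ but violated by generic separable states; performing Cauchy--Schwarz asymmetrically so as to absorb the $\sqrt{\lambda_{i}}$ weights into the constrained $B$-factor, leaving $\|a_{k}\|=1$ as the bound on the unconstrained $A$-factor, is precisely what converts this marginal constraint into the sharper purity bound.
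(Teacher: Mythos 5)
Your proof is correct, and while the achievability half coincides with the paper's (the paper's optimal Kraus choice $\{\vert\psi^{x}\rangle\langle\psi^{x}\vert\}_{x}$ in the eigenbasis of $\psi_{A}$ \emph{is} your dephasing channel $\overline{\Delta}_{A}$ in the Schmidt basis), your converse takes a genuinely different route. The paper stays entirely on the $A$ system: it writes $F(\psi_{AB},\mathcal{E}_{A}(\psi_{AB}))$ as the entanglement fidelity $\sum_{x}\vert\langle\varphi_{x}\vert\psi_{A}\vert\phi_{x}\rangle\vert^{2}$ using the special Kraus form $\{\vert\phi_{x}\rangle\langle\varphi_{x}\vert\}_{x}$ of an entanglement-breaking channel, then applies the operator inequality $\vert\phi_{x}\rangle\langle\phi_{x}\vert\leq I_{A}$ and POVM completeness $\sum_{x}\vert\varphi_{x}\rangle\langle\varphi_{x}\vert=I_{A}$ to land on $\operatorname{Tr}\{\psi_{A}^{2}\}$ in one line. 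You instead discard the channel structure and retain only two properties of the output state $\omega_{AB}$ --- separability and the marginal constraint $\omega_{B}=\psi_{B}$ --- and then run Cauchy--Schwarz in the Schmidt basis so that the $\sqrt{\lambda_{i}}$ weights are absorbed into the constrained $B$-factor. The paper's argument is shorter once the entanglement-fidelity formula is invoked; yours proves the slightly stronger statement that $\sup\{\langle\psi\vert\omega\vert\psi\rangle:\omega\in\mathrm{SEP}(A\!:\!B),\ \omega_{B}=\psi_{B}\}=\operatorname{Tr}\{\psi_{A}^{2}\}$, a supremum over a set that a priori contains all entanglement-breaking outputs, and your closing remark correctly diagnoses why the marginal constraint is exactly what separates this from the geometric-measure value $\Vert\psi_{A}\Vert_{\infty}$. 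Both arguments are sound; the two Cauchy--Schwarz steps are essentially dual to one another (one applied on the $A$ side to Kraus vectors, one on the $B$ side to Schmidt coefficients).
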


\begin{proof}
For a pure state $\psi_{AB}$, consider that%
\begin{align}
  D_{F}( \overline{A};B) _{\psi}
&  =-\log\sup_{\mathcal{E}_{A}}F\left(  \psi_{AB},\mathcal{E}_{A}\left(
\psi_{AB}\right)  \right) \\
&  =-\log\sup_{\substack{\left\vert \phi_{x}\right\rangle ,\left\vert
\varphi_{x}\right\rangle :\\\left\Vert \left\vert \phi_{x}\right\rangle
\right\Vert _{2}=1,\\\sum_{x}\left\vert \varphi_{x}\right\rangle \left\langle
\varphi_{x}\right\vert =I}}\sum_{x}\left\vert \left\langle \varphi
_{x}\right\vert _{A}\psi_{A}\left\vert \phi_{x}\right\rangle _{A}\right\vert
^{2},
\end{align}
where the optimization in the second line is over pure-state vectors
$\left\vert \phi_{x}\right\rangle $ and corresponding measurement vectors
$\left\vert \varphi_{x}\right\rangle $ satisfying $\sum_{x}\left\vert
\varphi_{x}\right\rangle \left\langle \varphi_{x}\right\vert =I$. The second
equality follows from the formula for entanglement fidelity (see, e.g.,
\cite[Theorem~9.5.1]{W13}) and the fact that the Kraus operators of an
entanglement-breaking channel have the special form $\left\{  \left\vert
\phi_{x}\right\rangle \left\langle \varphi_{x}\right\vert \right\}  _{x}$ with
$\left\vert \phi_{x}\right\rangle $ pure quantum states and $\sum
_{x}\left\vert \varphi_{x}\right\rangle \left\langle \varphi_{x}\right\vert
=I$ \cite{HSR03}. Consider for all such choices, we have that%
\begin{align}
  \sum_{x}\left\vert \left\langle \varphi_{x}\right\vert _{A}\psi
_{A}\left\vert \phi_{x}\right\rangle _{A}\right\vert ^{2}
&  =\sum_{x}\left\langle \varphi_{x}\right\vert _{A}\psi_{A}\left\vert
\phi_{x}\right\rangle \left\langle \phi_{x}\right\vert _{A}\psi_{A}\left\vert
\varphi_{x}\right\rangle _{A}\\
&  \leq\sum_{x}\left\langle \varphi_{x}\right\vert _{A}\psi_{A}^{2}\left\vert
\varphi_{x}\right\rangle _{A}\\
&  =\sum_{x}\text{Tr}\left\{  \left\vert \varphi_{x}\right\rangle \left\langle
\varphi_{x}\right\vert _{A}\psi_{A}^{2}\right\} \\
&  =\text{Tr}\left\{  \psi_{A}^{2}\right\}  ,
\end{align}
where the inequality follows from the operator inequality $\left\vert \phi
_{x}\right\rangle \left\langle \phi_{x}\right\vert _{A}\leq I_{A}$. However, a
particular choice of Kraus operators $\left\{  \left\vert \phi_{x}%
\right\rangle \left\langle \varphi_{x}\right\vert \right\}  _{x}$ is $\left\{
\left\vert \psi^{x}\right\rangle \left\langle \psi^{x}\right\vert \right\}
_{x}$, where $\left\{  \left\vert \psi^{x}\right\rangle \right\}  _{x}$ is the
set of eigenvectors of $\psi_{A}$. For this choice, we find that%
\begin{equation}
\sum_{x}\left\vert \left\langle \psi^{x}\right\vert _{A}\psi_{A}\left\vert
\psi^{x}\right\rangle _{A}\right\vert ^{2}=\text{Tr}\left\{  \psi_{A}%
^{2}\right\}  ,
\end{equation}
so that we can conclude that%
\begin{equation}
\sup_{\left\vert \phi_{x}\right\rangle ,\left\vert \varphi_{x}\right\rangle
:\sum_{x}\left\vert \varphi_{x}\right\rangle \left\langle \varphi
_{x}\right\vert =I}\sum_{x}\left\vert \left\langle \varphi_{x}\right\vert
_{A}\psi_{A}\left\vert \phi_{x}\right\rangle _{A}\right\vert ^{2}
=\text{Tr}\left\{  \psi_{A}^{2}\right\}  .
\end{equation}

\end{proof}

\begin{proposition}
[Normalization]The surprisal of measurement recoverability $D_{F}\left(
\overline{A};B\right)  _{\Phi}$ is equal to $\log d$ for a maximally entangled
state $\Phi_{AB}$ with Schmidt rank $d$.
\end{proposition}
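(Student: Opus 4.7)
The plan is to invoke Proposition~\ref{prop:discord-pure-states} directly, since $\Phi_{AB}$ is a pure state. That proposition gives
\begin{equation}
D_{F}(\overline{A};B)_{\Phi} = -\log\operatorname{Tr}\{\Phi_{A}^{2}\},
\end{equation}
so everything reduces to computing the purity of the marginal $\Phi_{A}$.

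For a maximally entangled state of Schmidt rank $d$, the reduced state is $\Phi_{A} = I_{A}/d$ (the maximally mixed state on a $d$-dimensional subspace). Therefore $\Phi_{A}^{2} = I_{A}/d^{2}$ on the supporting subspace, and $\operatorname{Tr}\{\Phi_{A}^{2}\} = 1/d$. Substituting gives
\begin{equation}
D_{F}(\overline{A};B)_{\Phi} = -\log(1/d) = \log d.
\end{equation}

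No step here is an obstacle: once the pure-state formula of Proposition~\ref{prop:discord-pure-states} is in hand, this is an immediate calculation. As a sanity check, the value saturates the dimension bound of Proposition~\ref{prop:disc-dim-bound}, confirming that the maximally entangled state is an extremal example for $D_{F}(\overline{A};B)$.
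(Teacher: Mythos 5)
Your proof is correct and follows the paper's own argument exactly: both invoke the pure-state formula $D_{F}(\overline{A};B)_{\psi}=-\log\operatorname{Tr}\{\psi_{A}^{2}\}$ from Proposition~\ref{prop:discord-pure-states} and then use $\Phi_{A}=I_{A}/d$ to compute $\operatorname{Tr}\{\Phi_{A}^{2}\}=1/d$. The added observation that this saturates the dimension bound is a nice sanity check but not needed.
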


\begin{proof}
This is a direct consequence of Proposition~\ref{prop:discord-pure-states} and
the fact that $\Phi_{A}=I_{A}/d$.
\end{proof}

\begin{proposition}
[Monotonicity]The surprisal of measurement recoverability is monotone with
respect to quantum operations on the unmeasured system, i.e.,%
\begin{equation}
D_{F}( \overline{A};B) _{\rho}\geq D_{F}\left(  \overline{A};B^{\prime
}\right)  _{\sigma},
\end{equation}
where $\sigma_{AB^{\prime}}\equiv\mathcal{N}_{B\rightarrow B^{\prime}}\left(
\rho_{AB}\right)  $.
\end{proposition}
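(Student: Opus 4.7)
The plan is to leverage the rewriting of $D_F(\overline{A};B)_\rho$ established in Proposition~\ref{prop:discord-rewrite}, namely
\begin{equation}
D_F(\overline{A};B)_\rho = -\log\sup_{\mathcal{E}_A} F\left(\rho_{AB},\mathcal{E}_A(\rho_{AB})\right),
\end{equation}
and reduce the claim to the monotonicity of the fidelity under quantum operations. After taking negative logarithms, the inequality to prove is equivalent to
\begin{equation}
\sup_{\mathcal{E}_A} F(\rho_{AB},\mathcal{E}_A(\rho_{AB})) \leq \sup_{\mathcal{E}_A} F(\sigma_{AB'},\mathcal{E}_A(\sigma_{AB'})).
\end{equation}

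First, I would fix an arbitrary entanglement breaking channel $\mathcal{E}_A$ acting on system $A$. Applying the channel $\mathcal{N}_{B\to B'}$ to both arguments of the fidelity and using monotonicity of the fidelity under quantum operations gives
\begin{equation}
F(\rho_{AB},\mathcal{E}_A(\rho_{AB})) \leq F\left(\mathcal{N}_{B\to B'}(\rho_{AB}),\mathcal{N}_{B\to B'}(\mathcal{E}_A(\rho_{AB}))\right).
\end{equation}
The key observation is that $\mathcal{E}_A$ and $\mathcal{N}_{B\to B'}$ act on disjoint systems and therefore commute as tensor-product maps, so $\mathcal{N}_{B\to B'}\circ \mathcal{E}_A = \mathcal{E}_A\circ \mathcal{N}_{B\to B'}$. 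Thus the right-hand side equals $F(\sigma_{AB'},\mathcal{E}_A(\sigma_{AB'}))$, which is in turn bounded above by $\sup_{\mathcal{E}_A'} F(\sigma_{AB'},\mathcal{E}_A'(\sigma_{AB'}))$. Since this bound holds for every $\mathcal{E}_A$, taking the supremum on the left establishes the desired inequality, and applying $-\log$ completes the proof.

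There is essentially no obstacle here; the only thing one needs is the commutativity of the two channels acting on disjoint tensor factors together with data-processing of the fidelity, both of which are standard. The proof runs in a few lines and closely mirrors the style of Proposition~\ref{prop:FoR-mono-local-ops}.
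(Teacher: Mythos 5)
Your proof is correct and is essentially identical to the paper's own argument: both fix an arbitrary entanglement breaking channel $\mathcal{E}_A$, apply monotonicity of the fidelity under $\mathcal{N}_{B\to B'}$, use the commutativity of the two channels on disjoint tensor factors, and then take the supremum. No gaps.
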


\begin{proof}
Intuitively, this follows because it is easier to recover from a measurement
when the state is noisier to begin with. Indeed, let $\mathcal{E}_{A}$ be an
entanglement breaking channel. Then%
\begin{align}
F( \rho_{AB},\mathcal{E}_{A}( \rho_{AB}) )  &  \leq F\left(  \sigma
_{AB^{\prime}},\mathcal{E}_{A}\left(  \sigma_{AB^{\prime}}\right)  \right) \\
&  \leq\sup_{\mathcal{E}_{A}}F\left(  \sigma_{AB^{\prime}},\mathcal{E}%
_{A}\left(  \sigma_{AB^{\prime}}\right)  \right)  ,
\end{align}
where the first inequality is due to the fact that $\mathcal{E}_{A}$ commutes
with $\mathcal{N}_{B\rightarrow B^{\prime}}$ and monotonicity of the fidelity
with respect to quantum channels. Since the inequality holds for all
entanglement breaking channels, we can conclude that%
\begin{equation}
\sup_{\mathcal{E}_{A}}F\left(  \rho_{AB},\mathcal{E}_{A}\left(  \rho
_{AB}\right)  \right)  \leq\sup_{\mathcal{E}_{A}}F\left(  \sigma_{AB^{\prime}%
},\mathcal{E}_{A}\left(  \sigma_{AB^{\prime}}\right)  \right)  .
\end{equation}
Taking a negative logarithm gives the statement of the proposition.
\end{proof}

With a proof nearly identical to that for Proposition~\ref{prop:geo-SE-cont},
we find that $D_{F}( \overline{A};B) _{\rho}$ is continuous:

\begin{proposition}
[Continuity]$D_{F}( \overline{A};B) $ is a continuous function of its input.
That is, given two bipartite states $\rho_{AB}$ and $\sigma_{AB}$ such that
$F\left(  \rho_{AB},\sigma_{AB}\right)  \geq1-\varepsilon$ where
$\varepsilon\in\left[  0,1\right]  $, then the following inequalities hold%
\begin{align}
\left\vert \sup_{\mathcal{E}_{A}}F\left(  \rho_{AB},\mathcal{E}_{A}\left(
\rho_{AB}\right)  \right)  -\sup_{\mathcal{E}_{A}}F\left(  \sigma
_{AB},\mathcal{E}_{A}\left(  \sigma_{AB}\right)  \right)  \right\vert 
& \leq8\sqrt{\varepsilon}, \\
\left\vert D_{F}( \overline{A};B) _{\rho}-D_{F}\left(  \overline{A};B\right)
_{\sigma}\right\vert & \leq\left\vert A\right\vert 8\sqrt{\varepsilon}.
\end{align}

\end{proposition}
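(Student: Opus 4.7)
The plan is to mimic the proof of Proposition~\ref{prop:FoR-continuity} almost verbatim, since the only structural ingredients used there are (i) monotonicity of the fidelity (equivalently, the purified distance) under a fixed family of CPTP maps and (ii) a dimension lower bound on the optimized fidelity. Both ingredients are available here: entanglement-breaking channels form a subset of CPTP maps, and Proposition~\ref{prop:disc-dim-bound} gives the required bound $\sup_{\mathcal{E}_A} F(\rho_{AB},\mathcal{E}_A(\rho_{AB})) \geq 1/|A|$.

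Concretely, I would introduce the purified distance $P(\rho,\sigma) \equiv \sqrt{1-F(\rho,\sigma)}$ and translate the hypothesis $F(\rho_{AB},\sigma_{AB}) \geq 1 - \varepsilon$ into $P(\rho_{AB},\sigma_{AB}) \leq \sqrt{\varepsilon}$. For an arbitrary entanglement-breaking channel $\mathcal{E}_A$, monotonicity of the purified distance yields $P(\mathcal{E}_A(\rho_{AB}),\mathcal{E}_A(\sigma_{AB})) \leq \sqrt{\varepsilon}$. Two applications of the triangle inequality then give
\begin{equation}
\inf_{\mathcal{E}_A} P(\rho_{AB},\mathcal{E}_A(\rho_{AB}))
\leq 2\sqrt{\varepsilon} + \inf_{\mathcal{E}_A} P(\sigma_{AB},\mathcal{E}_A(\sigma_{AB})),
\end{equation}
which, after squaring and using $\varepsilon,F\in[0,1]$ to absorb the cross term, produces the one-sided bound
\begin{equation}
\sup_{\mathcal{E}_A} F(\sigma_{AB},\mathcal{E}_A(\sigma_{AB})) \leq 8\sqrt{\varepsilon} + \sup_{\mathcal{E}_A} F(\rho_{AB},\mathcal{E}_A(\rho_{AB})).
\end{equation}
The reverse bound follows by swapping the roles of $\rho$ and $\sigma$, completing the first inequality of the proposition.

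For the surprisal statement, I would divide the above by $\sup_{\mathcal{E}_A} F(\rho_{AB},\mathcal{E}_A(\rho_{AB}))$, take a logarithm, and use $\log(1+y) \leq y$ together with Proposition~\ref{prop:disc-dim-bound} to conclude $|D_F(\overline{A};B)_\rho - D_F(\overline{A};B)_\sigma| \leq |A|\,8\sqrt{\varepsilon}$. The only step requiring a moment's care is verifying that the supremum over entanglement-breaking channels commutes with the triangle-inequality argument in the same way that the supremum over arbitrary recovery maps did in Proposition~\ref{prop:FoR-continuity}; this works because the class of entanglement-breaking channels is fixed (independent of the state), so the same channel can be compared on both $\rho_{AB}$ and $\sigma_{AB}$. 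I do not anticipate any real obstacle — the argument is essentially a restatement of the earlier continuity proof with \textquotedblleft recovery map acting on $C$\textquotedblright\ replaced by \textquotedblleft entanglement-breaking channel acting on $A$\textquotedblright, and the dimension bound replaced by the one from Proposition~\ref{prop:disc-dim-bound}.
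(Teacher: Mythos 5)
Your proposal is correct and is essentially the argument the paper intends: the paper gives no explicit proof, stating only that it is nearly identical to the earlier continuity proofs, and your adaptation of the purified-distance/triangle-inequality argument from Proposition~\ref{prop:FoR-continuity} (with the class of entanglement-breaking channels on $A$ in place of recovery maps on $C$, and the $1/\left\vert A\right\vert$ bound of Proposition~\ref{prop:disc-dim-bound} in place of the $1/\left\vert A\right\vert^{2}$ bound) is exactly that proof, correctly yielding the factor $\left\vert A\right\vert$ rather than $\left\vert A\right\vert^{2}$. Your observation that no extension-matching step is needed here (unlike for the geometric squashed entanglement) is also accurate.
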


\section{Multipartite fidelity of recovery}

We state here that it is certainly possible to generalize the fidelity of
recovery to the multipartite setting. Indeed, by following the same line of
reasoning mentioned in the introduction (starting from the R\'{e}nyi
conditional multipartite information \cite[Section~10.1]{BSW14} and
understanding the $\alpha=1/2$ quantity in terms of several Petz recovery
maps), we can define the multipartite fidelity of recovery for a multipartite
state $\rho_{A_{1}\cdots A_{l}C}$ as follows:%
\begin{equation}
F\left(  A_{1};A_{2};\cdots;A_{l}|C\right)  _{\rho}=
\sup_{\substack{\mathcal{R}_{C\rightarrow A_{1}C}^{1},\\\ldots,\\\mathcal{R}%
_{C\rightarrow A_{l-1}C}^{l-1}}}F\left(  \rho_{A_{1}\cdots A_{l}C}%
,\mathcal{R}_{C\rightarrow A_{1}C}^{1}\circ\cdots\circ\mathcal{R}%
_{C\rightarrow A_{l-1}C}^{l-1}\left(  \rho_{A_{l}C}\right)  \right)  .
\end{equation}
The interpretation of this quantity is as written:\ systems $A_{1}$ through
$A_{l-1}$ of the state $\rho_{A_{1}\cdots A_{l}C}$ are lost, and one attempts
to recover them one at a time by performing a sequence of recovery maps on
system $C$ alone. We can then define a quantity analogous to the multipartite
conditional mutual information as follows:%
\begin{equation}
I_{F}( A_{1};A_{2};\cdots;A_{l}|C) _{\rho}\equiv-\log F( A_{1};A_{2}%
;\cdots;A_{l}|C) _{\rho},
\end{equation}
and one can easily show along the lines given for the bipartite case that the
resulting multipartite quantity is non-negative, monotone with respect to
local operations, and obeys a dimension bound.

We leave it as an open question to develop fully a multipartite geometric
squashed entanglement, defined by replacing the conditional multipartite
mutual information in the usual definition \cite{YHHHOS09}\ with $I_{F}$ given
above. One could also explore multipartite versions of the surprisal of
measurement recoverability.

\section{Conclusion}

We have defined the fidelity of recovery $F( A;B|C) _{\rho}$\ of a tripartite
state $\rho_{ABC}$ to quantify how well one can recover the full state on all
three systems if system $A$ is lost and the recovery map can act only on
system $C$. By taking the negative logarithm of the fidelity of recovery, we
obtain an entropic quantity $I_{F}( A;B|C) _{\rho}$ which obeys nearly all of
the entropic relations that the conditional mutual information does. The
quantities $F( A;B|C) _{\rho}$ and $I_{F}( A;B|C) _{\rho}$ are rooted in our
earlier work on seeking out R\'{e}nyi generalizations of the conditional
mutual information \cite{BSW14}. Whereas we have not been able to prove that
all of the aforementioned properties hold for the R\'{e}nyi conditional mutual
informations from \cite{BSW14}, it is pleasing to us that it is relatively
straightforward to show that these properties hold for $I_{F}( A;B|C) _{\rho}$.

Another contribution was to define the geometric squashed entanglement
$E_{F}^{\operatorname{sq}}( A;B) _{\rho}$, inspired by the original squashed
entanglement measure from \cite{CW04}. We proved that $E_{F}%
^{\operatorname{sq}}( A;B) _{\rho}$ is a 1-LOCC monotone, is invariant with
respect to local isometries, is faithful, reduces to the well known geometric
measure of entanglement \cite{WG03,CAH13} when the bipartite state is pure,
normalized on maximally entangled states, subadditive, and continuous. The
geometric squashed entanglement could find applications in \textquotedblleft
one-shot\textquotedblright\ scenarios of quantum information theory, since it
is fundamentally a one-shot measure based on the fidelity. (The fidelity is
said to be a ``one-shot'' quantity because it has an operational meaning in
terms of a single experiment: it is the probability with which a purification
of one state could pass a test for being a purification of the other state.)

Our final contribution was to define the surprisal of measurement
recoverability $D_{F}( \overline{A};B) _{\rho}$, a quantum correlation measure
having physical roots in the same vein as those used to justify the definition
of the quantum discord. We showed that it is non-negative, invariant with
respect to local isometries, faithful on classical-quantum states, obeys a
dimension bound, and is continuous. Furthermore, we used this quantity to
characterize quantum states with discord nearly equal to zero, finding that
such states are approximate fixed points of an entanglement breaking channel.

From here, there are several interesting lines of inquiry to pursue. It is
clear that generally $I_{F}(A;B|C) \neq I_{F}(B;A|C)$: can we quantify how
large the gap can be between them in general? Can we prove a stronger chain
rule for the fidelity of recovery?\ If something along these lines holds, it
might be helpful in establishing that the geometric squashed entanglement is
monogamous or additive. (At the very least, we can say that geometric squashed
entanglement is additive with respect to pure states, given that it reduces to
the geometric measure of entanglement which is clearly additive by inspecting
(\ref{eq:geo-reduce-2}).) Is it possible to improve our continuity bounds to
attain \textquotedblleft asymptotic continuity\textquotedblright? Can one show
that geometric squashed entanglement is nonlockable \cite{C06}? Preliminary
evidence from considering the strongest known locking schemes from
\cite{FHS11} suggests that it might not be lockable. We are also interested in
a multipartite geometric squashed entanglement, but we face similar challenges
as those discussed in \cite{LW14}\ for establishing its faithfulness.

\bigskip

\textbf{Acknowledgements.} We are grateful to Gerardo Adesso, Mario Berta,
Todd Brun, Marco Piani, and Masahiro Takeoka for helpful discussions about
this work. KS acknowledges support from NSF\ Grant No.~CCF-1350397, the DARPA
Quiness Program through US Army Research Office award W31P4Q-12-1-0019, and
the Graduate School of Louisiana State University for the 2014-2015
Dissertation Year Fellowship. MMW acknowledges support from the APS-IUSSTF
Professorship Award in Physics, startup funds from the Department of Physics
and Astronomy at LSU, support from the NSF\ under Award No.~CCF-1350397, and
support from the DARPA Quiness Program through US Army Research Office award W31P4Q-12-1-0019.

\appendix

\section{Appendix}

\label{app}

Given a state $\rho$, a positive semidefinite operator $\sigma$, and
$\alpha\in\lbrack0,1)\cup(1,\infty)$, we define the R\'{e}nyi relative entropy
as%
\begin{equation}
D_{\alpha}(  \rho\Vert\sigma)  \equiv\frac{1}{\alpha-1}%
\log\text{Tr}\left\{  \rho^{\alpha}\sigma^{1-\alpha}\right\}  ,
\end{equation}
whenever the support of $\rho$ is contained in the support of $\sigma$, and it
is equal to $+\infty$ otherwise. The conditional R\'{e}nyi entropy of a
bipartite state $\rho_{AB}$ is defined as%
\begin{equation}
H_{\alpha}(  A|B)  _{\rho}\equiv-D_{\alpha}(  \rho_{AB}\Vert
I_{A}\otimes\rho_{B})  .
\end{equation}
(See, e.g., \cite{TCR09} for details of these definitions.)\ This leads us to
the following lemma:

\begin{lemma}
\label{lem:classical-non-neg}Let $\rho_{XB}$ be a classical-quantum state,
i.e., such that%
\begin{equation}
\rho_{XB}\equiv\sum_{x}p( x) \left\vert x\right\rangle \left\langle
x\right\vert _{X}\otimes\rho_{B}^{x},
\end{equation}
where $p( x) $ is a probability distribution and $\{\rho_{B}^{x}\}$ is a set
of quantum states. For $\alpha\in\lbrack0,1)\cup(1,2]$,%
\begin{equation}
H_{\alpha}\left(  X|B\right)  \geq0.
\end{equation}

\end{lemma}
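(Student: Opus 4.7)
The plan is to reduce the claim to the operator inequality $\rho_{XB} \leq I_X \otimes \rho_B$ and then invoke monotonicity of the R\'enyi divergence in its second argument, concluding
\begin{equation}
D_\alpha(\rho_{XB} \Vert I_X \otimes \rho_B) \leq D_\alpha(\rho_{XB} \Vert \rho_{XB}) = 0,
\end{equation}
which is equivalent to $H_\alpha(X\vert B)_\rho \geq 0$.

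To establish the operator inequality, I would use that both sides are block-diagonal in the classical basis $\{\left\vert x\right\rangle_X\}$: this is the structural feature of classical-quantum states that we exploit. Block-by-block, the inequality reduces to $p(x)\,\rho_B^x \leq \rho_B$ for every $x$, which is immediate since $\rho_B - p(x)\rho_B^x = \sum_{x^{\prime} \neq x} p(x^{\prime})\rho_B^{x^{\prime}} \geq 0$ as a sum of positive semidefinite operators.

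For the monotonicity step, I would verify that, for $\alpha \in [0,1) \cup (1,2]$ and $\rho$ fixed, the map $\sigma \mapsto D_\alpha(\rho \Vert \sigma)$ is non-increasing in the L\"owner order on positive operators whose support contains that of $\rho$. This is read off directly from the definition by splitting into two cases: for $\alpha \in (1,2]$ the power $\sigma \mapsto \sigma^{1-\alpha}$ is operator anti-monotone (exponent in $[-1,0)$) and the prefactor $1/(\alpha-1)$ is positive, while for $\alpha \in [0,1)$ the power is operator monotone (exponent in $(0,1]$) but the prefactor $1/(\alpha-1)$ is negative, flipping the sign back; in either case $\sigma \leq \sigma^{\prime}$ entails $D_\alpha(\rho \Vert \sigma) \geq D_\alpha(\rho \Vert \sigma^{\prime})$. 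Applying this with $\sigma = \rho_{XB}$ and $\sigma^{\prime} = I_X \otimes \rho_B$ yields the desired bound. I do not foresee a substantive obstacle here, as the argument is essentially a two-line invocation of this monotonicity once the operator inequality is noted; the only standard input needed is operator (anti-)monotonicity of the fractional power $x \mapsto x^{1-\alpha}$ in the relevant range, which follows from L\"owner's theorem.
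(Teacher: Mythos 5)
Your proposal is correct, but it takes a genuinely different route from the paper. The paper's proof adjoins a classical copy $\hat{X}$ of the register $X$, computes directly that $H_{\alpha}(X|\hat{X}B)=0$, and then invokes the data-processing inequality for the R\'enyi relative entropy (citing \cite[Lemma~5]{TCR09}) to conclude $H_{\alpha}(X|B)\geq H_{\alpha}(X|\hat{X}B)=0$. You instead prove the operator dominance $\rho_{XB}\leq I_{X}\otimes\rho_{B}$ (which is indeed immediate from the block-diagonal structure, since $p(x)\rho_{B}^{x}\leq\rho_{B}$ for each $x$) and then use anti-monotonicity of $\sigma\mapsto D_{\alpha}(\rho\Vert\sigma)$ in the L\"owner order together with $D_{\alpha}(\rho\Vert\rho)=0$. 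Your monotonicity case analysis is sound: for $\alpha\in(1,2]$ the exponent $1-\alpha\in[-1,0)$ gives operator anti-monotonicity of $\sigma\mapsto\sigma^{1-\alpha}$, and for $\alpha\in[0,1)$ the exponent lies in $(0,1]$ so L\"owner--Heinz applies and the negative prefactor $1/(\alpha-1)$ restores the direction of the inequality. Your route is arguably more elementary and self-contained --- it avoids the (nontrivial) data-processing theorem entirely and makes transparent why the statement is proved only up to $\alpha=2$, namely that $x\mapsto x^{t}$ ceases to be operator anti-monotone for $t<-1$; the paper's route is shorter given that data processing is already cited elsewhere, and the "copying classical information" computation is reusable. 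One small point worth making explicit in your write-up: for $\alpha\in(1,2]$ the power $\sigma^{1-\alpha}$ is a generalized (support-restricted) inverse power, so the operator anti-monotonicity comparison should be carried out after a regularization $\sigma\mapsto\sigma+\epsilon\Pi$ and a limit $\epsilon\searrow0$; this is harmless here because the smaller operator in your comparison is $\rho_{XB}$ itself, whose support coincides with that of the first argument, so no divergent contribution survives the trace.
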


\begin{proof}
This follows because it is possible to copy classical information, and
conditional entropy increases with respect to the loss of a classical copy.
Consider the following extension of $\rho_{XB}$:%
\begin{equation}
\rho_{X\hat{X}B}\equiv\sum_{x}p( x) \left\vert x\right\rangle \left\langle
x\right\vert _{X}\otimes\left\vert x\right\rangle \left\langle x\right\vert
_{\hat{X}}\otimes\rho_{B}^{x}.
\end{equation}
Then we show that $H_{\alpha}(X|\hat{X}B)=0$ for all $\alpha\in\lbrack
0,1)\cup\left(  1,\infty\right)  $. Indeed, consider that 
\begin{align}
&  H_{\alpha}(X|\hat{X}B)\nonumber\\
&  =\frac{1}{1-\alpha}\log\text{Tr}\left\{  \left(  \sum_{x}p(  x)
\left\vert x\right\rangle \left\langle x\right\vert _{X}\otimes\left\vert
x\right\rangle \left\langle x\right\vert _{\hat{X}}\otimes\rho_{B}^{x}\right)
^{\alpha}\left[  I_{X}\otimes\left(  \sum_{x^{\prime}}p\left(  x^{\prime
}\right)  \left\vert x^{\prime}\right\rangle \left\langle x^{\prime
}\right\vert _{\hat{X}}\otimes\rho_{B}^{x^{\prime}}\right)  ^{1-\alpha
}\right]  \right\} \\
&  =\frac{1}{1-\alpha}\log\text{Tr}\left\{  \sum_{x}p^{\alpha}\left(
x\right)  \left\vert x\right\rangle \left\langle x\right\vert _{X}%
\otimes\left\vert x\right\rangle \left\langle x\right\vert _{\hat{X}}%
\otimes\left(  \rho_{B}^{x}\right)  ^{\alpha}\sum_{x^{\prime}}p^{1-\alpha
}\left(  x^{\prime}\right)  I_{X}\otimes\left\vert x^{\prime}\right\rangle
\left\langle x^{\prime}\right\vert _{\hat{X}}\otimes\left(  \rho
_{B}^{x^{\prime}}\right)  ^{1-\alpha}\right\} \\
&  =\frac{1}{1-\alpha}\log\text{Tr}\left\{  \sum_{x}p(  x)
\left\vert x\right\rangle \left\langle x\right\vert _{X}\otimes\left\vert
x\right\rangle \left\langle x\right\vert _{\hat{X}}\otimes\rho_{B}^{x}\right\}
\\
&  =0.
\end{align}

Then for $\alpha\in\lbrack0,1)\cup(1,2]$, the desired inequality is a
consequence of quantum data processing \cite[Lemma~5]{TCR09}:%
\begin{equation}
H_{\alpha}(X|B)\geq H_{\alpha}(X|\hat{X}B)=0.
\end{equation}

\end{proof}

\bibliographystyle{alpha}
\bibliography{Ref}

\newcommand{\etalchar}[1]{$^{#1}$}
\begin{thebibliography}{MLDS{\etalchar{+}}13}

\bibitem[AF04]{AF04}
Robert Alicki and Mark Fannes.
\newblock Continuity of quantum conditional information.
\newblock {\em Journal of Physics A: Mathematical and General}, 37(5):L55,
  February 2004.
\newblock arXiv:quant-ph/0312081.

\bibitem[Bas12]{B12}
Pascal Basler.
\newblock Characterization of correlations in classical and quantum systems.
\newblock Master's thesis, ETH Zurich, October 2012.

\bibitem[BCT14]{BCT14}
Mario Berta, Matthias Christandl, and Dave Touchette.
\newblock Smooth entropy bounds on one-shot quantum state redistribution.
\newblock September 2014.
\newblock arXiv:1409.4338.

\bibitem[BCY11]{BCY11}
Fernando~G.S.L. Brandao, Matthias Christandl, and Jon Yard.
\newblock Faithful squashed entanglement.
\newblock {\em Communications in Mathematical Physics}, 306(3):805--830,
  September 2011.
\newblock arXiv:1010.1750.

\bibitem[Bei13]{B13}
Salman Beigi.
\newblock Sandwiched {R\'enyi} divergence satisfies data processing inequality.
\newblock {\em Journal of Mathematical Physics}, 54(12):122202, December 2013.
\newblock arXiv:1306.5920.

\bibitem[Ber14]{B14sem}
Mario Berta.
\newblock Conditional quantum mutual information.
\newblock Seminar at the Centre for Quantum Technologies, Singapore, May 2014.

\bibitem[BSW15]{BSW14}
Mario Berta, Kaushik~P. Seshadreesan, and Mark~M. Wilde.
\newblock R\'enyi generalizations of the conditional quantum mutual
  information.
\newblock {\em Journal of Mathematical Physics}, 56(2):022205, March 2015.
\newblock arXiv:1403.6102.

\bibitem[CAH14]{CAH13}
Lin Chen, Martin Aulbach, and Michal Hajdusek.
\newblock A comparison of old and new definitions of the geometric measure of
  entanglement.
\newblock {\em Physical Review A}, 89(4):042305, August 2014.
\newblock arXiv:1308.0806.

\bibitem[Chr06]{C06}
Matthias Christandl.
\newblock {\em The Structure of Bipartite Quantum States: Insights from Group
  Theory and Cryptography}.
\newblock PhD thesis, University of Cambridge, April 2006.
\newblock arXiv:quant-ph/0604183.

\bibitem[CKMR07]{CKMR07}
Matthias Christandl, Robert Koenig, Graeme Mitchison, and Renato Renner.
\newblock One-and-a-half quantum de {Finetti} theorems.
\newblock {\em Communications in Mathematical Physics}, 273(2):473--498, July
  2007.
\newblock arXiv:quant-ph/0602130.

\bibitem[CW04]{CW04}
Matthias Christandl and Andreas Winter.
\newblock ``{Squashed} entanglement'': An additive entanglement measure.
\newblock {\em Journal of Mathematical Physics}, 45(3):829--840, March 2004.
\newblock arXiv:quant-ph/0308088.

\bibitem[DHO14]{DHO14}
Nilanjana Datta, Min-Hsiu Hsieh, and Jonathan Oppenheim.
\newblock An upper bound on the second order asymptotic expansion for the
  quantum communication cost of state redistribution.
\newblock September 2014.
\newblock arXiv:1409.4352.

\bibitem[DY08]{DY08}
Igor Devetak and Jon Yard.
\newblock Exact cost of redistributing multipartite quantum states.
\newblock {\em Physical Review Letters}, 100(23):230501, June 2008.

\bibitem[FHS11]{FHS11}
Omar Fawzi, Patrick Hayden, and Pranab Sen.
\newblock From low-distortion norm embeddings to explicit uncertainty relations
  and efficient information locking.
\newblock {\em Proceedings of the 43rd annual {ACM} symposium on Theory of
  computing}, pages 773--782, 2011.
\newblock arXiv:1010.3007.

\bibitem[FNW15]{FNW14}
Motohisa Fukuda, Ion Nechita, and Michael~M. Wolf.
\newblock Quantum channels with polytopic images and image additivity.
\newblock {\em IEEE Transactions on Information Theory}, 61(4):1851--1859,
  April 2015.
\newblock arXiv:1408.2340.

\bibitem[FR14]{FR14}
Omar Fawzi and Renato Renner.
\newblock Quantum conditional mutual information and approximate {Markov}
  chains.
\newblock October 2014.
\newblock arXiv:1410.0664.

\bibitem[FvdG98]{FG98}
Christopher~A. Fuchs and Jeroen van~de Graaf.
\newblock Cryptographic distinguishability measures for quantum mechanical
  states.
\newblock {\em IEEE Transactions on Information Theory}, 45(4):1216--1227, May
  1998.
\newblock arXiv:quant-ph/9712042.

\bibitem[GW15]{GW13}
Manish~K. Gupta and Mark~M. Wilde.
\newblock Multiplicativity of completely bounded $p$-norms implies a strong
  converse for entanglement-assisted capacity.
\newblock {\em Communications in Mathematical Physics}, 334(2):867--887, March
  2015.
\newblock arXiv:1310.7028.

\bibitem[Hay06]{H06}
Masahito Hayashi.
\newblock {\em Quantum Information: An Introduction}.
\newblock Springer, 2006.

\bibitem[HHHO05]{HHHO05}
Karol Horodecki, {Micha\l{}} Horodecki, {Pawe\l{}} Horodecki, and Jonathan
  Oppenheim.
\newblock Secure key from bound entanglement.
\newblock {\em Physical Review Letters}, 94(16):160502, April 2005.
\newblock arXiv:quant-ph/0309110.

\bibitem[HHHO09]{HHHO09}
Karol Horodecki, Michal Horodecki, Pawel Horodecki, and Jonathan Oppenheim.
\newblock General paradigm for distilling classical key from quantum states.
\newblock {\em IEEE Transactions on Information Theory}, 55(4):1898--1929,
  April 2009.
\newblock arXiv:quant-ph/0506189.

\bibitem[HJPW04]{HJPW04}
Patrick Hayden, Richard Jozsa, Denes Petz, and Andreas Winter.
\newblock Structure of states which satisfy strong subadditivity of quantum
  entropy with equality.
\newblock {\em Communications in Mathematical Physics}, 246(2):359--374, April
  2004.
\newblock arXiv:quant-ph/0304007.

\bibitem[HSR03]{HSR03}
Michael Horodecki, Peter~W. Shor, and Mary~Beth Ruskai.
\newblock Entanglement breaking channels.
\newblock {\em Reviews in Mathematical Physics}, 15(6):629--641, August 2003.
\newblock arXiv:quant-ph/0302031.

\bibitem[HV01]{HV01}
Leah Henderson and Vlatko Vedral.
\newblock Classical, quantum and total correlations.
\newblock {\em Journal of Physics A: Mathematical and General}, 34(35):6899,
  September 2001.
\newblock arXiv:quant-ph/0105028.

\bibitem[Kim13a]{K13conj}
Isaac~H. Kim.
\newblock Application of conditional independence to gapped quantum many-body
  systems.
\newblock {http://www.physics.usyd.edu.au/quantum/Coogee2013}, January 2013.
\newblock Slide 43.

\bibitem[Kim13b]{K13thesis}
Isaac~H. Kim.
\newblock {\em Conditional independence in quantum many-body systems}.
\newblock PhD thesis, California Institute of Technology, May 2013.

\bibitem[KW04]{KW04}
Masato Koashi and Andreas Winter.
\newblock Monogamy of quantum entanglement and other correlations.
\newblock {\em Physical Review A}, 69(2):022309, February 2004.
\newblock arXiv:quant-ph/0310037.

\bibitem[LR73a]{PhysRevLett.30.434}
Elliott~H. Lieb and Mary~Beth Ruskai.
\newblock A fundamental property of quantum-mechanical entropy.
\newblock {\em Physical Review Letters}, 30(10):434--436, March 1973.

\bibitem[LR73b]{LR73}
Elliott~H. Lieb and Mary~Beth Ruskai.
\newblock Proof of the strong subadditivity of quantum-mechanical entropy.
\newblock {\em Journal of Mathematical Physics}, 14(12):1938--1941, December
  1973.

\bibitem[LW14]{LW14}
Ke~Li and Andreas Winter.
\newblock Squashed entanglement, k-extendibility, quantum {Markov} chains, and
  recovery maps.
\newblock 2014.
\newblock arXiv:1410.4184.

\bibitem[MBC{\etalchar{+}}12]{KBCPV12}
Kavan Modi, Aharon Brodutch, Hugo Cable, Tomasz Paterek, and Vlatko Vedral.
\newblock The classical-quantum boundary for correlations: Discord and related
  measures.
\newblock {\em Reviews of Modern Physics}, 84(4):1655--1707, November 2012.
\newblock arXiv:1112.6238.

\bibitem[MLDS{\etalchar{+}}13]{MDSFT13}
Martin {M\"uller}-Lennert, Fr\'ed\'eric Dupuis, Oleg Szehr, Serge Fehr, and
  Marco Tomamichel.
\newblock On quantum {R\'enyi} entropies: a new definition and some properties.
\newblock {\em Journal of Mathematical Physics}, 54(12):122203, December 2013.
\newblock arXiv:1306.3142.

\bibitem[OZ01]{zurek}
Harold Ollivier and Wojciech~H. Zurek.
\newblock Quantum discord: A measure of the quantumness of correlations.
\newblock {\em Physical Review Letters}, 88(1):017901, December 2001.
\newblock arXiv:quant-ph/0105072.

\bibitem[Pet86]{Petz1986}
D\'enes Petz.
\newblock Sufficient subalgebras and the relative entropy of states of a von
  {Neumann} algebra.
\newblock {\em Communications in Mathematical Physics}, 105(1):123--131, March
  1986.

\bibitem[Pet88]{Petz1988}
D\'enes Petz.
\newblock Sufficiency of channels over von {Neumann} algebras.
\newblock {\em Quarterly Journal of Mathematics}, 39(1):97--108, 1988.

\bibitem[Pet03]{Petz03}
D\'enes Petz.
\newblock Monotonicity of quantum relative entropy revisited.
\newblock {\em Reviews in Mathematical Physics}, 15(1):79, March 2003.
\newblock arXiv:quant-ph/0209053.

\bibitem[Pia12]{P12}
Marco Piani.
\newblock Problem with geometric discord.
\newblock {\em Physical Review A}, 86(3):034101, September 2012.
\newblock arXiv:1206.0231.

\bibitem[PNC14]{Piani2014}
Marco Piani, Varun Narasimhachar, and John Calsamiglia.
\newblock Quantumness of correlations, quantumness of ensembles and quantum
  data hiding.
\newblock {\em New Journal of Physics}, 16(11):113001, October 2014.
\newblock arXiv:1405.1640.

\bibitem[SBW14]{SBW14}
Kaushik~P. Seshadreesan, Mario Berta, and Mark~M. Wilde.
\newblock R\'enyi squashed entanglement, discord, and relative entropy
  differences.
\newblock October 2014.
\newblock arXiv:1410.1443.

\bibitem[TCR09]{TCR09}
Marco Tomamichel, Roger Colbeck, and Renato Renner.
\newblock A fully quantum asymptotic equipartition property.
\newblock {\em IEEE Transactions on Information Theory}, 55(12):5840--5847,
  December 2009.
\newblock arXiv:0811.1221.

\bibitem[TCR10]{TCR10}
Marco Tomamichel, Roger Colbeck, and Renato Renner.
\newblock Duality between smooth min- and max-entropies.
\newblock {\em IEEE Transactions on Information Theory}, 56(9):4674--4681,
  September 2010.
\newblock arXiv:0907.5238.

\bibitem[WG03]{WG03}
Tzu-Chieh Wei and Paul~M. Goldbart.
\newblock Geometric measure of entanglement and applications to bipartite and
  multipartite quantum states.
\newblock {\em Physical Review A}, 68(4):042307, October 2003.
\newblock arXiv:quant-ph/0307219.

\bibitem[Wil13]{W13}
Mark~M. Wilde.
\newblock {\em Quantum Information Theory}.
\newblock Cambridge University Press, 2013.
\newblock arXiv:1106.1445.

\bibitem[Wil15]{W15}
Mark~M. Wilde.
\newblock Recoverability in quantum information theory.
\newblock 2015.
\newblock arXiv:1505.04661.

\bibitem[WL12]{Winterconj}
Andreas Winter and Ke~Li.
\newblock A stronger subadditivity relation?
\newblock
  http://www.maths.bris.ac.uk/$\sim$csajw/stronger$\_$subadditivity.pdf, 2012.

\bibitem[WWY14]{WWY13}
Mark~M. Wilde, Andreas Winter, and Dong Yang.
\newblock Strong converse for the classical capacity of entanglement-breaking
  and {Hadamard} channels via a sandwiched {R\'enyi} relative entropy.
\newblock {\em Communications in Mathematical Physics}, 331(2):593--622,
  October 2014.
\newblock arXiv:1306.1586.

\bibitem[YD09]{YD09}
Jon Yard and Igor Devetak.
\newblock Optimal quantum source coding with quantum side information at the
  encoder and decoder.
\newblock {\em IEEE Transactions in Information Theory}, 55(11):5339--5351,
  November 2009.
\newblock arXiv:0706.2907.

\bibitem[YHH{\etalchar{+}}09]{YHHHOS09}
Dong Yang, Karol Horodecki, Michal Horodecki, Pawel Horodecki, Jonathan
  Oppenheim, and Wei Song.
\newblock Squashed entanglement for multipartite states and entanglement
  measures based on the mixed convex roof.
\newblock {\em IEEE Transactions on Information Theory}, 55(7):3375--3387, July
  2009.
\newblock arXiv:0704.2236.

\bibitem[Zur00]{Z00}
Wojciech~H. Zurek.
\newblock Einselection and decoherence from an information theory perspective.
\newblock {\em Annalen der Physik}, 9(11-12):855--864, November 2000.
\newblock arXiv:quant-ph/0011039.

\end{thebibliography}

\end{document}